  \newcommand{\fullversion}[1]{#1}
\newcommand{\submversion}[1]{}
\newtheorem{theorem}{Theorem}
\newtheorem{proposition}[theorem]{Proposition}
\newtheorem{definition}[theorem]{Definition}
\newtheorem{lemma}[theorem]{Lemma}
\newtheorem{claim}[theorem]{Claim}
\newtheorem{remark}[theorem]{Remark}
\newtheorem{corollary}[theorem]{Corollary}
\newcommand{\bfs}{\mathbf{s}}
\newcommand{\bfA}{\mathbf{A}}
\newcommand{\bfe}{\mathbf{e}}
\newcommand{\bfu}{\mathbf{u}}
\newcommand{\simulator}{\mathsf{Sim}}
\newcommand{\cktclass}{\mathcal{C}}
\newcommand{\lobf}{\mathsf{LO.Obf}}
\newcommand{\leval}{\mathsf{LO.Eval}}
\newcommand{\lockC}{\mathbf{C}}
\newcommand{\shO}{\mathsf{shO}}
\newcommand{\qfhe}{\mathsf{QFHE}}
\newcommand{\process}{\realevent}
\newcommand{\setup}{\mathsf{Setup}}
\newcommand{\enc}{\mathsf{Enc}}\newcommand{\dec}{\mathsf{Dec}}
\newcommand{\sk}{\mathsf{sk}}
\newcommand{\pk}{\mathsf{pk}}
\newcommand{\ct}{\mathsf{ct}}
\newcommand{\ft}{\mathsf{FT}}
\newcommand{\return}{\mathsf{Check}}
\newcommand{\tr}{\text{Tr}}
\newcommand{\cE}{\mathcal{E}}
\newcommand{\nonnegl}{\text{non-negl}}
\newcommand{\sfY}{\mathbf{Y}}
\newcommand{\sfA}{\mathbf{A}}
\newcommand{\cH}{\mathcal{H}}
\newcommand{\cC}{\mathcal{C}}
\newcommand{\cR}{\mathcal{R}}
\newcommand{\cF}{\mathcal{F}}
\newcommand{\Id}{\mathsf{Id}}
\newcommand{\bfE}{\mathbf{E}}
\newcommand{\fclass}{\mathcal{F}}
\newcommand{\secparam}{\lambda}
\newcommand{\run}{\mathsf{Run}}
\newcommand{\hyb}{\bullet\mathsf{Hyb}}
\newcommand{\hybrid}{\mathsf{Hybrid}}
\newcommand{\cO}{\mathcal{O}}
\newcommand{\cD}{\mathcal{D}}
\newcommand{\cA}{\mathcal{A}}
\newcommand{\cB}{\mathcal{B}}
\newcommand{\cM}{\mathcal{M}}
\newcommand{\negl}{\text{negl}}
\newcommand{\msg}{\mathsf{msg}}
\newcommand{\poly}{\mathrm{poly}}
\newcommand{\distr}{\mathcal{D}}
\newcommand{\bra}[1]{\langle #1|}
\newcommand{\ket}[1]{|#1\rangle}
\newcommand{\braket}[2]{\langle #1|#2\rangle}
\newcommand{\trD}[2]{\left\|#1-#2\right\|_{\text{tr}}}
\newcommand{\adversary}{\mathcal{A}}
\newcommand{\prob}{\mathsf{Pr}}
\newcommand{\lessor}{\mathsf{Lessor}}
\newcommand{\crs}{\mathsf{crs}}
\newcommand{\Zq}{\mathbb{Z}_q}
\newcommand{\sfX}{\mathbf{X}}
\newcommand{\sampler}{\mathsf{Sampler}}
\newcommand{\obf}{\mathsf{Obf}}
\newcommand{\eval}{\mathsf{Eval}}
\newcommand{\obfuscator}{\mathcal{O}}
\newcommand{\prover}{\mathcal{P}}
\newcommand{\newcktclass}{\mathcal{G}}
\newcommand{\search}{\mathcal{S}}
\newcommand{\rel}{\mathcal{R}}
\newcommand{\ssnizk}{\mathsf{qseNIZK}}
\newcommand{\crsgen}{\mathsf{CRSGen}}
\newcommand{\iho}{\mathsf{qIHO}}
\newcommand{\fksetup}{\mathsf{FkGen}}
\newcommand{\td}{\mathsf{td}}
\newcommand{\simr}{\mathsf{Sim}}
\newcommand{\st}{\mathsf{st}}
\newcommand{\reduction}{\mathcal{B}}
\newcommand{\po}{\mathsf{PO}}
\newcommand{\qiho}{\mathsf{qIHO}}
\newcommand{\cnc}{\mathsf{cnc}}
\newcommand{\qpke}{\mathsf{qPKE}}
\newcommand{\nizk}{\mathsf{NIZK}}
\newcommand{\lang}{\mathcal{L}}
\newcommand{\gen}{\mathsf{Gen}}
\newcommand{\verify}{\mathcal{V}}
\newcommand{\pp}{\mathsf{pp}}
\newcommand{\Reg}{\mathsf{R}}
\newcommand{\distrc}{\mathcal{D}_{\cktclass}}
\newcommand{\aux}{\mathsf{aux}}
\newcommand{\bfH}{\mathbf{H}}
\newcommand{\hill}{\mathsf{HILL}}
\newcommand{\unpred}{\mathsf{unpred}}
\newcommand{\pseud}{\mathsf{pseud}}
\newcommand{\smpo}{\mathsf{po}}
\newcommand{\hybprocess}{\mathsf{HybProcess}}
\newcommand{\sz}{s}
\newcommand{\realevent}{\mathsf{Process}}
\newcommand{\prab}[1]{{\color{blue} Prab: #1}}
\newcommand{\rolo}[1]{{\color{red} Rolo: #1}}
\title{Secure Software Leasing}
\author{Prabhanjan Ananth \vspace{0.3cm}\\ UCSB\\ \texttt{prabhanjan@cs.ucsb.edu}   \and Rolando L. La Placa \vspace{0.3cm}\\ MIT\\  \texttt{rlaplaca@mit.edu}}
\title{Secure Software Leasing}
\author{}
\institute{}
\date{}
\begin{document}

\maketitle
\begin{abstract}
Formulating cryptographic definitions to protect against software piracy is an important research direction that has not received much attention. Since natural definitions using classical cryptography are impossible to achieve (as classical programs can always be copied), this directs us towards using techniques from quantum computing. The seminal work of Aaronson [CCC'09] introduced the notion of quantum copy-protection precisely to address the problem of software anti-piracy. However, despite being one of the most important problems in quantum cryptography, there are no provably secure solutions of quantum copy-protection known for {\em any} class of functions. \\
\vspace{-0.01cm}
\ \ \ \ We formulate an alternative definition for tackling software piracy, called secure software leasing (SSL). While weaker than quantum copy-protection, SSL is still meaningful and has interesting applications in software anti-piracy. \\
\vspace{-0.01cm}
\ \ \ \ \ \ \ We present a construction of SSL for a subclass of evasive circuits (that includes natural implementations of point functions, conjunctions with wild cards, and affine testers) based on concrete cryptographic assumptions. Our construction is the first provably secure solution, based on concrete cryptographic assumptions, for software anti-piracy. To complement our positive result, we show, based on cryptographic assumptions, that there is a class of quantum unlearnable functions for which SSL does not exist. In particular, our
impossibility result also rules out quantum copy-protection [Aaronson CCC'09]
for an arbitrary class of quantum unlearnable functions; resolving an important open problem on the possibility of constructing copy-protection for arbitrary quantum unlearnable circuits.
\end{abstract}
\section{Introduction}

Almost all proprietary software requires a legal document, called software license, that governs the use against illegal distribution of software, also referred to as pirating. The main security requirement from such a license is that any malicious user no longer has access to the functionality of the software after the lease associated with the software license expires. While ad hoc solutions existed in the real world, for a long time,  no theoretical treatment of this problem was known.\\

\noindent This was until Aaronson, who in his seminal work~\cite{Aar09} introduced and formalized the notion of quantum software copy-protection, a quantum cryptographic primitive that uses quantum no-cloning techniques to prevent pirating of software by modeling software as boolean functions. Roughly speaking, quantum copy-protection says\footnote{More generally, Aaronson considers the setting where the adversary gets multiple copies computing $f$ and not just one.} that given a quantum state computing a function $f$, the adversary cannot produce two quantum states (possibly entangled) such that each of the states individually  computes $f$. This prevents a pirate from being able to create a new software from his own copy and re-distribute it; of course it can circulate its own copy to others but it will lose access to its own copy. 

\paragraph{Need for Alternate Notions.}  While quantum  copy-protection does provide a solution for software piracy, constructing quantum copy-protection has been notoriously difficult. Despite being introduced more than a decade ago, not much is known on the existence of quantum copy-protection. There are no known provably secure constructions of quantum copy-protection for {\em any} class of circuits. All the existing constructions of quantum copy-protection are either proven in an oracle model \cite{Aar09,ALZ20} or are heuristic\footnote{That is, there is no known reduction to concrete cryptographic assumptions.} candidates for very simple functions such as point functions~\cite{Aar09}. In a recent blog post, Aaronson~\cite{aaronson_2020} even mentioned constructing quantum copy-protection from cryptographic assumptions as one of the five big questions he wishes to solve. 
\par This not only prompted us to explore the possibility of copy-protection but also look for alternate notions to protect against software piracy. Specifically, we look for application scenarios where the full power of quantum copy-protection is not needed and it suffices to settle for weaker notions. Let us consider one such example. 

\paragraph{Example: Anti-Piracy Solutions for Microsoft Office.} Microsoft Office is one of the most popular software tools used worldwide. Since Microsoft makes a sizeable portion of their revenue from this tool~\cite{revenue}, it is natural to protect Microsoft Office from falling prey to software piracy. A desirable requirement is that pirated copies cannot be sold to other users such that these copies can run successfully on other Microsoft Windows systems. Importantly, it does not even matter if the pirated copies can be created as long as they cannot be executed on other Windows systems; this is because, only the pirated copies that run on Windows systems are the ones that bite into the revenue of Microsoft. Indeed, there are open source versions of Office publicly available but our aim is to prevent these open source versions from being sold off as authentic versions of Microsoft Office software. 
\par This suggests that instead of quantum copy-protection -- which prevents the adversary from creating {\em any} pirated copy of the copy-protected software -- we can consider a weaker variant that only prevents the adversary from being able to create {\em authenticated} pirated copies (for instance, that runs only on specific operating systems). To capture this, we present a new definition called secure software leasing. 



\paragraph{Our Work: Secure Software Leasing (SSL).} Roughly speaking, a secure leasing scheme allows for an authority (the lessor\footnote{The person who leases the software to another.}) to lease a classical circuit $C$ to a user (the lessee\footnote{The person to whom the software is being leased to.}) by providing a corresponding quantum state $\rho_C$. The user can execute $\rho_C$ to compute $C$ on any input it desires. Leases can expired, requiring $\rho_C$ to be returned at a later point in time, specified by the lease agreement. After it returns the state, we require the security property that the lessee can no longer compute $C$.    

In more detail, a secure software leasing scheme (SSL) for a family of circuits $\cktclass$ is a collection, $(\gen,\lessor,\run,\return)$, of quantum polynomial-time algorithms (QPT) satisfying the following conditions. $\gen(1^\secparam)$, on input a security parameter $\secparam$, outputs a secret key $\sk$ that will be used by a lessor to validate the states being returned after the expiration of the lease. For any circuit $C:\{0,1\}^n \rightarrow \{0,1\}^m$ in $\cktclass$, $\lessor(\sk,C)$ outputs a quantum state $\rho_C$, where $\rho_C$ allows $\run$ to evaluate $C$.  Specifically, for any $x\in\{0,1\}^n$, we want that $\run(\rho_C,x)=C(x)$; this algorithm is executed by the lessee. Finally, $\return(\sk,\rho_C)$ checks if $\rho_C$ is a valid leased state. Any state produced by the lessor is a valid state and will pass the verification check. 

A SSL scheme can have two different security guarantees depending on whether the leased state is supposed to be returned or not.
\begin{itemize}
\setlength\itemsep{1em}

\item \textit{Infinite-Term Lessor Security}: In this setting, there is no time duration associated with the leased state and hence, the user can keep this leased state forever\footnote{Although the lessor will technically be the owner of the leased state.}. Informally, we require the guarantee that the lessee, using the leased state, cannot produce two {\em authenticated} copies of the leased state. Formally speaking, any (malicious) QPT user $\cA$ holding a leased state $\cA(\rho_C)$ (produced using classical circuit $C$) cannot output a (possibly entangled) bipartite state $\sigma^*$ such that both $\sigma_1^*=\tr_2[\sigma^*]$ and $\sigma_2^*=\tr_1[\sigma^*]$ can be used to compute $C$ with $\run$.

    \item \textit{Finite-Term Lessor Security}: On the other hand, we could also consider a weaker setting where the leased state is associated with a fixed term. In this setting, the lessee is obligated to return back the leased state after the term expires. We require the property that after the lessee returns back the state, it can no longer produce another {\em authenticated} state having the same functionality as the leased state.  
    \par Formally speaking, we require that any (malicious) QPT user $\cA$ holding a leased state $\rho_C$ (produced using $C$) cannot output a (possibly entangled) bipartite states $\sigma^*$  such that $\sigma^*_1:=\tr_2[\sigma^*]$\footnote{This denotes tracing out the second register.} passes the lessor's verification ($\return(\sk,\sigma^*_1)=1$) and such that the the resulting state, after the first register has been verified by the lessor, on the second register, $\sigma_2^*$, can also be used to evaluate $C$ with the $\run$ algorithm, $\run(\sigma^*_2,x)=C(x)$. 
\end{itemize}
\noindent A SSL scheme satisfying infinite-term security would potentially be useful to tackle the problem of developing anti-piracy solutions for Microsoft Office. However, there are scenarios where finite-term security suffices. We mention two examples below. 

\paragraph{Trial Versions.} Before releasing the full version of a program $C$, a software vendor might want to allow a selected group of people\footnote{For instance, they could be engineers assigned to test whether the beta version contains bugs.} to run a beta version of it, $C_\beta$, in order to test it and get user feedback. Naturally, the vendor would not want the beta versions to be pirated and distributed more widely. Again, they can lease the beta version $C_{\beta}$, expecting the users to return it back when the beta test is over. At this point, they would know if a user did not return their beta version and they can penalize such a user according to their lease agreement.


\paragraph{Subscription Models.} Another example where finite-term SSL would be useful is for companies that use a subscription model for their revenue.  For example, Microsoft has a large library of video games for their console, the Xbox, which anyone can have access to for a monthly subscription fee. A malicious user could subscribe in order to have access to the collection of games, then make copies of the games intending to keep them after cancelling the subscription. The same user will not be able to make another copy of a game that also runs on Xbox.

\subsection{Our Results}
We present a construction of SSL for a restricted class of unlearnable circuits; in particular, our construction is defined for a subclass of evasive circuits. This demonstrates the first provably secure construction for the problem of software anti-piracy in the standard model (i.e., without using any oracles). 
\par Our construction does not address the possibility of constructing SSL for an arbitrary class of unlearnable circuits. Indeed, given the long history of unclonable quantum cryptographic primitives (see Section ~\ref{ssec:RW}) along with the recent boom in quantum cryptographic techniques~\cite{Zha19,mahadev2018classical,mahadev2018verification,broadbent2019uncloneable,broadbent2019quantum,coladangelo2019non,broadbent2019zero,coladangelo2019smart,ben2016quantum,AGKZ20}, one might hope that existing techniques could lead us to achieve a general result. We show, rather surprisingly, assuming cryptographic assumptions, there exists a class of unlearnable circuits such that no SSL exists for this class. This also rules out the existence of quantum copy-protection for arbitrary unlearnable functions\footnote{Both the notions (quantum copy-protection and secure software leasing) are only meaningful for unlearnable functions: if a function is learnable, then one could learn the function from the quantum state and create another authenticated quantum state computing the same function.}; {\em thus resolving an important open problem in quantum cryptography}.
\par We explain our results in more detail. We first start with the negative result before moving on to the positive result. 


\subsubsection{Impossibility Result.}
\noindent To demonstrate our impossibility result, we identify a class of classical circuits $\cktclass$ that we call a {\em de-quantumizable} circuit class. This class has the nice property that given \textit{any} efficient quantum implementation of $C \in \cktclass$, we can efficiently `de-quantumize' it to obtain a classical circuit $C' \in \cktclass$ that has the same functionality as $C$. If $\cktclass$ is learnable then, from the definition of learnability, there could be a QPT algorithm that finds $C'$. To make the notion interesting and non-trivial, we add the additional requirement that this class of circuits is quantum unlearnable. A circuit class $\cktclass$ is quantum unlearnable if given black-box access to $C \in \cktclass$, any QPT algorithm cannot find a quantum implementation of $C$. 
\par We show the existence of a de-quantumizable circuit class from cryptographic assumptions.

\begin{proposition}[Informal]
Assuming the quantum hardness of learning with errors (QLWE), and asssuming the existence of quantum fully homomorphic encryption\footnote{We need additional properties from the quantum fully homomorphic encryption scheme but these properties are natural and satisfied by existing schemes~\cite{mahadev2018classical,Bra18}. Please refer to Section~\ref{sec:prelims:qfhe} for a precise description of these properties.} (QFHE), there exists a de-quantumizable class of circuits.  
\end{proposition}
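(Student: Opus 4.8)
The plan is to exhibit $\cktclass$ as a family of \emph{evasive} circuits, each hiding a $\qfhe$ secret key behind two ``homomorphically triggerable'' branches, and then to establish de-quantumizability by chaining homomorphic evaluations of any given quantum implementation, and quantum unlearnability by a reduction to semantic security of $\qfhe$. I would use a $\qfhe$ scheme (secure under QLWE) with the standard properties that classical messages have classical ciphertexts — so that ciphertexts can be hard-coded into a classical circuit — that quantum states can be encrypted, and that evaluated ciphertexts decrypt correctly; and an injective one-way function $f$ (also from QLWE). Concretely, a circuit $C=C[\pk,\sk,\ct_r,r,s]\in\cktclass$ hard-codes a key pair $(\pk,\sk)\leftarrow\qfhe.\Gen$, random $r,s\in\{0,1\}^{\secparam}$, the ciphertext $\ct_r\leftarrow\qfhe.\enc(\pk,r)$, and $y:=f(s)$, and behaves as follows: $C(00)=\pk$; $C(01)=\ct_r$; on input $10\|z$ it outputs the tuple $(s,\sk,r)$ if $z=r$ and $0$ otherwise; on input $11\|\ct'$ it computes $m:=\qfhe.\dec(\sk,\ct')$ and outputs $m$ if $m$ parses as a tuple whose first entry $s'$ satisfies $f(s')=y$, and $0$ otherwise; on every other input it outputs $0$. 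A uniform element of $\cktclass$ samples these components honestly; note that $C$ is nontrivial only on the two fixed inputs $00,01$ and on the ``hidden'' inputs $10\|r$ and $11\|\widehat\ct$ for ciphertexts $\widehat\ct$ decrypting to the authenticated tuple.

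For \emph{de-quantumization}, given any quantum implementation of $C$ (a state plus an evaluation unitary; I may take polynomially many copies, or invoke the ``almost as good as new'' lemma to reuse a correct implementation), the de-quantumizer will: (1) run it on $00$ and $01$ to read off $\pk$ and $\ct_r$; (2) encrypt the implementation itself under $\pk$ and homomorphically evaluate, on this together with $\ct_r$, the quantum circuit ``run the (encrypted) implementation on input $10\|(\cdot)$'', which by correctness of $\qfhe$ yields $\widehat\ct=\qfhe.\enc(\pk,\,C(10\|r))=\qfhe.\enc(\pk,(s,\sk,r))$; (3) run a fresh copy of the implementation on $11\|\widehat\ct$, which computes $C(11\|\widehat\ct)=\qfhe.\dec(\sk,\widehat\ct)=(s,\sk,r)$, since the first entry $s$ satisfies $f(s)=y$. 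Now holding $(\pk,\sk,\ct_r,r,s)$, the de-quantumizer outputs the canonical circuit $C[\pk,\sk,\ct_r,r,s]$, which equals $C$ and lies in $\cktclass$.

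For \emph{quantum unlearnability}, suppose a QPT $\cA$ with oracle access to a uniform $C\in\cktclass$ outputs a working implementation of $C$ with non-negligible probability. I would build $\reduction$ against semantic security of $\qfhe$: given $\pk^{*}$ and a challenge $\ct^{*}=\qfhe.\enc(\pk^{*},m_{b})$ with $m_{0},m_{1}$ uniform, $\reduction$ samples $s$ itself and answers $\cA$'s queries by $00\mapsto\pk^{*}$, $01\mapsto\ct^{*}$, and $0$ on every other input, thereby simulating $C[\pk^{*},\sk^{*},\ct^{*},m_{b},s]$ up to negligible error: the simulation errs only on $10\|m_{b}$, which $\cA$ queries with negligible probability because $m_{b}$ is hidden inside $\ct^{*}$, and on $11\|\ct'$ for $\ct'$ encrypting the special tuple $(s,\sk^{*},m_{b})$, which $\cA$ cannot produce since it does not know $\sk^{*}$. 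Hence $\cA$ still returns an implementation $\tau$ of the true $C$; evaluating $\tau$ on $10\|m_{0}$ and on $10\|m_{1}$, exactly the branch with $m_{i}=m_{b}=r$ returns a tuple containing a valid secret key for $\pk^{*}$, which reveals $b$ to $\reduction$.

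The step I expect to be the \emph{main obstacle} is the design of the ``decryption gadget'' (the $11$-branch). Homomorphic evaluation of the implementation only ever returns \emph{encrypted} outputs, so without such a gadget the de-quantumizer would stay trapped inside $\qfhe$-land; yet the gadget behaves like a (restricted) decryption oracle and must be provably useless to a merely black-box adversary. Opening a ciphertext only when the decrypted plaintext carries the secret, $f$-authenticated tag threads this needle — the de-quantumizer, precisely \emph{because it holds an actual implementation}, can synthesize a ciphertext of the authenticated tuple in step~(2), whereas a black-box adversary cannot, and in particular cannot get the gadget to decrypt the hard-coded $\ct_r$ for it (that would require $f(\,\cdot\,)=y$ to hit the independent random $r$). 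Secondary technical points — running a possibly adversarial, possibly single-use implementation inside $\qfhe.\Eval$ (handled by encrypting its state and a gentle-measurement argument) and correctness of $\qfhe$ on evaluated ciphertexts — are routine for existing QLWE-based schemes.
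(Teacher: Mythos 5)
Your blueprint is genuinely the same non-black-box idea as the paper's: publish (via a fixed input) an encryption of a hidden trigger point, homomorphically run the given quantum implementation on that ciphertext to obtain an encryption of a secret-revealing value, and then use a gadget to escape ``$\qfhe$-land'' and recover all hard-coded parameters, so that the canonical circuit can be re-assembled; quantum unlearnability is then argued by showing the secrets cannot be extracted from oracle access alone. The difference is in the escape gadget. The paper hands out, on the special input, a \emph{lockable obfuscation} $\cO \leftarrow \lobf(\newbfckt[\qfhe.\dec(\sk,\cdot),b,(\sk|r)])$ in the clear, whereas you fold the gadget into the oracle itself as an authenticated decryption branch ($11\|\ct'$ guarded by $f(s')=y$). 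Your de-quantumization argument is fine modulo the reuse issues you already flag. What the paper's choice buys is exactly what your unlearnability sketch is missing: because the gadget is a standalone obfuscated program whose lock $b$ is uniform, its simulation security lets the reduction replace $\cO$ by a simulator and thereby remove \emph{all} dependence on $\sk$ from the adversary's view before invoking QFHE security. With your in-circuit branch, the reduction against semantic security holds neither $\sk^{*}$ nor $s$'s verification secret in a form it can strip away, so it simply cannot answer (or argue away) the $11$-branch without first proving a tag-unforgeability statement about $s$ --- and that statement is itself of the same difficulty as the unlearnability claim you are trying to prove.

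The concrete gap is that your unlearnability argument is a classical lazy-simulation argument (``the simulation errs only on points queried with negligible probability''), while the definition you must meet allows the adversary \emph{superposition} access to $C$. A quantum query touches every input at once, so ``errs only on rarely-queried points'' has no direct meaning; one needs a query-by-query hybrid showing the adversary's state never acquires noticeable amplitude on the hidden inputs ($10\|r$ and the authenticated ciphertexts), which is precisely the adversary-method induction (the ``no good progress'' argument) that constitutes the bulk of the paper's proof, with lockable-obfuscation simulation and QFHE security alternating inside the induction. Relatedly, your stated reason the adversary cannot trigger the $11$-branch --- ``it does not know $\sk^{*}$'' --- is off: encrypting a tuple under $\pk^{*}$ needs no secret key; the real reason is that it does not know the uniformly random tag $s$, and showing $s$ stays hidden under superposition queries again requires the inductive argument (or an O2H-style lemma), not the one-shot reduction you describe. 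Finally, your distinguisher must cope with a learner that is only $\nu$-correct for a possibly small non-negligible $\nu$ and must detect whether the tuple returned at $10\|m_{i}$ is genuine; this is manageable but needs to be spelled out. In short: the construction is plausible and close to a variant the authors themselves considered, but the superposition-query unlearnability proof is the heart of the result, and your proposal does not yet contain it.
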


\noindent We show how non-black-box techniques introduced in seemingly different contexts -- proving impossibility of obfuscation~\cite{BGIRSVY01,BP16,alagic2016quantum} and constructing zero-knowledge protocols~\cite{BKP19,BS19,AL19} -- are relevant to proving the above proposition. We give an overview, followed by a formal construction, in Section~\ref{sec:imp}.

We then show that for certain de-quantumizable class of circuis, there does not exist a SSL scheme (with either finite or infinite-term security) for this class. Combining this with the above proposition, we have the following: 

\begin{theorem}[Informal]
Assuming the quantum hardness of learning with errors (QLWE), and asssuming the existence of quantum fully homomorphic encryption (QFHE), there exists a class of quantum unlearnable circuits $\cktclass$ such that there is no SSL for $\cktclass$. 
\end{theorem}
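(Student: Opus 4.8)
The plan is to derive the theorem by combining the Proposition with a scheme-independent impossibility statement: \emph{no secure software leasing scheme --- with either finite-term or infinite-term lessor security --- exists for a de-quantumizable class $\cktclass$}. Granting this lemma the theorem is immediate, since the Proposition supplies, under QLWE and $\qfhe$, a class that is de-quantumizable \emph{and} quantum-unlearnable, and the lemma then rules out SSL for it. Quantum unlearnability is what makes the statement non-vacuous: it certifies that the lessee could not reconstruct a working copy of $C$ from black-box access alone, so the attack below genuinely violates lessor security rather than merely exploiting learnability.

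Suppose then that $(\gen,\lessor,\run,\return)$ is an SSL scheme for a de-quantumizable $\cktclass$. The central observation is that for every $C\in\cktclass$ and every $\sk$ in the support of $\gen(1^\secparam)$, the leased state $\rho_C\leftarrow\lessor(\sk,C)$ together with $\run$ (whose code is known to the adversary) forms, by correctness, a legitimate efficient quantum implementation of $C$; hence the de-quantumization procedure, applied to $(\run,\rho_C)$, outputs with overwhelming probability a classical circuit $C'\in\cktclass$ with $C'\equiv C$. Moreover this extraction only invokes $\run(\rho_C,\cdot)$ on adaptively chosen inputs and post-processes the answers classically, so --- since $\run$ is reusable up to negligible disturbance, as its own correctness requires --- the leased state $\rho_C$ is left essentially intact. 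The key point is that the adversary never touches the honest lessor's $\sk$: once it possesses the \emph{code} $C'$, it samples a key $\sk'\leftarrow\gen(1^\secparam)$ of its own and runs $\lessor(\sk',C')$ to mint fresh authenticated states for $C'\equiv C$, which by correctness are usable with $\run$ to compute $C$.

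For infinite-term lessor security this already suffices, and is the simpler case: on input $\rho_C$ the adversary $\cA$ de-quantumizes (whether or not this disturbs $\rho_C$ is irrelevant here), recovers $C'$, samples $\sk'$, and outputs the product state $\lessor(\sk',C')\otimes\lessor(\sk',C')$; both marginals compute $C$ with $\run$, contradicting infinite-term security. For finite-term lessor security, $\cA$ must also return a register passing $\return(\sk,\cdot)$ under the \emph{challenger's} key, so it cannot discard $\rho_C$. Since the de-quantumization can be carried out while essentially preserving $\rho_C$, the adversary outputs $\rho_C\otimes\lessor(\sk',C')$: the first register is still (negligibly close to) a genuine leased state and so passes $\return(\sk,\cdot)$, while the second register, prepared independently, is untouched by the verification and computes $C$ with $\run$ --- contradicting finite-term security.

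The same extraction step also rules out quantum copy-protection for $\cktclass$, where even the minor issues disappear: the ``vendor'' algorithm producing $(\run,\rho_C)$ is public, so after recovering $C'$ the adversary simply re-runs it twice, with no need for a secret key. Within the impossibility argument, the step requiring the most care is the finite-term case: one must verify that the de-quantumization, which invokes $\run$ on the leased state polynomially many times, disturbs that state only negligibly, so that it can still be returned and accepted by $\return(\sk,\cdot)$. (Of course, the real engineering effort sits in the Proposition --- producing a class that is at once de-quantumizable, so the extraction works against \emph{every} quantum implementation, and quantum-unlearnable; this is where the self-referential, $\qfhe$-based structure of the circuits and the use of lockable obfuscation come in.)
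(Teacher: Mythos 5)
Your high-level decomposition (de-quantumizable $+$ quantum unlearnable class $\Rightarrow$ no SSL, with the class supplied by the QLWE/QFHE construction) is the paper's route, and your infinite-term argument is fine, since there the pirate may destroy $\rho_C$ and simply output two freshly leased copies of $C'$. The gap is in the finite-term case, and it is exactly the point the paper flags. You justify the preservation of $\rho_C$ by asserting that the de-quantumization procedure ``only invokes $\run(\rho_C,\cdot)$ on adaptively chosen inputs and post-processes the answers classically,'' so that reusability of $\run$ keeps the leased state intact. Nothing in the definition of a de-quantumizable class gives you this: the extractor $\cB$ is an arbitrary QPT algorithm receiving the \emph{code} of $\run$ and the state $\rho_C$, and it may act on $\rho_C$ in ways that are not $\run$-evaluations at all. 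Worse, the property you assume cannot hold for a class that is simultaneously quantum unlearnable: an extractor that only uses input/output access to the functionality would be a quantum learner of $\cktclass$, contradicting the unlearnability you rely on to make the theorem non-vacuous. Indeed, in the paper's construction (Proposition~\ref{prop:eff:dequant}) the extractor is essentially non-black-box: it homomorphically evaluates $U_C$ under the QFHE on an encrypted input and feeds the evaluated ciphertext into the lockable obfuscation --- operations that are not reusable $\run$ queries and a priori consume or disturb $\rho_C$.

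The paper closes this hole differently, and this is why its formal statement (Theorem~\ref{thm:formal}) carries an extra hypothesis your scheme-independent lemma drops: every circuit in the support of $\distrc$ must have a \emph{unique representation} in $\cktclass$. Under that condition the extractor's classical output equals the fixed string $C$ with probability negligibly close to $1$, so by the Almost-As-Good-As-New Lemma (Lemma~\ref{clm:ASGAN}) the pirate can run $\cB$ coherently, measure only its (essentially deterministic) classical output, and uncompute, recovering a state negligibly close to $\rho_C$ that still passes $\return(\sk,\cdot)$; only then does it mint $\lessor(\sk',C)$ under its own key. (For the concrete class, the paper also shows $\cB$ itself can hand back a state close to $\rho_C$.) So your proposal needs either this unique-representation/uncomputation argument or some other mechanism for returning an accepting first register; as written, the finite-term step fails.
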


\noindent{\em On the Assumption of QFHE}: There are lattice-based constructions of QFHE proposed by~\cite{mahadev2018classical,Bra18} although we currently don't know how to base them solely on the assumption of LWE secure against QPT adversaries (QLWE). Brakerski~\cite{Bra18} shows that the security of QFHE can be based on QLWE and a circular security assumption.


\paragraph{Impossibility of Quantum Copy-Protection.} Since copy-protection implies SSL, we have the following result. 

\begin{corollary}[Informal] Assuming the quantum hardness of learning with errors (QLWE), and asssuming the existence of quantum fully homomorphic encryption (QFHE), there exists a class of quantum unlearnable circuits $\cktclass$ that cannot be quantumly copy-protected.
\end{corollary}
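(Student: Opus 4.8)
The plan is to derive the corollary from the preceding (informal) theorem via the elementary observation that quantum copy-protection for a circuit class is at least as strong as secure software leasing for that class. Given a copy-protection scheme $(\mathsf{Protect},\mathsf{Eval})$ for $\cktclass$, I would build an SSL scheme by letting $\gen(1^{\secparam})$ output a dummy key, $\lessor(\sk,C)$ run $\mathsf{Protect}(C)$ to obtain the quantum state $\rho_C$, set $\run:=\mathsf{Eval}$, and let $\return$ accept unconditionally (or merely test functional correctness on polynomially many inputs). Honestly generated states then pass $\return$ by correctness of copy-protection, and the infinite-term lessor-security game for this scheme is \emph{verbatim} the copy-protection game: a bipartite state $\sigma^{*}$ whose reduced states $\sigma_1^{*}=\tr_2[\sigma^{*}]$ and $\sigma_2^{*}=\tr_1[\sigma^{*}]$ both evaluate $C$ under $\run$ is exactly a pair of pirated copies. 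Thus copy-protection for $\cktclass$ implies infinite-term SSL for $\cktclass$; since the theorem exhibits a quantum-unlearnable $\cktclass$ admitting \emph{no} SSL scheme (in particular no infinite-term-secure one), the contrapositive gives a quantum-unlearnable class that cannot be copy-protected, which is the corollary. The only care needed is bookkeeping --- that the ``no SSL'' conclusion of the theorem is stated to cover the infinite-term notion and uses the quantum-unlearnable de-quantumizable class furnished by the proposition.

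Because the real content sits in the theorem, I would spell that out as follows. First, via the proposition, instantiate a de-quantumizable class $\cktclass$ from QLWE and QFHE, following the unobfuscatable-function paradigm: each $C\in\cktclass$ is rigged so that from \emph{any} efficient quantum implementation of $C$ one can efficiently extract a classical $C'\in\cktclass$ with $C'\equiv C$, with lockable obfuscation supplying the mechanism that ``fires'' once the implementation has been unpacked. Now suppose for contradiction that an SSL scheme exists for $\cktclass$. A malicious lessee holding the leased state $\rho_C$ runs the de-quantumization procedure on it to recover a classical $C'\in\cktclass$ computing the same function as $C$; equipped with an explicit classical description of a member of $\cktclass$ with this functionality, it can produce a second \emph{authenticated} evaluator for $C$ (and, in the finite-term variant, first return the original state, the extraction being arranged to be gentle), violating lessor security. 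Quantum unlearnability of $\cktclass$ --- black-box access to $C$ does not let a QPT algorithm produce even \emph{some} quantum implementation of $C$ --- guarantees the impossibility statement is not vacuous.

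The step I expect to be the main obstacle is the de-quantumization used in the theorem: one must show that extraction succeeds against an arbitrary quantum state $\rho_C$ with no structural assumptions on how the lessee stores it, that the output genuinely lands back in $\cktclass$, and --- the subtle point --- that the lessee can convert it into an authenticated leased state without the lessor's secret key $\sk$. Reconciling this with the fact that $\lessor$ takes $\sk$ as input is what forces the de-quantumizable class to be engineered so carefully, and is where the interplay of unobfuscatable functions, QFHE and lockable obfuscation does the real work.
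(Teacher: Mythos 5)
Your reduction (copy-protection $\Rightarrow$ SSL, then contrapose the SSL impossibility) is correct in spirit and matches the one-line justification the paper gives for this corollary, but it is not how the paper actually derives the result in the technical sections: there the corollary is obtained \emph{directly} from efficient de-quantumizability, without passing through SSL at all. Concretely, for $C\leftarrow \distrc$ any efficient quantum implementation $(U_C,\rho_C)$ can be fed to the de-quantumizer $\cB$, which with probability negligibly close to $1$ outputs a functionally equivalent classical circuit $C'\in\cktclass$, and a classical circuit can be copied arbitrarily; together with quantum unlearnability of $(\cktclass,\distrc)$ this is exactly the corollary. The direct route buys two things your route must pay for. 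First, your compiler with a trivially accepting $\return$ only yields \emph{infinite-term} lessor security (finite-term security is vacuously false for that scheme), whereas the formal impossibility, Theorem~\ref{thm:formal}, is stated for \emph{finite-term} perfect lessor security; the attack there does in fact also defeat infinite-term security (both copies the pirate outputs run correctly under $\run$), but that extension is only asserted informally, so the ``bookkeeping'' caveat you flag is a genuine step you would need to write out. Second, the SSL attack needs the unique-representation condition and a gentle (uncomputable) extraction so the original state can be returned; the direct copy-protection argument needs neither, since no copy has to be given back. Finally, a small correction to your last paragraph: the ``subtle point'' of producing an authenticated copy without the lessor's $\sk$ is resolved trivially in the paper --- the pirate samples its own key $\sk'\leftarrow\gen(\crs)$ and runs $\lessor(\sk',C')$, and correctness of $\run$ is independent of which key was used --- so the careful engineering of the de-quantumizable class is there for extraction, unlearnability and uniqueness of representation, not for that step.
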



\subsubsection{Main Construction.} 

\noindent Our impossibility result does not rule out the possibility of constructing SSL schemes for specific circuit classes. For instance, it does not rule out the feasibility of SSL for \textit{evasive functions}; this is a class of functions with the property that given black-box access, an efficient algorithm cannot find an accepting input (an input on which the output of the function is 1).
\par We identify a subclass of evasive circuits for which we can construct SSL.
infinite
\paragraph{Searchable Compute-and-Compare Circuits.} We consider the following circuit class $\cktclass$: every circuit in $ \cktclass$, associated with a circuit $C$ and a lock $\alpha$, takes as input $x$ and outputs 1 iff $C(x)=\alpha$. This circuit class has been studied in the cryptography literature in the context of constructing program obfuscation~\cite{WZ17,GKW17}. We require this circuit class to additionally satisfy a {\em searchability} condition: there is an efficient (classical) algorithm, denoted by $\search$, such that given any $C \in \cktclass$, $\search(C)$ outputs $x$ such that $C(x)=1$. 
\par There are natural and interesting sub-classes of compute-and-compare circuits: 
\begin{itemize}
    \item Point circuits $C(\alpha, \cdot)$: the circuit $C(\alpha,\cdot)$ is a point circuit if it takes as input $x$ and outputs $C(\alpha,x)=1$ iff $x=\alpha$. If we define the class of point circuits suitably, we can find $\alpha$ directly from the description of $C(\alpha,\cdot)$; for instance, $\alpha$ is the value assigned to the input wires of $C$. 
    \item Conjunctions with wild cards $C(S,\alpha, \cdot)$: the circuit $C(S,\alpha,\cdot)$ is a conjunction with wild card if it takes as input $x$ and outputs $C(S,\alpha,x)=1$ iff $y = \alpha$, where $y$ is such that $y_i=x_i$ for all $i \in S$ and $y_i=0$ for all $i \notin S$.
    Again, if we define this class of circuits suitably, we can find $S$ and $\alpha$ directly from the description of $C(S,\alpha,\cdot)$. 
\end{itemize}

\noindent {\em On Searchability}: We note that the searchability requirement in our result statement is natural and is implicit in the description of the existing constructions of copy-protection by Aaronson~\cite{Aar09}. Another point to note is that this notion is associated with circuit classes rather than a function family. \\

\noindent We prove the following result.  Our construction is in the common reference string (CRS) model. In this model, we assume that both the lessor and the lessee will have access to the CRS produced by a trusted setup. We note that our impossibility result also holds in the CRS model. 

\begin{theorem}[SSL for Searchable Compute-and-Compare Circuits; Informal]
\label{thm:SSL:cnc}
Assuming the existence of: (a) quantum-secure subspace obfuscators~\cite{Zha19} and, (b) learning with errors secure against sub-exponential quantum algorithms, there exists an infinite-term secure SSL scheme in the common reference string model for searchable compute-and-compare circuits. 
\end{theorem}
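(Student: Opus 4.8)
The plan is to build the leased state from a uniformly random \emph{coset/subspace state} $\ket{A}$ --- whose unclonability, guaranteed by subspace obfuscators, does the security work --- glued to a \emph{lockable obfuscation} that makes ``evaluating the program on an accepting input'' cost a genuine copy of $\ket{A}$. For a searchable compute-and-compare circuit $C=\mathsf{CC}[C',\alpha]$ (which outputs $1$ iff $C'(x)=\alpha$), $\lessor$ would sample $A\le\Ffield_2^{\secparam}$ of dimension $\secparam/2$, prepare $\ket{A}$, compute $\widehat{P}_A\leftarrow\shO(A)$ and $\widehat{P}_{A^\perp}\leftarrow\shO(A^\perp)$, and output
\[
\rho_C=\bigl(\ket{A},\ \widehat{Q}\bigr),\qquad \widehat{Q}\leftarrow\lobf\bigl(C',\ \alpha,\ \msg=(\widehat{P}_A,\widehat{P}_{A^\perp})\bigr),
\]
all building-block parameters read from the CRS. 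The evaluator $\run(\rho_C,x)$ would run $\widehat{Q}(x)$: if it is $\bot$ output $0$; otherwise parse the output as a pair of subspace-membership programs and apply the projective coset-state verification $V$ to the quantum register, outputting $1$ iff $V$ accepts. Here $\gen(1^\secparam)$ just outputs the CRS parameters, and $\return$ plays no role in infinite-term security (take it to always accept) --- the content lives in $\lessor$ and $\run$.

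Correctness is immediate: on a non-accepting $x$, $\widehat{Q}(x)=\bot$ and $\run$ outputs $0=C(x)$; on an accepting $x$, $\widehat{Q}(x)=(\widehat{P}_A,\widehat{P}_{A^\perp})$, $V$ accepts $\ket{A}$, and $\run$ outputs $1=C(x)$. Because $V$ is the projector onto $\ket{A}\bra{A}$ (realized via the obfuscated programs), it leaves $\ket{A}$ intact, so $\run$ can be invoked polynomially many times.

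For infinite-term security I would argue by reduction to the unclonability of coset states that subspace obfuscators buy us (essentially the unforgeability of public-key quantum money from subspace-hiding obfuscation): no QPT algorithm, given $(\ket{A},\shO(A),\shO(A^\perp))$, can produce a two-register state whose marginals both pass $V$. Given a QPT $\cA$ that, on input $\rho_C$ for $C\leftarrow\distrc$, outputs a bipartite $\sigma^*$ with both marginals $\sigma_1^*,\sigma_2^*$ usable with $\run$, the reduction $\reduction$ takes $(\ket{A},\widehat{P}_A,\widehat{P}_{A^\perp})$ from the money challenger, samples $C=\mathsf{CC}[C',\alpha]\leftarrow\distrc$ itself, forms $\widehat{Q}\leftarrow\lobf(C',\alpha,(\widehat{P}_A,\widehat{P}_{A^\perp}))$, runs $\cA(\rho_C=(\ket{A},\widehat{Q}))$ to get $\sigma^*$, and submits the two quantum registers of $\sigma^*$. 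To finish I must show each register passes $V=V_A$ with non-negligible probability. Here I would use searchability: let $x^*\leftarrow\search(C)$ be an accepting input; since $\sigma_i^*$ computes $C$ with $\run$, we have $\run(\sigma_i^*,x^*)=1$, which forces the classical gadget inside $\sigma_i^*$ to release on $x^*$ a pair of membership programs against which the quantum register passes $V$. The remaining point --- that these programs are effectively the honest $(\widehat{P}_A,\widehat{P}_{A^\perp})$ --- is where lockable-obfuscation security comes in: because $\sigma_i^*$ must also output $0$ on the overwhelmingly many non-accepting inputs, its gadget cannot be a constant program leaking some adversary-chosen $(\widehat{P}_B,\widehat{P}_{B^\perp})$; it must behave as a lockable obfuscation of $\mathsf{CC}[C',\alpha]$, and by the security of $\lobf$ --- with $\alpha$ of high min-entropy given $C'$ under $\distrc$, so $\widehat{Q}$ hides both $\alpha$ and its message --- $\cA$ cannot cook up a fresh such gadget whose planted message is a subspace it knows. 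Hence on accepting inputs the released message must be $(\widehat{P}_A,\widehat{P}_{A^\perp})$, both registers pass $V_A$, and $\reduction$ wins --- a contradiction. Turning ``$\run$ returns the right bit with non-negligible probability'' into ``the post-measurement register actually passes $V_A$'' is a standard projective-implementation / gentle-measurement step, easy since $V_A$ is projective.

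The step I expect to be the real obstacle is precisely that last structural claim: ruling out the adversary that discards $\ket{A}$ and instead fabricates its own evaluation gadget --- morally a lockable obfuscation of $\mathsf{CC}[C',\alpha]$ hiding a subspace $B$ it controls --- together with a fresh $\ket{B}$. Making this airtight is where the hypotheses are used non-black-box: $\lobf$-security plus the unlearnability/evasiveness of $\distrc$ (which supplies the min-entropy of $\alpha$) prevents $\cA$ from learning anything usable about the lock, while \emph{sub-exponential} quantum hardness of LWE is what lets the corresponding hybrid close --- it pays for a complexity-leveraging step (guessing/enumerating the relevant behaviour of $\cA$, or embedding the coset challenge while simultaneously invoking $\lobf$-security) that a merely polynomial reduction could not afford. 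The remaining ingredients --- correctness, the coset no-cloning bound, and the gentle-measurement extraction --- are comparatively routine.
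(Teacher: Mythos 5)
There is a genuine gap, and in fact the construction you propose is insecure: the ``real obstacle'' you flag at the end cannot be overcome for your scheme, because there is a concrete attack. A pirate holding $\rho_C=(\ket{A},\widehat{Q})$ keeps the original pair intact as its first copy, samples its own subspace $B$ with state $\ket{B}$ and programs $(\shO(B),\shO(B^\perp))$, and builds a wrapper circuit $R$ that on input $x$ runs the (classical, copyable) string $\widehat{Q}$ on $x$ and, whenever $\widehat{Q}(x)\neq\bot$, outputs $(\shO(B),\shO(B^\perp))$ instead of the honest message (and outputs $\bot$ otherwise). The second copy $(\ket{B},R)$ then passes your $\run$ on every input: on non-accepting $x$ it outputs $0$, and on accepting $x$ it releases the $B$-programs against which $\ket{B}$ verifies. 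Crucially, this pirate never needs to learn the lock $\alpha$, the hidden message, or an accepting input, so neither lockable-obfuscation security, nor evasiveness/min-entropy of $\alpha$, nor sub-exponential LWE (your complexity-leveraging step is never instantiated as an actual reduction) rules it out: it only uses the public input/output behaviour of the gadget it already possesses. The underlying issue is that your $\run$ executes whatever classical gadget accompanies the quantum register, with no mechanism binding that gadget to the lessor's $A$ or to the honest program; unclonability of $\ket{A}$ is therefore never engaged for adversaries who swap in their own subspace.

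This is exactly the ``mauler'' case that the paper's construction is designed to kill, and it is why the paper's leased state is not a lockable obfuscation but a triple $(\widetilde{g},\widetilde{g_\perp},\widetilde{C})$ --- subspace obfuscations of $A,A^\perp$ and a q-input-hiding obfuscation of $C$ --- accompanied by a sub-exponentially secure simulation-extractable NIZK proof $\pi$ attesting that these were honestly generated and that the prover knows an accepting input (obtained via $\search$). $\run$ first verifies $\pi$ and only then projects onto $\ket{A}\bra{A}$ and evaluates $\widetilde{C}$. Security then splits into two cases: if the pirate reuses the honest classical part, both registers must pass the $\ket{A}$-projection, contradicting Zhandry's no-cloning for $(\ket{A},\shO(A),\shO(A^\perp))$; if it outputs a different classical part, simulation-extractability extracts from its proof an accepting input of $C$, contradicting the input-hiding property --- and sub-exponential security is needed because the reduction brute-forces over all $2^n$ inputs to test functional equivalence of the pirated copy. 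Your proposal is missing precisely this authentication/extraction layer (the seNIZK together with input-hiding obfuscation), and without some substitute for it the lockable-obfuscation-based design cannot be repaired into a proof of the theorem.
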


\par Notice that for applications in which the lessor is the creator of software, the lessor can dictate how the circuit class is defined and thus would choose an implementation of the circuit class that is searchable.\\ 

\noindent {\em On the Assumptions in Theorem~\ref{thm:SSL:cnc}.} A discussion about the primitives described in the above theorem statement is in order. A subspace obfuscator takes as input a subspace $A$ and outputs a circuit that tests membership of $A$ while hiding $A$ even against quantum adversaries. This was recently constructed by~\cite{Zha19} based on the quantum-security of indistinguishability obfuscation~\cite{GGHRSW13}. Moreover, recently, there has been exciting progress in constructing quantum-secure indistinguishability obfuscation schemes~\cite{GP20,WW20,BDGM20} from cryptographic assumptions that hold against quatum adversaries.

With regards to the assumption of learning with errors against sub-exponential quantum algorithms, we firstly note that classical sub-exponential security of learning with errors has been used in the construction of many cryptographic primitives and secondly, there are no known significant quantum speedups known to solving this problem.  
\par In the technical sections, we prove a more general theorem.

\begin{theorem}[SSL for General Evasive Circuits; Informal]
Let $\cktclass$ be a searchable class of circuits. Assuming the existence of: (a) quantum-secure input-hiding obfuscators~\cite{BBCKP14} for $\cktclass$, (b) quantum-secure subspace obfuscators~\cite{Zha19} and, (c) learning with errors secure against sub-exponential quantum algorithms, there exists an infinite-term secure SSL scheme in the setup model for $\cktclass$. 
\end{theorem}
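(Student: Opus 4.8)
The plan is to build the SSL scheme around \emph{subspace coset states}, using the searchable structure of $\cktclass$ to bind such a state to the leased circuit. Fix the security parameter $\secparam$. A lease of $C\in\cktclass$ will carry one coset state $\ket{A_{s,s'}}$, where $A\le\mathbb{F}_2^{\secparam}$ is a subspace of dimension $\secparam/2$ and $s,s'\in\mathbb{F}_2^{\secparam}$ are shifts, all derived pseudorandomly from an accepting input $x^{*}\leftarrow\search(C)$ (so $C(x^{*})=1$). The trusted setup publishes a CRS for the obfuscators together with a signature verification key, and $\gen(1^{\secparam})$ gives the lessor the matching signing key as $\sk$. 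Then $\lessor(\sk,C)$ computes $x^{*}$, derives $(A,s,s')$, prepares $\ket{A_{s,s'}}$, obfuscates the two coset-membership programs (for $A+s$ and $A^{\perp}+s'$) with the subspace obfuscator, obfuscates $C$ with the input-hiding obfuscator, signs the membership programs, and outputs $\rho_C$ consisting of $\ket{A_{s,s'}}$ plus all of this classical data. To evaluate, $\run(\rho_C,x)$ first computes $C(x)$ from the input-hiding obfuscation: if the result is $0$ it returns $0$, and if it is $1$ it uses the signature-checked membership programs to run a gentle (Ben-David--Sattath-type) projective verification of the coset state, returning $1$ iff the verification accepts and $\bot$ otherwise; $\return(\sk,\cdot)$ runs the same gentle verification. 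Correctness is then routine: the honest coset state always passes the gentle verification and is left essentially undisturbed, so $\run$ may be invoked polynomially many times, and functionality preservation of the two obfuscators yields $\run(\rho_C,x)=C(x)$ for all $x$ and $\return(\sk,\rho_C)=1$.

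For infinite-term security I would argue as follows. Suppose a QPT adversary $\cA$, given $\rho_C$ for $C$ drawn from the (quantum-unlearnable) class, outputs a bipartite state $\sigma^{*}$ both of whose marginals are $\run$-usable for $C$. The first step is to observe that a $\run$-usable copy must in particular compute $C$ correctly on the \emph{specific} accepting input $x^{*}=\search(C)$ (which the reduction can name because it holds $C$), and that the only way $\run(\cdot,x^{*})$ outputs the correct value $1$ is to pass the gentle coset verification against the genuine, signature-authenticated membership programs---unforgeability forbids feeding $\run$ permissive fakes, and security of the subspace obfuscator forbids fabricating a valid coset state out of the public membership programs. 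Hence $\cA$ has effectively produced a bipartite state both of whose marginals pass the coset-state verification for $(A,s,s')$. The second step is a hybrid argument to derive a contradiction: using the input-hiding property of the obfuscated $C$ to show that $x^{*}$, and thus the pseudorandomly derived $(A,s,s')$, is computationally indistinguishable from a fresh uniform coset independent of the rest of $\cA$'s view; then using security of the obfuscators together with---crucially---the sub-exponential quantum hardness of LWE for the complexity leveraging needed in the guessing/puncturing hybrids (and for the lockable obfuscation that instantiates the input-hiding obfuscator in the compute-and-compare case) to pass to a game in which $\cA$'s entire view is generated from only a uniform coset state $\ket{A_{s,s'}}$ and subspace-obfuscated membership programs for its primal and dual cosets; and finally invoking the monogamy / no-cloning property of coset states: no QPT algorithm with one copy of a random coset state and obfuscated membership programs for its primal and dual cosets can output a bipartite state both of whose marginals pass coset verification. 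The finite-term guarantee follows by the same template, replacing ``second marginal is $\run$-usable'' by ``first marginal is accepted by $\return$ and then the second marginal is $\run$-usable''.

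I expect the main obstacle to be the last step: establishing the monogamy/unclonability statement for coset states when the membership tests are handed out as \emph{obfuscated programs} rather than idealized oracles, and sequencing the intermediate hybrids so that (a) the gentleness of the coset verification is preserved across each indistinguishability switch, (b) the step decoupling $(A,s,s')$ from $x^{*}$ goes through on nothing more than the input-hiding guarantee, and (c) the security loss incurred by any guessing step is absorbed by the sub-exponential LWE assumption without forcing the subspace dimension to grow. A secondary delicate point is the reduction from ``$\run$-usable copy'' to ``copy passing coset verification on $x^{*}$'': the scheme has to be arranged so that emitting the output $1$ on an accepting input is impossible except by passing the authenticated coset check, which is exactly what the signature layer (and, in the compute-and-compare instantiation, the lockable obfuscation) is there to enforce.
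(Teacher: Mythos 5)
There is a genuine gap, and it sits exactly where the heart of the paper's proof lies: the ``mauler'' case, where the pirate outputs a second copy whose \emph{classical} part differs from the original. Your mechanism for binding the classical data to the lease is a signature whose verification key is published by the trusted setup while the matching signing key is handed to the lessor by $\gen$. This is not implementable in the SSL-with-setup syntax: $\setup$ is run once by a trusted party and only outputs a public $\crs$, while $\gen(\crs)$ is run independently by every lessor, so no lessor can obtain the signing key for a verification key fixed in the $\crs$ (and if the signing key were placed in the $\crs$ it would be public, making unforgeability vacuous). If instead the verification key travels inside $\rho_C$, the pirate simply generates its own key pair, re-signs its own membership programs for a coset state it prepared itself, and attaches its own verification key --- $\run$, being a fixed public algorithm, cannot tell the difference, so nothing stops the second copy from being fully $\run$-usable. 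The paper closes precisely this hole differently: the classical part is authenticated by a \emph{simulation-extractable} NIZK (relative to the CRS) whose witness contains an accepting input $x$ with $C(x)=1$; any pirate that produces a new, accepting classical part on a different instance yields, via the extractor, an accepting input of $C$, contradicting the q-input-hiding property. Your proposal has no analogue of this extraction step, and the hybrid you sketch (decoupling $(A,s,s')$ from $x^{*}$) does not constrain a pirate who abandons your coset state altogether and self-authorizes fresh classical data.

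Two further points. First, deriving $(A,s,s')$ pseudorandomly from $x^{*}=\search(C)$ is both unnecessary and unsupported: input-hiding only says an adversary cannot \emph{find} an accepting input, not that $x^{*}$ is pseudorandom or has high min-entropy given $\widetilde{C}$, so the claimed indistinguishability from a uniform coset does not follow from the stated assumptions; the paper instead samples $A$ uniformly and independently of $C$ (it is the lessor's $\sk$), and the ZK property of the seNIZK is what removes the residual dependence before invoking Zhandry's no-cloning theorem for $\ket{A}$ given $\shO(A),\shO(A^{\perp})$ --- which is exactly the statement proved in [Zha19]. Your ``monogamy of coset states with obfuscated membership programs'' is a stronger, separate theorem not among the assumptions (a)--(c), and you flag it yourself as unproved. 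Second, the role of sub-exponential QLWE in the paper is specific: the reduction to seNIZK security must decide whether the pirate's second copy is functionally equivalent to $C$, which it does by brute force over all $2^{n}$ inputs, so the seNIZK (and hence LWE) must be secure against $2^{n}$-time quantum adversaries; your proposal invokes sub-exponentiality only for vague ``guessing/puncturing'' hybrids and never addresses how the reduction tests the multi-bit event ``the second copy computes $C$ everywhere,'' which is the step that actually forces the assumption.
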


\noindent An input-hiding obfuscator is a compiler that converts a circuit $C$ into another functionally equivalent circuit $\widetilde{C}$ such that given $\widetilde{C}$ it is computationally hard to find an accepting point. To achieve Theorem~\ref{thm:SSL:cnc}, we instantiate searchable input-hiding obfuscators for compute-and-compare circuits from quantum hardness of learning with errors. However, we can envision quantum-secure instantiations of input-hiding obfuscators for more general class of searchable evasive circuits; we leave this problem open. 
\par We admittedly use heavy cryptographic hammers to prove our result, but as will be clear in the overview given in the next section, each of these hammers will be necessary to solve the different technical challenges we face. 

\paragraph{Concurrent Work on qVBB.} Our impossibility result also rules out the existence of quantum VBB for classical circuits assuming quantum FHE and quantum learning of errors; this was stated as an open problem by Alagic and Fefferman~\cite{alagic2016quantum}. 
Concurrently,~\cite{ABDS20} also rule out quantum virtual black-box obfuscation under the assumption of quantum hardness of learning with errors; unlike our work they don't additionally assume the existence of quantum FHE.


\subsection{Overview of Construction of SSL}
\label{sec:overview:cons}
\noindent For this overview, we only focus on constructing a SSL sscheme satisfying finite-term lessor security. Our ideas can be easily adapted to the infinite-term lessor security. 
\par To construct a SSL scheme in the setup model $(\setup,\gen,\lessor,\run,\return)$ against arbitrary quantum poly-time (QPT)  pirates, we first focus on two weaker class of adversaries, namely, {\em duplicators} and {\em maulers}. Duplicators are adversaries who, given $\rho_C$ generated by the lessor for a circuit $C$ sampled from a distribution $\distrc$, produce $\rho_C^{\otimes 2}$; that is, all they do is replicate the state. Maulers, who given $\rho_C$, output $\rho_C \otimes \rho^*_C$, where $\rho_C^*$ is far from  $\rho_C$ in trace distance and $\rho_C$ is the copy returned by the mauler back to the lessor; that is the second copy it produces is a modified version of the original copy. 
\par While our construction is secure against arbitrary pirates, it will be helpful to first focus on these restricted type of adversaries. We propose two schemes: the first scheme is secure against QPT maulers and the second scheme against QPT duplicators. Once we discuss these schemes, we will then show how to combine the techniques from these two schemes to obtain a construction secure against arbitrary pirates. 

\paragraph{SSL against Maulers.} To protect SSL against a mauler, we attempt to construct a scheme using only classical cryptographic techniques. The reason why it could be possible to construct such a scheme is because maulers never produce a pirated copy $\rho_C^*$ that is the same as the original copy $\rho_C$. 
\par A natural attempt to construct a SSL scheme is to use virtual black-box obfuscation~\cite{BGIRSVY01} (VBB): this is a compiler that transforms  a circuit $C$ into another functionally equivalent circuit $\widetilde{C}$ such that $\widetilde{C}$ only leaks the input-output behavior of $C$ and nothing more. This is a powerful notion and implies almost all known cryptographic primitives. We generate the leased state $\rho_C$ to be the VBB obfuscation of $C$, namely $\widetilde{C}$. The hope is that a mauler will not  output another leased state $\rho_C^*$ that is different from $\widetilde{C}$. 
\par Unfortunately, this scheme is insecure. A mauler on input $\widetilde{C}$, obfuscates $\widetilde{C}$ once more to obtain $\widetilde{\widetilde{C}}$ and outputs this re-obfsuscated circuit. Moreover, note that the resulting re-obfuscated circuit still computes $C$. This suggests that program obfuscation is insufficient for our purpose. In hindsight, this should be unsurprising: VBB guarantees that given an obfuscated circuit, an efficient adversary should not learn anything about the implementation of the circuit, but this doesn't prevent the adversary from being able to re-produce modified copies of the obfuscated circuit. 
\par To rectify this issue, we devise the following strategy: 
\begin{itemize}
    \item Instead of VBB, we start with a different obfuscation scheme that has the following property: given an obfuscated circuit $\widetilde{C}$, where $C$ corresponds to an evasive function, it is computationally infeasible to determine an accepting input for $C$. 
    \ \\
    \item We then combine this with a special proof system that guarantees the property: suppose an adversary, upon receiving $\widetilde{C}$ and a proof, outputs a {\em different} but functionally equivalent obfuscated circuit $\widetilde{C}^*$ along with a new proof. Then we can extract an accepting input for $\widetilde{C}$ from the adversary's proof. But this would contradict the above bullet and hence, it follows that its computationally infeasible for the adversary to output a different circuit $\widetilde{C}^*$. 
\end{itemize}

\noindent To realize the above strategy, we need two separate cryptographic tools, that we define below. \\


\noindent {\em Input-Hiding Obfuscators~\cite{BBCKP14}}: We recall the notion of input-hiding obfuscators~\cite{BBCKP14}. An input-hiding obfuscator guarantees that given an obfuscated circuit $\widetilde{C}$, any efficient adversary cannot find an accepting input $x$, i.e., an input $x$ such that $\widetilde{C}(x)=1$. Of course this notion is only meaningful for an evasive class of functions: a function is evasive if given oracle access to this function, any efficient adversary cannot output an accepting point. The work of Barak et al.~\cite{BBCKP14} proposed candidates for input-hiding obfuscators. \\
\ \\
\noindent {\em Simulation-Extractable NIZKs~\cite{Sah99,DeS+01}}: Another primitive we consider is simulation-extractable non-interactive zero-knowledge~\cite{Sah99,DeS+01} (seNIZKs). A seNIZK system is a non-interactive protocol between a prover and a verifier with the prover trying to convince the verifier that a statement belongs to the NP language. By non-interactive we mean that the prover only sends one message to the verifier and the verifier is supposed to output the decision bit: accept or reject. Moreover, this primitive is defined in the common reference string model. In this model, there is a trusted setup that produces a common reference string and both the prover and the verifier have access to this common reference string.  
\par As in a traditional interactive protocol, we require a seNIZK to satisfy the completeness property. Another property we require is simulation-extractability.  Simulation-extractability, a property that implies both zero-knowledge and soundness, guarantees that if there exists an efficient adversary $\cA$ who upon receiving a {\em simulated} proof\footnote{A simulated proof is one that is generated by an efficient algorithm, called a simulator, who has access to some private coins that was used to generate the common reference string. Moreover, a simulated proof is indistinguishable from an honestly generated proof. A simulator has the capability to generate simulated proofs for YES instances even without knowing the corresponding witness for these instances.} for an instance $x$, produces an accepting proof for a different instance $x'$, i.e., $x' \neq x$, then there also exists an adversary $\reduction$ that given the same simulated proof produces an accepting proof for $x'$ along with simultaneously producing a valid witness for $x'$. \\

\noindent {\em Combining Simulation-Extractable NIZKs and Input-Hiding Obfuscators}: We now combine the two tools we introduced above to obtain a SSL scheme secure against maulers. Our SSL scheme will be associated with searchable circuits; given a description of a searchable circuit $C$, an input $x$ can be determined efficiently such that $C(x)=1$.
\par To lease a circuit $C$, do the following: 
\begin{itemize}
 \setlength\itemsep{0.5em}
    \item Compute an input-hiding obfuscation of $C$, denoted by $\widetilde{C}$, 
    \item Produce a seNIZK proof $\pi$ that proves knowledge of an input $x$ such that $C(x)=1$. Note that we can find this input using the searchability property. 
\end{itemize}
\noindent Output $( \widetilde{C},\pi)$ as the leased circuit. To evaluate on any input $x$, we first check if $\pi$ is a valid proof and if so, we compute $\widetilde{C}$ on $x$ to obtain $C(x)$. 
\par To see why this scheme is secure against maulers, suppose an adversary $\cA$ given  $( \widetilde{C},\pi )$ produces $( \widetilde{C}^*,\pi^* )$, where $\widetilde{C}^* \neq \widetilde{C}$. Since $\cA$ is a valid mauler we are guaranteed that $\widetilde{C}^*$ is functionally equivalent to $C$. We first run the seNIZK simulator to simulate $\pi$ and once this is done, we no longer need $x$ to generate $\pi$. Now, we invoke the simulation-extractability property to convert $\cA$ into one who not only produces $( \widetilde{C}^*,\pi^* )$ but also simultaneously produces $x$ such that $\widetilde{C}^*(x)=1$. Since $\widetilde{C}^*$ is functionally equivalent to $C$, it follows that $C(x)=1$ as well. But this violates the input-hiding property which says that no efficient adversary given $\widetilde{C}$ can produce an accepting input. \\

\noindent {\em Issue: Checking Functional Equivalence.} There is a subtlety we skipped in the proof above. The maulers that we consider have multi-bit output which is atypical in the cryptographic setting where the focus is mainly on boolean adversaries. This causes an issue when we switch from the honestly generated proof to a simulated proof. Upon receiving the honestly generated proof, $\cA$ outputs $( \widetilde{C}^*,\pi^* )$ such that $\widetilde{C}^*$ is functionally equivalent to $C$ but upon receiving the simulated proof, the adversary outputs $( \widetilde{C}^*,\pi^* )$ where $\widetilde{C}^*$ differs from $C$ on one point. From $\cA$, we need to extract one bit that would help distinguish the real and simulated proofs. To extract this bit, we rely upon sub-exponential security. Given $\widetilde{C}^*$, we run in time $2^n$, where $n$ is the input length, and check if $\widetilde{C}^*$ is still functionally equivalent to $C$; if indeed $\widetilde{C}^*$ is not functionally equivalent to $C$ then we know for a fact that the adversary was given a simulated proof, otherwise it received an honestly generated proof. We set the security parameter in the seNIZK system to be sufficiently large (for eg, $\poly(n)$) such that the seNIZK is still secure against adversaries running in time $2^n$. 

\paragraph{SSL against Duplicators.} Next we focus on constructing SSL secure against duplicators. If our only goal was to protect against duplicators, we could achieve this with a simple scheme. The lessor, in order to lease $C$, will output $(\ket{\psi}, C)$ where $\ket{\psi}$ is a random quantum state generated by applying a random polynomial sized quantum circuit $U$ on input $\ket{0^{\otimes \secparam}}$.  $\run$  on input $(\ket{\psi},C,x)$ ignores the quantum state $\ket{\psi}$, and outputs $C(x)$. By quantum no-cloning, an attacker cannot output two copies of $(\ket{\psi},C)$, which means that this scheme is already secure against duplicators. 
\par Recall that we focused on designing SSL for duplicators in the hope that it will be later helpful for designing SSL for arbitrary pirates. But any SSL scheme in which $\run$ ignores the quantum part would not be useful for obtaining SSL secure against arbitrary pirates; an attacker can simply replace the quantum state as part of the leased state with its own quantum state and copy the classical part. To overcome this insufficiency, we need to design SSL schemes where the Run algorithm only computes correctly when the input leased state belongs to a sparse set of quantum states. This suggests that the Run algorithm implicitly satisfies a verifiability property; it should be able to verify that the input quantum state lies in this sparse set. \\

\noindent \textit{Publicly Verifiable Unclonable States.}
We wish to construct a family of efficiently preparable states $\{\ket{\psi_s}\}_s$ with the following verifiability property. For any state $\ket{\psi_s}$ in the family, there is a way to sample a classical description $d_s$ for $\ket{\psi_s}$ in such a way that it can be verified that $d_s$ is a corresponding description of $\ket{\psi_s}$.  To be more precise, there should be a verification algorithm $\mathsf{Ver}(\ket{\psi_s},d)$ that accepts if $d$ is a valid description for $\ket{\psi_s}$. Furthermore, we want the guarantee that given a valid pair $(\ket{\psi_s},d_s)$, no QPT adversary can produce $\ket{\psi_s}^{\otimes 2}$.

 Our requirement has the same flavor as public-key quantum money, but a key difference is that we do not require any secret parameters associated with the scheme. Moreover, we allow anyone to be able to generate such tuples $(\ket{\psi_s},d_s)$ and not just the minting authority (bank). 

Given such verifiable family, we can define the $\run$ algorithm as follows,\\

\noindent {$\run(C, (\ket{\psi_s},d),x)$:}
\begin{itemize}
    \item If $\mathsf{Ver}(\ket{\psi_s},d)=0$, output $\bot$.
    \item Otherwise, output $C(x)$.
\end{itemize}

\noindent Any lessor can now lease a state $(\ket{\psi_s},d_s,C)$, which would allow anyone to compute $C$ using $\run$. Of course, any pirate that is given $(\ket{\psi_s},d_s,C)$ can prepare their own $(\ket{\psi_{s'}},d_{s'})$ and then input $(\ket{\psi_{s'}},d_{s'},C)$ into $\run$. But recall that we are only interested in ruling out \textit{duplicators}. From the public verifiable property of the quantum states, we have the fact that no QPT pirate could prepare $\ket{\psi_s}^{\otimes 2}$ from $(\ket{\psi_s},d_s)$ and thus, it is computationally infeasible to duplicate the leased state.\\

\noindent \textit{Publicly Verifiable Unclonable States from Subspace Hiding Obfuscation.} The notion of publicly verifiable unclonable states was first realized by Zhandry~\cite{Zha19}. The main tool used in Zhandry's construction is yet another notion of obfuscation, called subspace hiding obfuscation. Roughly speaking, a subspace hiding obfuscator ($\shO$) takes as input a description of a linear subspace $A$, and outputs a circuit that computes the membership function for $A$, i.e. $\shO(A)(x)=1$ iff $x\in A$. Zhandry shows that for a uniformly random  $\frac{\secparam}{2}$-dimensional subspace $A \subset \Zq^{\secparam}$, given $\ket{A}:=\frac{1}{\sqrt{q^{\secparam/2}}}\underset{a \in A}{\sum} \ket{a}$ along with $\widetilde{g}\leftarrow\shO(A),\widetilde{g_\perp} \leftarrow \shO(A^\perp)$, no QPT algorithm can prepare $\ket{A}^{\otimes 2}$ with non-negligible probability. Nevertheless, because $\widetilde{g}$ and $\widetilde{g_\perp}$ compute membership for $A$ and  $A^\perp$ respectively, it is possible to project onto $\ket{A}\bra{A}$ using $(\widetilde{g},\widetilde{g_\perp})$. This lets anyone check the tuple $(\ket{\psi},(\widetilde{g},\widetilde{g_\perp}))$ by measuring $\ket{\psi}$ with the projectors $\{\ket{A}\bra{A},I-\ket{A}\bra{A}\}$. 

\paragraph{Main Template: SSL against Pirates.} Our goal is to construct SSL against arbitrary QPT pirates and not just duplicators or maulers. To achieve this goal, we combine the techniques we have developed so far. 
\par To lease a circuit $C$, do the following:
\begin{enumerate}
        \item First prepare the state the state $\ket{A} = \frac{1}{\sqrt{q^{\secparam/2}}} \underset{a \in A}{\sum} \ket{a}$, along with $\tilde{g}\leftarrow \shO(A)$ and $\widetilde{g_\perp}\leftarrow \shO(A^\perp)$.
        \item Compute an input-hiding obfuscation of $C$, namely  $\widetilde{C}$. 
        \item Let $x$ be an accepting point of $C$. This can be determined using the searchability condition. 
        \item Compute a seNIZK proof $\pi$ such that: (1) the obfuscations $( \widetilde{g},\widetilde{g_\perp}, \widetilde{C})$ were computed correctly,  as a function of $(A, A^\perp, C)$, and, (2) $C(x)=1$. 
        \item Output $\ket{\psi_C} =( \ket{A},\widetilde{g},\widetilde{g_{\bot}},\widetilde{C},\pi)$. 
\end{enumerate}
\noindent The $\run$ algorithm on input $(\ket{\psi_C},\widetilde{g},\widetilde{g_\perp},\widetilde{C},\pi)$ and $x$, first checks the proof $\pi$, and outputs $\bot$ if it does not accept the proof. If it accepts the proof, it knows that  $\widetilde{g}$ and $\widetilde{g_\perp}$ are subspace obfuscators for some subspaces $A$ and $A^\perp$ respectively; it can use them to project $\ket{\psi_C}$ onto $\ket{A}\bra{A}$. This checks whether $\ket{\psi_C}$ is the same as $\ket{A}$ or not. If it is not, then it outputs $\bot$.  If it has not output $\bot$ so far, then it computes $\widetilde{C}$ on $x$ to obtain $C(x)$.\\
\ \\
\noindent {\em Proof Intuition}: To prove the lessor security of the above scheme, we consider two cases depending on the behavior of the pirate: 
    \begin{itemize}
    \setlength\itemsep{0.7em}
        \item {\em Duplicator}: in this case, the pirate produces a new copy that is of the form $( \sigma^*,\widetilde{g},\widetilde{g_{\bot}},\widetilde{C},\pi)$; that is, it has the same classical part as before. If $\sigma^*$ is close to $\ket{A}\bra{A}$, it would violate the no-cloning theorem. On the other hand, if $\sigma^*$ is far from $\ket{A}\bra{A}$, we can argue that the execution of $\run$ on the copy produced by the pirate will not compute $C$. The reason being that at least one of the two subspace obfuscators $\widetilde{g},\widetilde{g_{\bot}}$ will output $\bot$ on the state $\sigma^*$. 
        \item {\em Mauler}: suppose the pirate produces a new copy that is of the form $( \sigma^*,\widetilde{g}^*,\widetilde{g_{\bot}}^*,\allowbreak \widetilde{C}^*,\pi^*)$ such that $(\widetilde{g}^*,\widetilde{g_{\bot}}^*,\widetilde{C}^* ) \neq (\widetilde{g},\widetilde{g_{\bot}},\widetilde{C})$. We invoke the simulation-extractability property to find an input $x$ such that $\widetilde{C}^*(x)=1$. Since $\widetilde{C}^*$ is assumed to have the same functionality as $C$, this means that $C(x)=1$. This would contradict the security of input-hiding obfuscation, since any QPT adversary even given $\widetilde{C}$ should not be able to find an accepting input $x$ such that $C(x)=1$. 
    \end{itemize}

\submversion{
\paragraph{Organization.} We provide the related works  and preliminary background in the Supplementary material, in Sections~\ref{ssec:RW} and~\ref{sec:prelims} respectively. We present the formal definition of secure software leasing in Section~\ref{sec:SSL}. The impossibility result is presented in Section~\ref{sec:imp}. Finally, we present the positive result in Section~\ref{sec:const}.     
}

\fullversion{
\submversion{
\section{Related Work}
}
\fullversion{
\subsection{Related Work}
}
\label{ssec:RW}

SSL is an addition to the rapidly growing list of quantum cryptographic primitives with the desirable property of unclonability, and hence impossible to achieve classically. Besides the aforementioned connections to software copy-protection, our work on SSL is related to the following previous works.

\paragraph{Quantum Money and Quantum Lightning.}  Using quantum mechanics to achieve unforgeability has a history that predates quantum computing itself. Wiesner \cite{wiesner1983conjugate} informally introduced the notion of unforgeable quantum money -- unclonable quantum states that can also be (either publicly or privately) verified to be valid states. A few constructions~\cite{Aar09,Lut+09,Gav11,Farhi12,AC12} achieved quantum money with various features and very recently, in a breakthrough work, Zhandry~\cite{Zha19} shows how to construct publicly-verifiable quantum money from cryptographic assumptions. Zhandry also introduced a stronger notion of quantum money, which he coined quantum lightning, and constructed it from cryptographic assumptions.  

\paragraph{Certifiable Deletion and Unclonable Encryption.} Unclonability has also been studied in the context of encryption schemes. The work of~\cite{gottesman2003uncloneable} studies the problem of quantum tamper detection. Alice can use a quantum state to send Bob an encryption of a classical message $m$ with the guarantee that any eavsdropper could not have cloned the ciphertext. After Bob receives the ciphertext, he can check if the state has been tampered with, and if this is not the case, he would know that a potential eavsdropper did not keep a copy of the ciphertext. In recent work, Broadbent and Lord \cite{broadbent2019uncloneable} introduced the notion of unclonable encryption. Roughly speaking, an unclonable encryption allows Alice to give Bob and Charlie an encryption of a classical message $m$, in the form of a quantum state $\sigma(m)$, such that Bob and Charlie cannot `split' the state among them.  

In a follow-up work, Broadbent and Islam \cite{broadbent2019quantum}, construct a one-time use encryption scheme with certifiable deletion. An encryption scheme has certifiable deletion, if there is an algorithm to check that a ciphertext was deleted.  The security guarantee is that if an adversary is in possession of the ciphertext, and it then passes the certification of deletion, the issuer of the encryption can now give the secret key to the adversary. At this point, the adversary still can't distinguish which plaintext correspond to the ciphertext it was given. 

\paragraph{Quantum Obfuscation.} Our proof of the impossibility of SSL is inspired by the proof of Barak et al. \cite{Bar01} on the impossibility of VBB for arbitrary functions. Alagic and Fefferman\cite{alagic2016quantum} formalized the notion of program obfuscation via quantum tools, defining quantum virtual black-box obfuscation (qVBB) and quantum indistinguishability obfuscation (qiO), as the natural quantum analogues to the respective classical notions (VBB and iO). They also proved quantum analogues of some of the previous impossibility results from \cite{Bar01}, as well as provided quantum cryptographic applications from qVBB and qiO.

\paragraph{Quantum One-Time Programs and One-Time Tokens.} One natural question to ask is if quantum mechanics alone allows the existence of `one-time' use cryptographic primivites. Quantum One-Time programs, that use only quantum information, are not possible even under computational assumptions \cite{broadbent2013quantum}. This rules out the possibility of having a copy-protection scheme where a single copy of the software is consumed by the evaluation procedure. Despite the lack of quantum one-time programs, there are constructions of secure `one-time' signature tokens in the oracle models \cite{ben2016quantum} \cite{AGKZ20}. A quantum token for signatures is a quantum state that would let anyone in possession of it to sign an arbitrary document, but only once.  The token is destroyed in the signing process.   

\paragraph{Quantum Tomography.} Quantum tomography is the task of learning a description of a mixed state $\rho$ given multiple copies of it \cite{haah2017sample} \cite{o2016efficient}.  One possible way to break SSL (or copy-protection) would be to learn a valid description of the state $\rho_C$ directly from having access to multiple copies of the leased program, $\rho_C^{\otimes k}$. Indeed, in recent work in this area, Aaronson \cite{aaronson2018shadow} showed that in order for copy-protection to be possible at all it must be based on computational assumptions.

\paragraph{Recent Work on Copy-Protection.} While finishing this manuscript, we became aware of very recent work on copy-protection. Aaronson et al. \cite{aaronson2020quantum} constructed copy-protection for unlearnable functions relative to a classical oracle. Our work complements their results, since we show that obtaining copy-protection in the standard model (i.e., without oracles) is not possible.
}

\fullversion{
\paragraph{Acknowledgements.} We thank Alex Dalzell for helpful discussions. During this work, RL was funded by NSF grant CCF-1729369 MIT-CTP/5204.
}

\fullversion{
\newcommand{\qnizk}{\mathsf{qNIZK}}
\newcommand{\pke}{\mathsf{qPKE}}
\newcommand{\fkgen}{\fksetup}
\newcommand{\expt}{\mathsf{Expt}}
\newcommand{\evnt}{\mathsf{Process}}
\newcommand{\view}{\mathsf{View}}
\section{Preliminaries}
\label{sec:prelims}
We assume that the reader is familiar with basic cryptographic notions such as negligible functions and computational indistinguishability (see~\cite{Gol01}). 
\par The security parameter is denoted by $\secparam$ and we denote $\negl(\secparam)$ to be a negligible function in $\secparam$. We denote (classical) computational indistiguishability of two distributions $\distr_0$ and $\distr_1$ by $\distr_0 \approx_{c,\varepsilon} \distr_1$. In the case when $\varepsilon$ is negligible, we drop $\varepsilon$ from this notation. 

\subsection{Quantum}
\label{ssec:notation}

For completeness, we present some of the basic quantum definitions, for more details see \cite{nielsen2002quantum}.
\paragraph{Quantum states and channels.} Let $\cH$ be any finite Hilbert space, and let $L(\cH):=\{\cE:\cH \rightarrow \cH \}$ be the set of all linear operators from $\cH$ to itself (or endomorphism). Quantum states over $\cH$ are the positive semidefinite operators in $L(\cH)$ that have unit trace, we call these density matrices, and use the notation $\rho$ or $\sigma$ to stand for density matrices when possible. Quantum channels or quantum operations acting on quantum states over $\cH$ are completely positive trace preserving (CPTP) linear maps from $L(\cH)$ to $L(\cH')$ where $\cH'$ is any other finite dimensional Hilbert space. 
We use the trace distance, denoted by $\trD{\rho}{\sigma}$, as our distance measure on quantum states, 
$$\trD{\rho}{\sigma} = \frac{1}{2} \tr \left[\sqrt{\left(\rho-\sigma\right)^\dagger\left(\rho-\sigma\right)}\right] $$

A state over $\cH=\mathbb{C}^2$ is called a qubit. For any $n \in \mathbb{N}$, we refer to the quantum states over $\cH = (\mathbb{C}^2)^{\otimes n}$ as $n$-qubit quantum states. To perform a standard basis measurement on a qubit means projecting the qubit into $\{\ket{0},\ket{1}\}$. A quantum register is a collection of qubits. A classical register is a quantum register that is only able to store qubits in the computational basis.

A unitary quantum circuit is a sequence of unitary operations (unitary gates) acting on a fixed number of qubits. Measurements in the standard basis can be performed at the end of the unitary circuit. A (general) quantum circuit is a unitary quantum circuit with $2$ additional operations: $(1)$ a gate that adds an ancilla qubit to the system, and $(2)$ a gate that discards (trace-out) a qubit from the system. A quantum polynomial-time algorithm (QPT) is a uniform collection of quantum circuits $\{C_n\}_{n \in \mathbb{N}}$. We always assume that the QPT adversaries are non-uniform -- a QPT adversary $\cA$ acting on $n$ qubits could be given a quantum auxiliary state with $\poly(n)$ qubits.

\paragraph{Quantum Computational Indistinguishability.}

When we talk about quantum distinguishers, we need the following definitions, which we take from \cite{Wat09}.
\begin{definition}[Indistinguishable collections of states] Let $I$ be an infinite subset $I \subset \{0,1\}^*$, let $p : \mathbb{N} \rightarrow \mathbb{N}$ be a polynomially bounded function, and let $\rho_{x}$ and $\sigma_x$ be $p(|x|)$-qubit states. We say that $\{\rho_{x}\}_{x \in I}$ and $\{\sigma_x\}_{x\in I}$ are \textbf{quantum computationally indistinguishable collections of quantum states} if for every QPT $\cE$ that outputs a single bit, any polynomially bounded  $q:\mathbb{N}\rightarrow \mathbb{N}$, and any auxiliary $q(|x|)$-qubits state $\nu$, and for all $x \in I$, we have that
$$\left|\Pr\left[\cE(\rho_x\otimes \nu)=1\right]-\Pr\left[\cE(\sigma_x \otimes \nu)=1\right]\right| \leq \epsilon(|x|) $$
for some function $\epsilon:\mathbb{N}\rightarrow [0,1]$.  We use the following notation 
$$\rho_x \approx_{Q,\epsilon} \sigma_x$$
and we ignore the $\epsilon$ when it is understood that it is a negligible function.
\end{definition}

\begin{definition}[Indistinguishability of channels] Let $I$ be an infinite subset $I \subset \{0,1\}^*$, let $p,q: \mathbb{N} \rightarrow \mathbb{N}$ be polynomially bounded functions, and let $\cD_x,\cF_x$
be quantum channels mapping $p(|x|)$-qubit states to $q(|x|)$-qubit states. We say that $\{\cD_x\}_{x \in I}$ and $\{\cF_x\}_{x \in I}$ are \textbf{quantum computationally indistinguishable collection of channels} if for every QPT $\cE$ that outputs a single bit, any polynomially bounded $t : \mathbb{N} \rightarrow \mathbb{N}$, any $p(|x|)+t(|x|)$-qubit quantum state $\rho$, and for all $x\in I$, we have that
$$ \left|\Pr\left[\cE\left((\cD_x\otimes \Id)(\rho)\right)=1\right]-\Pr\left[\cE\left((\cF_x\otimes \Id)(\rho)\right)=1\right]\right|\leq \epsilon(|x|) $$
for some function $\epsilon:\mathbb{N}\rightarrow [0,1]$. We will use the following notation
$$ \cD_x(\cdot) \approx_{Q,\epsilon} \cF_x(\cdot)$$
and we ignore the $\epsilon$ when it is understood that it is a negligible function.

\end{definition}

\paragraph{Quantum Fourier Transform and Subspaces.} Our main construction uses the same type of quantum states (superpositions over linear subspaces) considered by\cite{aaronson2012quantum,Zha19} in the context of constructing quantum money. 

We recall some key facts from these works relevant to our construction. 
Consider the field $\Zq^{\secparam}$ where $q\geq2$,and let $\ft$ denote the quantum fourier transfrom over $\Zq^{\secparam}$.

For any linear subspace $A$, let $A^\perp$ denote its orthogonal (dual) subspace, $$A^\perp = \{v \in \Zq^\secparam| \langle v, a\rangle = 0 \}.$$

\noindent Let $\ket{A} = \frac{1}{\sqrt{|A|}} \underset{a\in A}{\sum}\ket{a}$.  The quantum fourier Transform, $\ft$, does the following:
$$\ft \ket{A} = \ket{A^\perp}. $$
Since $(A^\perp)^\perp = A$, we also have $\ft \ket{A^\perp} = \ket{A}$.

Let $\Pi_{A}=\underset{a \in A}{\sum} \ket{a}\bra{a}$, then as shown in Lemma 21 of~\cite{aaronson2012quantum},
$$\ft (\Pi_{A^\perp}) \ft \Pi_{A} = \ket{A}\bra{A}.$$

\paragraph{Almost As Good As New Lemma.} We use the Almost As Good As New Lemma \cite{aaronson2004limitations}, restated here verbatim from \cite{aaronson2016complexity}.

\begin{lemma}[Almost As Good As New]\label{clm:ASGAN} Let $\rho$ be a mixed state acting on $\mathbb{C}^d$. Let $U$ be a unitary and $(\Pi_0,\Pi_1=1-\Pi_0)$ be projectors all acting on $\mathbb{C}^d \otimes \mathbb{C}^d$. We interpret $(U,\Pi_0,\Pi_1)$ as a measurement performed by appending an acillary system of dimension $d'$ in the state $\ket{0}\bra{0}$, applying $U$ and then performing the projective measurement $\{\Pi_0,\Pi_1\}$ on the larger system. Assuming that the outcome corresponding to $\Pi_0$ has probability $1-\varepsilon$, i.e., $\tr[\Pi_0(U\rho \otimes \ket{0}\bra{0}U^\dagger)]=1-\varepsilon$, we have $$\trD{\rho}{\widetilde{\rho}} \leq \sqrt{\varepsilon} ,$$
where $\widetilde{\rho}$ is state after performing the measurement and then undoing the unitary $U$ and tracing out the ancillary system: $$\widetilde{\rho} = \tr_{d'}\left(U^\dagger \left( \Pi_0U \left( \rho \otimes \ket{0}\bra{0} \right)U^\dagger \Pi_0 + \Pi_1U \left( \rho \otimes \ket{0}\bra{0} \right)U^\dagger \Pi_1\right) U \right) $$
\end{lemma}

We use this Lemma to argue that whenever a QPT algorithm $\cA$ on input $\rho$, outputs a particular bit string $z$ with probability $1-\varepsilon$, then $\cA$ can be performed in a way that also lets us recover the initial state. In particular, given the QPT description for $\cA$, we can implement $\cA$ with an acillary system, a unitary, and only measuring in the computational basis after the unitary has been applied, similarly to Lemma~\ref{clm:ASGAN}. Then, it is possible to uncompute in order to also obtain $\widetilde{\rho}$.

\paragraph{Notation about Quantum-Secure Classical Primitives.} For a classical primitive X, we use the notation q-X to denote the fact that we assume X to be secure against QPT adversaries. 

\subsection{Learning with Errors}
\label{sec:prelims:lwe}

\noindent We consider the decisional learning with errors (LWE) problem, introduced by Regev~\cite{Reg05}. We define this problem formally below. 

\begin{quote}
    The problem $(n,m,q,\chi)$-LWE, where $n,m,q \in \mathbb{N}$ and  $\chi$ is a distribution supported over $\mathbb{Z}$, is to distinguish between the distributions $(\bfA,\bfA \bfs + \bfe)$ and $(\bfA,\bfu)$, where $\bfA \xleftarrow{\$} \mathbb{Z}_q^{m \times n},\bfs \xleftarrow{\$} \mathbb{Z}_q^{n \times 1},\bfe \xleftarrow{\$} \chi^{m \times 1}$ and $\bfu \leftarrow \mathbb{Z}_q^{m \times 1}$.
\end{quote}

\noindent The above problem has been believed to be hard against classical PPT algorithms -- also referred to as LWE assumption -- has had many powerful applications in cryptography. In this work, we conjecture the above problem to be hard even against QPT algorithms; this conjecture referred to as QLWE assumption has been useful in the constructions of interesting primitives such as quantum fully-homomorphic encryption~\cite{mahadev2018classical,Bra18}. We refer to this assumption as QLWE assumption. 

\begin{quote}
    {\bf QLWE assumption}: This assumption is parameterized by $\secparam$. Let $n=\poly(\secparam)$, $m=\poly(n \cdot \log(q))$ and $\chi$ be a discrete Gaussian distribution\footnote{Refer~\cite{Bra18} for a definition of discrete Gaussian distribution.} with parameter $\alpha q > 0$, where $\alpha$ can set to be any non-negative number.  
    \par Any QPT distinguisher (even given access to polynomial-sized advice state) can solve $(n,m,q,\chi)$-LWE only with  probability $\negl(\secparam)$, for some negligible function $\negl$. 
\end{quote}

\begin{remark}
We drop the notation $\secparam$ from the description of the assumption when it is clear. 
\end{remark}

$(n,m,q,\chi)$-LWE is shown~\cite{Reg05,PRS17} to be as hard as approximating shortest independent vector problem (SIVP) to within a factor of $\gamma=\tilde{O}(n/\alpha)$ (where $\alpha$ is defined above). The best known quantum algorithms for this problem run in time $2^{\tilde{O}(n/\log(\gamma))}$. 
\par For our construction of SSL, we require a stronger version of QLWE that is secure even against sub-exponential quantum adversaries. We state this assumption formally below. 

\begin{quote}
    {\bf $T$-Sub-exponential QLWE Assumption}: This assumption is parameterized by $\secparam$ and time $T$. Let $n=T+\poly(\secparam)$, $m=\poly(n \cdot \log(q))$ and $\chi$ be a discrete Gaussian distribution with parameter $\alpha q > 0$, where $\alpha$ can set to be any non-negative number.  
    \par Any quantum distinguisher (even given access to polynomial-sized advice state) running in time $2^{\widetilde{O}(T)}$ can solve $(n,m,q,\chi)$-LWE only with probability $\negl(\secparam)$, for some negligible function $\negl$. 
\end{quote}


\subsection{Quantum Fully Homomorphic Encryption}
\label{sec:prelims:qfhe}

A fully homomorphic encryption scheme allows for publicly evaluating an encryption of $x$ using a function $f$ to obtain an encryption of $f(x)$. Traditionally $f$ has been modeled as classical circuits but in this work, we consider the setting when $f$ is modeled as quantum circuits and when the messages are quantum states. This notion is referred to as quantum fully homomorphic encryption (QFHE). We state our definition verbatim from \cite{broadbent2015quantum}.

\begin{definition} Let $\cM$ be the Hilbert space associated with the message space (plaintexts), $\cC$ be the Hilbert space associated with the ciphertexts, and $\cR_{evk}$ be the Hilbert space associated with the evaluation key. A quantum fully homomorphic encryption scheme is a tuple of QPT algorithms  $\qfhe=(\gen,\enc,\dec,\allowbreak \eval)$ satisfying 
\begin{itemize}
    \item $\qfhe.\gen(1^\secparam)$: outputs a a public and a secret key, $(\pk,\sk)$, as well as a quantum state $\rho_{evk}$, which can serve as an evaluation key. 
    \item $\qfhe.\enc(\pk,\cdot):L(\cM)\rightarrow L(\cC)$:  takes as input a state $\rho$ and outputs a ciphertext $\sigma$
    \item $\qfhe.\dec(\sk,\cdot):L(\cC)\rightarrow L(\cM)$: takes a quantum ciphertext $\sigma$, and outputs a qubit $\rho$ in the message space $L(\cM)$.
    \item $\qfhe.\eval(\cE, \cdot ):L(\cR_{evk}\otimes \cC^{\otimes n})\rightarrow L(\cC^{\otimes m})$: takes as input a quantum circuit $\cE: L(\cM^{\otimes n}) \rightarrow L(\cM^{\otimes m})$, and a ciphertext in $L(\cC^{\otimes n})$ and outputs a ciphertext in $L(\cC^{\otimes m})$, possibly consuming the evaluation key $\rho_{evk}$ in the proccess.

\end{itemize}
\end{definition}
\noindent Semantic security and compactness are defined analogously to the classical setting, and we defer to~\cite{broadbent2015quantum} for a definition.
\noindent For the impossibility result, we require a $\qfhe$ scheme where ciphertexts of classical plaintexts are also classical. Given any $x \in \{0,1\}$, we want $\qfhe.\enc_\pk(\ket{x}\bra{x})$ to be a computational basis state $\ket{z}\bra{z}$ for some $z \in \{0,1\}^l$ (here, $l$ is the length of ciphertexts for 1-bit messages). In this case, we write $\qfhe.\enc_\pk(x)$. We also want the same to be true for evaluated ciphertexts, i.e. if $\cE(\ket{x}\bra{x})=\ket{y}\bra{y}$ for some $x \in \{0,1\}^n$ and $y \in \{0,1\}^m$, then 
$$\qfhe.\enc_\pk(y) \leftarrow \qfhe.\eval(\rho_{evk}, \cE, \qfhe.\enc_\pk(x)) $$
is a classical ciphertext of $y$.

\paragraph{Instantiation.} The works of~\cite{mahadev2018classical,Bra18} give lattice-based candidates for quantum fully homomorphic encryption schemes; we currently do not know how to base this on learning with errors alone\footnote{Brakerski~\cite{Bra18} remarks that the security of their candidate can be based on a circular security assumption that is also used to argue the security of existing constructions of unbounded depth multi-key FHE~\cite{CM15,MW16,PS16,BP16}.}. The desirable property required from the quantum FHE schemes, that classical messages have classical ciphertexts, is satisfied by both candidates~\cite{mahadev2018classical,Bra18}.

\subsection{Circuit Class of Interest: Evasive Circuits}
\noindent The circuit class we consider in our construction of SSL is a subclass of evasive circuits. We recall the definition of evasive circuits below. 

\paragraph{Evasive Circuits.} Informally, a class of circuits is said to be evasive if a circuit drawn from a suitable distribution outputs 1 on a fixed point with negligible probability. 

\begin{definition}[Evasive Circuits]
A class of circuits $\cktclass=\{\cktclass_{\secparam}\}_{\secparam \in \mathbb{N}}$, associated with a distribution $\distrc$, is said to be {\bf evasive} if the following holds: for every $\secparam \in \mathbb{N}$, every $x \in \{0,1\}^{\poly(\secparam)}$,
$$\underset{C \leftarrow \distrc}{\prob} \left[ C(x)=1  \right] \leq \negl(\secparam),$$
\end{definition}

\paragraph{Compute-and-compare Circuits.} The subclass of circuits that we are interested in is called compute-and-compare circuits, denoted by $\cktclass_{\cnc}$. A compute-and-compare circuit is of the following form: $\lockC[C,\alpha]$, where $\alpha$ is called a lock and $C$ has output length $|\alpha|$, is defined as follows: 
$$ \lockC[C,\alpha](x) =  \left\{\substack{1, \ \text{ if } C(x)=\alpha,\\ \ \\ 0,\ \text { otherwise }  } \right. $$   

\paragraph{Multi-bit compute-and-compare circuits.} We can correspondingly define the notion of multi-bit compute-and-compare circuits. A multi-bit compute-and-compare circuit is of the following form: 
$$ \lockC[C,\alpha,\msg](x) =  \left\{\substack{\msg, \ \text{ if } C(x)=\alpha,\\ \ \\ 0,\ \text { otherwise }  } \right.,$$
where $\msg$ is a binary string.

\noindent We consider two types of distributions as defined by~\cite{WZ17}. 

\begin{definition}[Distributions for Compute-and-Compare Circuits]
\label{def:distr:cnc}
We consider the following distributions on $\cktclass_{\cnc}$: 
\begin{itemize}
    \item {\bf $\distr_{\unpred}(\secparam)$}: For any $(\lockC[C,\alpha])$ along with $\aux$  sampled from this unpredictable distribution, it holds that $\alpha$ is computationally unpredictable given $(C,\aux)$.   
    \item {\bf $\distr_{\pseud}(\secparam)$}: For any $\lockC[C,\alpha]$ along with $\aux$ sampled from this distribution, it holds that $\bfH_{\hill}\left(\alpha|(C,\aux) \right) \geq \secparam^{\varepsilon}$, for some constant $\epsilon > 0$, where $\bfH_{\hill}(\cdot)$ is the HILL entropy~\cite{HILL99}. 
\end{itemize}
\end{definition}

\noindent Note that with respect to the above distributions, the compute-and-compare class of circuits $\cktclass_{\cnc}$ is evasive.  

\paragraph{Searchability.} For our construction of SSL for $\cktclass$, we crucially use the fact that given a circuit $C \in \cktclass$, we can read off an input $x$ from the description of $C$ such that $C(x)=1$. We formalize this by defining a search algorithm $\search$ that on input a circuit $C$ outputs an accepting input for $C$. For many interesting class of functions, there do exist a corresponding efficiently implementable class of circuits associated with a search algorithm $\search$. 

\begin{definition}[Searchability]
A class of circuits $\cktclass=\{\cktclass_{\secparam}\}_{\secparam \in \mathbb{N}}$ is said to be {\bf $\search$-searchable}, with respect to a PPT algorithm $\search$, if the following holds: on input $C$, $\search(C)$ outputs $x$ such that $C(x)=1$.  
\end{definition}

\paragraph{Searchable Compute-and-Compare Circuits: Examples.} As mentioned in the introduction, there are natural and interesting classes of searchable compute-and-compare circuits. For completeness, we state them again below with additional examples~\cite{WZ17}. 
\begin{itemize}
    \item Point circuits $C(\alpha, \cdot)$: the circuit $C(\alpha,\cdot)$ is a point circuit if it takes as input $x$ and outputs $C(\alpha,x)=1$ iff $x=\alpha$. If we define the class of point circuits suitably, we can find $\alpha$ directly from $C_{\alpha}$; for instance, $\alpha$ can be  the value assigned to the input wires of $C$. 
    \item Conjunctions with wild cards $C(S,\alpha, \cdot)$: the circuit $C(S,\alpha,\cdot)$ is a conjunction with wild cards if it takes as input $x$ and outputs $C(S,\alpha,x)=1$ iff $y = \alpha$, where $y$ is such that $y_i=x_i$ for all $i \in S$.
    Again, if we define this class of circuits suitably, we can find $S$ and $\alpha$ directly from the description of $C(S,\alpha,\cdot)$. Once we find $S$ and $\alpha$, we can find the accepting input.  
    \item Affine Tester: the circuit $C(\mathbf{A},\alpha,\cdot)$ is an affine tester, with $\mathbf{A},\mathbf{y}$ where $\mathbf{A}$ has a non-trivial kernel space, if it takes as input $\mathbf{x}$ and outputs $C(\mathbf{A},\alpha,\mathbf{x})=1$ iff $\mathbf{A} \cdot \mathbf{x} = \alpha$. By reading off $\mathbf{A}$ and $\alpha$ and using Gaussian elimination we can find $\mathbf{x}$ such that $\mathbf{A} \cdot \mathbf{x} = \alpha$.  
    \item Plaintext equality checker $C(\sk,\alpha,\cdot)$: the circuit $C(\sk,\alpha,\cdot)$, with hardwired values decryption key $\sk$ associated with a private key encryption scheme, message $\alpha$, is a plaintext equality checker if it takes as input a ciphertext $\ct$ and outputs $C(\sk,\alpha,\ct)=1$ iff the decryption of $\ct$ with respect to $\sk$ is $\alpha$. By reading off $\alpha$ and $\sk$, we can find a ciphertext such that $\ct$ is an encryption of $\alpha$. 
\end{itemize}

\begin{remark}
We note that both the candidate constructions of copy-protection  for point functions by Aaronson~\cite{Aar09} use the fact that the accepting point of the point function is known by whoever is generating the copy-protected circuit.
\end{remark}

\subsection{Obfuscation}
In this work, we use different notions of cryptographic obfucation. We review all the required notions below, but first we recall the functionality of obfuscation. 

\begin{definition}[Functionality of Obfuscation]
Consider a class of circuits $\cktclass$. An obfuscator $\obfuscator$ consists of two PPT algorithms $\obf$ and $\eval$ such that the following holds: for every $\secparam \in \mathbb{N}$, circuit $C \in \cktclass$, $x \in \{0,1\}^{\poly(\secparam)}$, we have $C(x) \leftarrow \eval(\widetilde{C},x)$ where $\widetilde{C} \leftarrow \obf(1^{\secparam},C)$.  
\end{definition}

\subsubsection{Lockable Obfuscation}
\label{ssec:prelims:lobfs}

\noindent In the impossibility result, we will make use of program obfuscation schemes that are (i) defined for compute-and-compare circuits and, (ii) satisfy distributional virtual black box security notion~\cite{BGIRSVY01}. Such obfuscation schemes were first introduced by~\cite{WZ17,GKW17} and are called lockable obfuscation schemes. We recall their definition, adapted to quantum security, below.

\begin{definition}[Quantum-Secure Lockable Obfuscation]
An obfuscation scheme $(\lobf,\leval)$ for a class of circuits $\cktclass$ is said to be a \textbf{quantum-secure lockable obfuscation scheme} if the following properties are satisfied: 
\begin{itemize}
    \item It satisfies the functionality of obfuscation. 
    \item {\bf Compute-and-compare circuits}: Each circuit $\lockC$ in $\cktclass$ is parameterized by strings $\alpha \in \{0,1\}^{\poly(\secparam)},\beta \in \{0,1\}^{\poly(\secparam)}$ and a poly-sized circuit $C$ such that on every input $x$, $\lockC(x)$ outputs $\beta$ if and only if $C(x)=\alpha$. 
    \item {\bf Security}: For every polynomial-sized circuit $C$, string $\beta \in \{0,1\}^{\poly(\secparam)}$,for every QPT adversary $\adversary$ there exists a QPT simulator $\simulator$ such that the following holds: sample $\alpha \xleftarrow{\$}  \{0,1\}^{\poly(\secparam)}$,
    $$\left\{ \lobf \left( 1^{\secparam},\lockC \right) \right\} \approx_{Q,\varepsilon} \left\{\simulator\left(1^{\secparam},1^{|C|} \right) \right\},$$
    where $\lockC$ is a circuit parameterized by $C,\alpha,\beta$ with $\varepsilon \leq \frac{1}{2^{|\alpha|}}$.  
\end{itemize}

\end{definition}

\paragraph{Instantiation.} The works of~\cite{WZ17,GKW17,GKVW19} construct a lockable obfuscation scheme based on polynomial-security of learning with errors (see Section~\ref{sec:prelims:lwe}). Since learning with errors is conjectured to be hard against QPT algorithms, the obfuscation schemes of~\cite{WZ17,GKW17,GKVW19} are also secure against QPT algorithms. Thus, we have the following theorem. 

\begin{theorem}[\cite{GKW17,WZ17,GKVW19}]
Assuming quantum hardness of learning with errors, there exists a quantum-secure lockable obfuscation scheme. 
\end{theorem}

\subsubsection{q-Input-Hiding Obfuscators}
\noindent One of the main tools used in our construction is q-input-hiding obfuscators. The notion of input-hiding obfuscators was first defined in the classical setting by Barak et al.~\cite{BBCKP14}. We adopt the same notion except that we require the security of the primitive to hold against QPT adversaries. \par The notion of q-input-hiding obfuscators states that given an obfuscated circuit, it should be infeasible for a QPT adversary to find an accepting input; that is, an input on which the circuit outputs 1. Note that this notion is only meaningful for the class of evasive circuits. 

The definition below is suitably adapted from Barak et al.~\cite{BBCKP14}; in particular, our security should hold against QPT adversaries. 

\begin{definition}[q-Input-Hiding Obfuscators~\cite{BBCKP14}]
An obfuscator $\qiho=(\obf,\eval)$ for a class of circuits associated with distribution $\distrc$ is {\bf q-input-hiding} if for every non-uniform QPT adversary $\adversary$, for every sufficiently large $\secparam \in \mathbb{N}$,
$$\prob \left[C(x)=1\ :\ \substack{C \leftarrow \distrc(\secparam),\\ \ \\ \widetilde{C} \leftarrow \obf(1^{\secparam},C), \\ \ \\ x \leftarrow \adversary(1^{\secparam},\widetilde{C})} \right] \leq \negl(\secparam).$$
\end{definition}

\subsubsection{Subspace Hiding Obfuscators}~\label{ssec:shO}
\noindent Another ingredient in our construction is subspace hiding obfuscation. Subspace hiding obfuscation is a notion of obfuscation introduced by Zhandry \cite{Zha19}, as a tool to build pulic-key quantum money schemes. This notion allows for obfuscating a circuit, associated with subspace $A$, that checks if an input vector belongs to this subspace $A$ or not. In terms of security, we require that the obfuscation of this circuit is indistinguishable from obfuscation of another circuit that tests membership of a larger random (and hidden) subspace containing $A$. 

\begin{definition}[\cite{Zha19}]
A subspace hiding obfuscator for a field $\mathbb{F}$ and dimensions $d_0,d_1,\secparam$ is a tuple $(\shO.\obf, \shO.\eval)$ satisfying:
\begin{itemize}
\item  $\shO.\obf(A)$: on input an efficient description of a linear subspace $A \subset \mathbb{F}^\secparam$ of dimensions $d \in \{d_0,d_1\}$ outputs an obfuscator $\shO(A)$. 
\item {\bf Correctness:} For any $A$ of dimension $d \in \{d_0,d_1\}$, it holds that
$$\prob[\forall x, \shO.\eval(\shO.\obf(A),x)=\mathbb{1}_{A}(x) :  \shO(A) \leftarrow \shO.\obf(A)] \geq 1-\negl(\secparam),$$
where: $\mathbb{1}_{A}(x)=1$ if $x \in A$ and 0, otherwise. 
\item {\bf Quantum-Security:}  Any QPT adversary $\cA$ can win the following challenge with probability at most negligibly greater than $\frac{1}{2}$.
\begin{enumerate}
    \item $\cA$ chooses a $d_0$-dimensional subspace $A \subset \mathbb{F}^\secparam$. 
    \item Challenger chooses uniformly at random a $d_1$-dimensional subspace $S \supseteq A$. It samples a random bit $b$.  If $b=0$, it sends $\widetilde{g_0} \leftarrow \shO.\obf(A)$. Otherwise, it sends $\widetilde{g_1} \leftarrow \shO.\obf(S)$
    \item $\cA$ receives $\widetilde{g_b}$ and outputs $b'$.  It wins if $b'=b$.
\end{enumerate}
\end{itemize}
\end{definition}

\paragraph{Instantiation.}  Zhandry presented a construction of subspace obfuscators from  indistinguishability obfuscation~\cite{BGIRSVY01,GGHRSW16} secure against QPT adversaries. 

\subsection{q-Simulation-Extractable Non-Interactive Zero-Knowledge}
We also use the tool of non-interactive zero-knowledge (NIZK) systems for NP for our construction. A NIZK is defined between a classical PPT prover $\prover$ and a verifier $\verify$. The goal of the prover is to convince the verifier $\verify$ to accept an instance $x$ using a witness $w$ while at the same time, not revealing any information about $w$. Moreover, any malicious prover should not be able to falsely convince the verifier to accept a NO instance. Since we allow the malicious parties to be QPT, we term this NIZK as qNIZK. 
\par We require the qNIZKs to satisfy a stronger property called simulation extractability and we call a qNIZK satisfying this stronger property to be q-simulation-extractable NIZK ($\ssnizk$). 
\par We describe the PPT algorithms of $\ssnizk$ below.   
\begin{itemize}
    \item $\crsgen(1^{\secparam})$: On input security parameter $\secparam$, it outputs the common reference string $\crs$. 
    \item $\prover(\crs,x,w)$: On input common reference string $\crs$, NP instance $x$, witness $w$, it outputs the proof $\pi$. 
    \item $\verify(\crs,x,\pi)$: On input common reference string $\crs$, instance $x$, proof $\pi$, it outputs accept or reject. This is a deterministic algorithm.
\end{itemize}
\noindent This notion is associated with the following properties. We start with the standard notion of completeness.

\begin{definition}[Completeness]
\label{def:qnizk:compl}
A non-interactive protocol $\ssnizk$ for a NP language $L$ is said to be {\bf complete} if the following holds: for every $(x,w) \in \rel(L)$, we have the following: 
$$\prob \left[ \verify(\crs,x,\pi)\ \text{accepts}\ :\ \substack{\crs \leftarrow \crsgen(1^{\secparam})\\ \ \\ \pi \leftarrow \prover(\crs,x,w)} \right] = 1$$
\end{definition}

\paragraph{q-Simulation-Extractability.} We now describe the simulation-extractability property. Suppose there exists an adversary who upon receiving many proofs $\pi_1,\ldots,\pi_q$ on all YES instances $x_1,\ldots,x_q$, can produce a proof  $\pi'$ on instance $x'$ such that: (a) $x'$ is different from all the instances $x_1,\ldots,x_q$ and, (b) $\pi'$ is accepting with probability $\varepsilon$. Then, this notion guarantees the existence of two efficient algorithms $\simr_1$ and $\simr_2$ 
such that all the proofs $\pi_1,\ldots,\pi_q$, are now simulated by $\simr_1$, and $\simr_2$ can extract a valid witness for $x'$ from $(x',\pi')$ produced by the adversary with probability negligibly close to $\varepsilon$.     

\begin{definition}[q-Simulation-Extractability]
A non-interactive protocol $\ssnizk$ for a language $L$ is said to satisfy {\bf  q-simulation-extractability}  if there exists a non-uniform QPT adversary $\adversary=(\adversary_1,\adversary_2)$ such that the following holds:
     $$\prob \left[ \substack{\verify(\crs,x',\pi')\ \text{accepts}\\ \ \\ \bigwedge\\ \ \\ \left(\forall i\in [q], \left(x_i,w_i \right) \in \rel(\lang) \right)\\ \ \\ \bigwedge\\ \ \\ \left( \forall i \in [q],\ x' \neq x_i  \right)} \ :\ \substack{\crs \leftarrow \crsgen(1^{\secparam}),\\ \ \\ \left( \{(x_i,w_i)\}_{i \in [q]},\st_{\adversary} \right) \leftarrow \adversary_1(\crs)\\ \ \\  \forall i \in [q], \ \pi_i \leftarrow \prover(\crs,\td,x_i)\\ \ \\ (x',\pi') \leftarrow \adversary_2(\st_{\adversary},\pi_1,\ldots,\pi_q)} \right] = \varepsilon$$
    Then there exists QPT algorithms $\fksetup$ and   $\simr=(\simr_1,\simr_2)$ such that the following holds: 
    $$\left|\prob \left[ \substack{\verify(\crs,x',\pi')\ \text{accepts}\\ \ \\ \bigwedge\\ \ \\ \left(\forall i\in [q], \left(x_i,w_i \right) \in \rel(\lang) \right)\\ \ \\ \bigwedge\\ \ \\ (x',w') \in \rel(L)\\ \ \\ \bigwedge\\ \ \\ \left( \forall i \in [q],\ x' \neq x_i  \right)} \ :\ \substack{(\crs,\td) \leftarrow \fksetup(1^{\secparam}),\\ \ \\ \left( \{(x_i,w_i)\}_{i \in [q]},\st_{\adversary} \right) \leftarrow \adversary_1(\crs)\\ \ \\   (\pi_1,\ldots,\pi_{q},\st_{\simr}) \leftarrow \simr_1 \left( \crs,\td,\{x_i\}_{i \in [q]} \right)\\ \ \\ (x',\pi') \leftarrow \adversary_2(\st_{\adversary},\pi_1,\ldots,\pi_q)\\ \ \\ w' \leftarrow \simr_2(\st_{\simr},x',\pi')} \right]\ -\ \varepsilon \right|\leq \negl(\secparam)$$
\noindent We call a non-interactive argument system satisfying q-simulation-extractability property to be a qseNIZK system. 
\par If q-smulation-extractability property holds against quantum adversaries running in time $2^{\tilde{O}(T)}$ ($\tilde{O}(\cdot)$ notation suppresses additive factors in $O(\log(\secparam))$) then we say that $(\crsgen,\prover,\verify)$ is a $T$-sub-exponential qseNIZK system. 
\end{definition}


\begin{remark}
The definition as stated above is weaker compared to other definitions of simulation-extractability considered in the literature. For instance, we can consider general adversaries who also can obtain simulated proofs for false statements which is disallowed in the above setting. Nonetheless, the definition considered above is sufficient for our application. 
\end{remark}

\paragraph{Instantiation of qseNIZKs.} In the classical setting, simulation-extractable NIZKs can be obtained by generically~\cite{Sah99,DeS+01} combining a traditional NIZK (satisfying completeness, soundness and zero-knowledge) with a public-key encryption scheme satisfying CCA2 security. We observe that the same transformation can be ported to the quantum setting as well, by suitably instantiating the underlying primitives to be quantum-secure. These primitives in turn can be instantiated from QLWE. Thus, we can obtain a q-simulation-extractable NIZK from QLWE. 
\par For our construction of SSL, it turns out that we need a q-simulation-extractable NIZK that is secure against quantum adversaries running in sub-exponential time. Fortunately, we can still adapt the same transformation but instead instantiating the underlying primitives to be sub-exponentially secure. \par Before we formalize this theorem, we first state the necessary preliminary background.

\begin{definition}[q-Non-Interactive Zero-Knowledge]
A non-interactive system $(\crsgen,\prover,\verify)$ defined for a NP language $\lang$ is said to be {\bf q-non-interactive zero-knowledge (qNIZK)} if it satisfies Definition~\ref{def:qnizk:compl} and additionally, satisfies the following properties: 
\begin{itemize}
    \item Adaptive Soundness: For any  malicious QPT prover $\prover^*$, the following holds: 
    $$\prob \left[ \substack{\verify(\crs,x,\pi)\  \text{accepts}\\ \ \\ \bigwedge\\ \ \\ x' \notin \lang} \ \ :\ \substack{\crs \leftarrow \crsgen\left(1^{\secparam} \right)\\ \ \\ (x,\pi) \leftarrow \prover^*(\crs)} \right] \leq \negl(\secparam)$$
    \item Adaptive (Multi-Theorem) Zero-knowledge: For any QPT verifier $\verify^*$, there exists two QPT algorithms $\fkgen$ and simulator $\simr$, such that the following holds: 
    $$\Bigg| \prob\left[ \substack{1 \leftarrow \verify^*\left(\st,\{\pi\}_{i \in [q]} \right)\\ \ \\ \bigwedge\\ \ \\ \forall i\in [q],\ (x_i,w_i) \in \rel(\lang)} \ :\ \substack{\crs \leftarrow \crsgen(1^{\secparam})\\ \ \\ \left(\{(x_i,w_i)\}_{i \in [q]},\st \right) \leftarrow \verify^*(\crs)\\ \ \\ \forall i \in [q],\ \pi_i \leftarrow \prover(\crs,x_i,w_i)} \right]$$
    $$\ \ \ \ \ \ \ \ \ \ \ \ \ \ \ \ \ \ \ - \prob\left[ \substack{1 \leftarrow \verify^*\left(\st,\{\pi\}_{i \in [q]} \right)\\ \ \\ \bigwedge\\ \ \\ \forall i\in [q],\ (x_i,w_i) \in \rel(\lang)} \ :\ \substack{(\crs,\td) \leftarrow \fkgen(1^{\secparam})\\ \ \\ \left(\{(x_i,w_i)\}_{i \in [q]},\st \right) \leftarrow \verify^*(\crs)\\ \ \\ \{\pi_i\}_{i \in [q]} \leftarrow \simr(\crs,\td,\{x_i\}_{i \in [q]})} \right] \Bigg| \leq \negl(\secparam) $$
\end{itemize}
If both adaptive soundness and adaptive multi-theorem zero-knowledge holds against quantum adversaries running in time $2^{\tilde{O}(T)}$ then we say that $(\crsgen,\prover,\verify)$ is a $T$-sub-exponential qNIZK. 
\end{definition}

\begin{remark}
q-simulation-extractable NIZKs imply qNIZKs since simulation-extractability implies both soundness and zero-knowledge properties.  
\end{remark}

\begin{definition}[q-CCA2-secure PKE]
A public-encryption scheme $(\setup,\enc,\dec)$ (defined below) is said to satify {\bf q-CCA2-security} if every QPT adversary $\cA$ wins in $\expt_{\cA}$ (defined below) only with negligible probability. 
\begin{itemize}
    \item $\setup(1^{\secparam})$: On input security parameter $\secparam$, output a public key $\pk$ and a decryption key $\sk$. 
    \item $\enc(\pk,x)$: On input public-key $\pk$, message $x$, output a ciphertext $\ct$. 
    \item $\dec(\sk,\ct)$: On input decryption key $\sk$, ciphertext $\ct$, output $y$. 
\end{itemize}
For any $x \in \{0,1\}^{\poly(\secparam)}$, we have $\dec(\sk,\enc(\pk,x))=x$. \\

\noindent \underline{$\expt_{\cA}(1^{\secparam},b)$}: 
\begin{itemize}
    \item Challenger generates $\setup(1^{\secparam})$ to obtain $(\pk,\sk)$. It sends $\pk$ to $\cA$. 
    \item $\cA$ has (classical) access to a decryption oracle that on input $\ct$, outputs $\dec(\sk,\ct)$. It can make polynomially many queries. 
    \item $\cA$ then submits $(x_0,x_1)$ to the challenger which then returns $\ct^* \leftarrow \enc(\pk,x_b)$.
    \item $\cA$ is then given access to the same oracle as before. The only restriction on $\cA$ is that it cannot query $\ct^*$.
    \item Output $b'$ where the output of $\cA$ is $b'$. 
\end{itemize}
$\cA$ wins in $\expt_{\cA}$ with probability $\mu(\secparam)$ if $\prob\left[ b=b'\ : \substack{b \xleftarrow{\$} \{0,1\}\\ \ \\ \expt_{\cA}(1^{\secparam})} \right]=\frac{1}{2} + \mu(\secparam)$.  
\par If the above q-CCA2 security holds against quantum adveraries running in time $2^{\tilde{O}(T)}$  then we say that $(\setup,\enc,\dec)$ is a  $T$-sub-exponential  q-CCA2-secure PKE scheme. 
\end{definition}

\begin{remark}
One could also consider the setting when the CCA2 adversary has superposition access to the oracle. However, for our construction, it suffices to consider the setting when the adversary only has classical access to the oracle.  
\end{remark}

\noindent Consider the following lemma. 

\begin{lemma}
\label{lem:qsenizks}
Consider a language $\lang_{\ell} \in NP$ such that every $x \in \lang_{\ell}$ is such that $|x|=\ell$.  
\par Under the $\ell$-sub-exponential QLWE assumption, there exists a q-simulation-extractable NIZKs for $\lang_{\ell}$ satisfying perfect completeness. 
\end{lemma}
\begin{proof}
We first state the following proposition that shows how to generically construct a q-simulation-extractable NIZK from qNIZK and a CCA2-secure public-key encryption scheme.
\begin{proposition}
Consider a language $\lang_{\ell} \in NP$ such that every $x \in \lang_{\ell}$ is such that $|x|=\ell$. 
\par Assuming $\ell$-sub-exponential qNIZKs for NP and $\ell$-sub-exponential q-CCA2-secure PKE schemes, there exists a $\ell$-sub-exponential qseNIZK system for $\lang_{\ell}$.  
\end{proposition}
\begin{proof}

Let $\qpke$ be a $\ell$-sub-exponential qCCA2-secure PKE scheme. Let $\qnizk$ be a $\ell$-sub-exponential qNIZK for the following relation. 
$$\rel_{\qnizk}=\left\{ \left((\pk,\ct_w,x),\ (w,r_w) \right)\ :\ \left( (x,w) \in \rel(\lang_{\ell}) \bigwedge \  \ct_w = \enc(\pk,(x,w);r_w) \right)  \right\}$$

\noindent We present the construction (quantum analogue of~\cite{Sah99,DeS+01}) of q-simulation-extractable NIZK for $\lang_{\ell}$ below. 

\begin{itemize}
    \item $\crsgen(1^{\secparam})$: On input security parameter $\secparam$, \begin{itemize}
        \item Compute $\qnizk.\crsgen(1^{\secparam_1})$ to obtain $\qnizk.\crs$, where $\secparam_1=\poly(\secparam,\ell)$ is chosen such that $\qnizk$ is a $\ell$-sub-exponential q-non-interactive zero-knowledge argument system. 
        \item Compute $\pke.\setup(1^{\secparam_2})$ to obtain $(\pk,\sk)$, where $\secparam_2=\poly(\secparam,\ell)$ is chosen such that $\qpke$ is a $\ell$-sub-exponential q-CCA2-secure PKE scheme. 
    \end{itemize}
    Output $\crs=(\pk,\qnizk.\crs)$. 
    \item $\prover(\crs,x,w)$: On input common reference string $\crs$, instance $x$, witness $w$, 
    \begin{itemize}
        \item Parse $\crs$ as $(\pk,\qnizk.\crs)$. 
        \item Compute $\ct_w \leftarrow \pke.\enc(\pk,(x,w);r_w)$, where $r_w \xleftarrow{\$} \{0,1\}^{\poly(\secparam)}$.
        \item Compute $\qnizk.\pi \leftarrow \qnizk.\prover(\qnizk.\crs,(\pk,\ct_w,x),(w,r_w))$. 
    \end{itemize}
    \noindent Output $\pi=\left( \qnizk.\pi,\ct_w \right)$. 
    \item $\verify(\crs,x,\pi)$: On input common reference string $\crs$, NP instance $x$, proof $\pi$, 
    \begin{itemize}
        \item Parse $\crs$ as $(\pk,\ct,\qnizk.\crs)$.
        \item Output $\qnizk.\verify \left( \qnizk.\crs,(\pk,\ct_w,x),\pi \right)$. 
    \end{itemize}
\end{itemize}

\noindent We prove that the above argument system satisfies q-simulation-extractability. We describe the algorithms $\fksetup$ and $\simr=(\simr_1.\simr_2)$ below. Let $\qnizk.\fkgen$ and $\qnizk.\simr$ be the QPT algorithms associated with the zero-knowledge property of $\qnizk$.  \\

\noindent \underline{$\fksetup(1^{\secparam})$}: Compute $(\qnizk.\crs,\tau) \leftarrow \qnizk.\fkgen \left(1^{\secparam} \right)$. Compute $(\pk,\sk) \leftarrow \pke.\setup(1^{\secparam})$. Output $\crs=\left( \qnizk.\crs,\pk,\ct \right)$ and $\td=(\tau,\sk)$. \\

\noindent \underline{$\simr_1 \left( \crs,\td,\{x_i\}_{i \in [q]} \right)$}: Compute $\qnizk.\simr \left( \qnizk.\crs,\tau,(\pk,\ct,x_i) \right)$ to obtain $\qnizk.\pi_i$, for every $i \in [q]$. Output $\left\{\qnizk.\pi_1,\ldots,\qnizk.\pi_q\right\}$ and $\st=\left(\td,\crs,\left( \left\{ x_i \right\}_{i \in [q]} \right) \right)$. \\

\noindent \underline{$\simr_2 \left(\st,x',\pi' \right)$}: On input $\st=\left(\td=(\tau,\sk),\crs,\left( \left\{ x_i \right\}_{i \in [q]} \right) \right)$, instance $x'$, proof $\pi'=(\qnizk.\pi',\ct'_w)$, compute $\dec(\sk,\ct'_{w'})$ to obtain $w'$. Output $w'$.  \\

\noindent Suppose $\cA$ be a quantum adversary running in time $2^{\widetilde{O}(\ell)}$ such that the following holds:  $$\prob \left[ \substack{\verify(\crs,x',\pi')\ \text{accepts}\\ \ \\ \bigwedge\\ \ \\ \left(\forall i\in [q],\ \left(x_i,w_i \right) \in \rel(\lang) \right)\\ \ \\ \bigwedge\\ \ \\ \left( \forall i \in [q],\ x' \neq x_i  \right)} \ :\ \substack{\crs \leftarrow \crsgen(1^{\secparam}),\\ \ \\ \left( \{(x_i,w_i)\}_{i \in [q]},\st_{\adversary} \right) \leftarrow \adversary_1(\crs)\\ \ \\  \forall i \in [q], \ \pi_i \leftarrow \prover(\crs,\td,x_i)\\ \ \\ (x',\pi') \leftarrow \adversary_2(\st_{\adversary},\pi_1,\ldots,\pi_q)} \right] = \varepsilon$$
\noindent Let $\delta$ be such that the following holds: 
 $$\prob \left[ \substack{\verify(\crs,x',\pi')\ \text{accepts}\\ \ \\ \bigwedge\\ \ \\ \left(\forall i\in [q], \left(x_i,w_i \right) \in \rel(\lang) \right)\\ \ \\ \bigwedge\\ \ \\ (x',w') \in \rel(L)\\ \ \\ \bigwedge\\ \ \\ \left( \forall i \in [q],\ x' \neq x_i  \right)} \ :\ \substack{(\crs,\td) \leftarrow \fksetup(1^{\secparam}),\\ \ \\ \left( \{(x_i,w_i)\}_{i \in [q]},\st_{\adversary} \right) \leftarrow \adversary_1(\crs)\\ \ \\   (\pi_1,\ldots,\pi_{q},\st_{\simr}) \leftarrow \simr_1 \left( \crs,\td,\{x_i\}_{i \in [q]} \right)\\ \ \\ (x',\pi') \leftarrow \adversary_2(\st_{\adversary},\pi_1,\ldots,\pi_q)\\ \ \\ w' \leftarrow \simr_2(\st_{\simr},x',\pi')} \right]\  = \delta$$
We prove using a standard hybrid argument that $|\delta-\varepsilon| \leq \negl(\secparam)$. \\

\noindent \underline{$\hybrid_1$}: $\cA$ is given $\pi_1,\ldots,\pi_q$, where $\pi_i \leftarrow \prover(\crs,x_i,w_i)$. Let $(x',\pi')$ is the output of $\cA$ and parse $\pi'=(\qnizk.\pi',\ct'_w)$. Decrypt $\ct'_w$ using $\sk$ to obtain $(x^*,w')$. 
\par From the adaptive soundness of $\qnizk$, the probability that $(x',w') \in \rel(\lang_{\ell})$ and $x^* = x'$ is negligibly close to $\varepsilon$. \\ 

\noindent \underline{$\hybrid_{2}$}: $\cA$ is given $\pi_1,\ldots,\pi_q$, where the proofs are generated as follows: first compute $(\qnizk.\pi_1,\allowbreak \ldots,\allowbreak \qnizk.\pi_{q}) \leftarrow \qnizk.\simr(\crs,\td,\{x_i\}_{i \in [q]})$, where $(\crs,\td) \leftarrow \qnizk.\fkgen(1^{\secparam})$. Then compute $\ct_{w_i} \leftarrow \enc(\pk,(x_i,w_i))$ for every $i \in [q]$. Set $\pi_i=(\qnizk.\pi_i,\ct_{w_i})$. The rest of this hybrid is defined as in $\hybrid_1$. 
\par From the adaptive zero-knowledge property of $\qnizk$, the probability that $(x',w') \in \rel(\lang_{\ell})$ and $x^* = x'$ in the hybrid $\hybrid_{2.j}$ is still negligibly close to $\varepsilon$. \\

\noindent \underline{$\hybrid_3$}: This hybrid is defined similar to the previous hybrid except that $\ct_{w_i} \leftarrow \enc(\pk,0)$, for every $i \in [q]$.  
\par From the previous hybrids, it follows that $\ct'_w \neq \ct_{w_i}$, for all $i\in [q]$ with probability negligibly close to $\varepsilon$; this follows from the fact that $\qpke$ is perfectly correct and the fact that $x^* = x'$ holds with probability negligibly close to $\varepsilon$. Thus, we can invoke  q-CCA2-security of $\pke$, the probability that $(x',w') \in \rel(\lang_{\ell})$ is still negligibly close to $\varepsilon$.  \\

\noindent But note that $\hybrid_3$ corresponds to the simulated experiment and thus we just showed that the probability that we can recover $w'$ such that $(x',w') \in \rel(\lang_{\ell})$ is negligibly close to $\varepsilon$. 

\end{proof}

\noindent The primitives in the above proposition can be instantiated from sub-exponential QLWE by starting with  existing LWE-based constructions of the above primitive and suitably setting the parameters of the underlying LWE assumption. We state the following propositions without proof. 

\begin{proposition}[\cite{PS19}]
Assuming $\ell$-sub-exponential QLWE (Section~\ref{sec:prelims:lwe}), there exists a $\ell$-sub-exponential qNIZK for NP.
\end{proposition}

\begin{remark}
To be precise, the work of~\cite{PS19} constructs a NIZK system satisfying adaptive multi-theorem zero-knowledge and non-adaptive soundness. However, non-adaptive soundness implies adaptive soundness using complexity leveraging; the reduction incurs a security loss of $2^{\ell}$. 
\end{remark}

\begin{proposition}[~\cite{PW11}]
Assuming $\ell$-sub-exponential QLWE (Section~\ref{sec:prelims:lwe}), there exists a $\ell$-sub-exponential q-CCA2-secure PKE scheme. 
\end{proposition}

\end{proof}
}
\newcommand{\key}{\mathrm{sk}}
\newcommand{\num}{q}
\section{Secure Software Leasing (SSL)}
\label{sec:SSL}

\noindent We present the definition of secure software leasing schemes. A secure software leasing (SSL) scheme for a class of circuits $\cktclass=\{\cktclass_{\secparam}\}_{\secparam \in \mathbb{N}}$ consists of the following QPT algorithms. 
\begin{itemize}\setlength\itemsep{1em}
    \item {\bf Private-key Generation,  $\gen(1^{\secparam})$}: On input security parameter $\secparam$, outputs a private key $\key$.  
    \item {\bf Software Lessor, $\lessor\left(\key,C\right)$}: On input the private key $\key$ and a $\poly(n)$-sized classical circuit $C \in \cktclass_{\secparam}$, with input length $n$ and output length $m$, outputs a quantum state $\rho_C$. 
    \item {\bf Evaluation, $\run(\rho_C,x)$}: On input the quantum state $\rho_C$ and an input $x \in \{0,1\}^{n}$, outputs $y$, and some state $\rho'_{C,x}$.
    \item {\bf Check of Returned Software, $\return\left( \key,\rho_C^* \right)$}: On input the private key $\key$ and the state $\rho_C^*$, it checks if $\rho_C^*$ is a valid leased state and if so it outputs 1, else it outputs 0. 
\end{itemize}

\paragraph{Setup.} In this work, we only consider SSL schemes in the setup model. In this model, all the lessors in the world have access to a common reference string generated using a PPT algorithm $\setup$. The difference between $\setup$ and $\gen$ is that $\setup$ is run by a trusted third party whose output is used by all the lessors while $\gen$ is executed by each lessor separately. We note that our impossibility result rules out SSL schemes for all quantum unlearnable class of circuits even in the setup model. 
\par We define this notion below. 

\begin{definition}[SSL with Setup]
A secure software leasing scheme $(\gen, \lessor, \run, \return)$ is said to be in the common reference string (CRS) model if additionally, it has an algorithm $\setup$ that on input $1^{\secparam}$ outputs a string $\crs$. 
\par Moreover, the algorithm $\gen$ now takes as input $\crs$ instead of $1^{\secparam}$ and $\run$ additionally takes as input $\crs$.
\end{definition}

\noindent We require that a SSL scheme, in the setup model, satisfies the following properties. 

\begin{definition}[Correctness]
A SSL scheme $(\setup,\gen,\lessor,\run,\return)$ for $\cktclass=\{\cktclass_{\secparam}\}_{\secparam \in \mathbb{N}}$ is $\varepsilon$-correct if for all $C \in \cktclass_{\secparam}$, with input length $n$, the following two properties holds for some negligible function $\varepsilon$: \begin{itemize} 
\item Correctness of Run: 
\[{\Pr}\left[\forall x \in \{0,1\}^n,\ y=C(x) \ :\ \substack{ \crs \leftarrow \setup\left( 1^{\secparam} \right),\\ \ \\ \key \leftarrow \gen(\crs),\\ \ \\ \rho_C \leftarrow \lessor(\sk,C)\\ \ \\ \left(\rho_{C,x}',y \right) \leftarrow \run \left( \crs,\rho_C,x \right)} \right]\geq 1-\varepsilon
\]
\item Correctness of Check: 
\[{\Pr}\left[\return \left( \key,\rho_C \right)=1\ :\ \substack{ \crs \leftarrow \setup\left( 1^{\secparam} \right),\\ \ \\ \key \leftarrow \gen(\crs) \\ \ \\ \rho_C \leftarrow \lessor(\key,C)} \right]\geq 1-\varepsilon
\]
\end{itemize}  
\end{definition}


\paragraph{Reusability.} A desirable property of a SSL scheme is reusability: the lessee should be able to repeatedly execute $\run$ on multiple inputs. A SSL scheme does not necessarily guarantee reusability; for instance, $\run$ could destroy the state after executing it just once. But fortunately, we can transform this scheme into another scheme that satisfies reusability.   

We define reusability formally. 

\begin{definition}(Reusability)
A SSL scheme $(\setup,\gen,\lessor,\run,\return)$ for $\cktclass=\{\cktclass_{\secparam}\}_{\secparam \in \mathbb{N}}$ is said to be reusable if for all $C\in \cktclass$ and for all $x\in\{0,1\}^n$, $$\trD{\rho_{C,x}'}{\rho_C} \leq \negl(\secparam).$$
\end{definition}

\noindent Note that the above requirement $\trD{\rho_{C,x}'}{\rho_C} \leq \negl(\secparam)$ would guarantee that an evaluator can evaluate the leased state on multiple inputs; on each input, the original leased state is only disturbed a little which means that the resulting state can be reused for evaluation on other inputs.


\par The following proposition states that any SSL scheme can be converted into one that is reusable. 

\begin{proposition}
Let $(\setup,\gen,\lessor,\run,\return)$ be any SSL scheme (not necessarily satisfying the reusability condition). Then, there is a QPT algorithm $\run'$ such that $(\setup,\gen,\lessor,\run',\return)$ is a reusable SSL scheme.
\end{proposition}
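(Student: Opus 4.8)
The plan is to use the Almost As Good As New Lemma (Lemma~\ref{clm:ASGAN}): run $\run$ coherently, read off the output, and then uncompute. First I would observe that, without loss of generality, $\run$ can be taken to be a unitary $U$ acting on the leased state $\rho_C$ together with a fresh ancilla register initialized to $\ket{0}\bra{0}$, followed by a computational-basis measurement of a designated output register; any general quantum circuit can be brought to this form by deferring all intermediate measurements and absorbing the discarded qubits into the ancilla. This is exactly the setup under which Lemma~\ref{clm:ASGAN} is phrased.

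Next, by $\varepsilon$-correctness of the underlying scheme, for an honestly generated state $\rho_C \leftarrow \lessor(\key,C)$ and any input $x\in\{0,1\}^n$, measuring the output register of $U(\rho_C \otimes \ket{0}\bra{0})U^\dagger$ in the computational basis yields $y = C(x)$ with probability at least $1-\varepsilon$. Let $\Pi_0 = \ketbra{C(x)}{C(x)}$ on the output register (tensored with the identity on the remaining registers) and $\Pi_1 = I - \Pi_0$. I then define $\run'(\crs,\rho_C,x)$ as follows: apply $U$, measure the output register to obtain a string $y$, then apply $U^\dagger$ and trace out the ancilla register; output $y$ together with the resulting state $\rho'_{C,x}$. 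Applying Lemma~\ref{clm:ASGAN} with $\rho=\rho_C$, the measurement $(U,\Pi_0,\Pi_1)$, and parameter $\varepsilon$ gives $\trD{\rho_C}{\rho'_{C,x}} \leq \sqrt{\varepsilon}$, which is negligible since $\varepsilon$ is; this is precisely the reusability condition.

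It then remains to check that $(\setup,\gen,\lessor,\run',\return)$ is still a correct SSL scheme. Correctness of $\return$ is unaffected because $\setup$, $\gen$, $\lessor$, and $\return$ are unchanged. Correctness of $\run'$ holds because the probability that $\run'$ outputs a value $y \neq C(x)$ equals the probability that the (coherent) computational-basis measurement inside $\run'$ returns a wrong value, which is at most $\varepsilon$ by the correctness of the original $\run$.

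I do not expect a serious obstacle here; the two points requiring a bit of care are (i) justifying that $\run$ may be assumed to have the purified unitary-then-measure form demanded by Lemma~\ref{clm:ASGAN}, which is standard, and (ii) observing that although the reusability definition and the lemma are stated for a single evaluation, iterating $\run'$ over $\poly(\secparam)$ distinct inputs keeps the accumulated error negligible, since trace distance is non-increasing under quantum channels and one can apply the triangle inequality across the polynomially many invocations. I would include a short remark to this effect so that the scheme genuinely supports polynomially many reuses.
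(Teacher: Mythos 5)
Your proposal is correct and follows essentially the same route as the paper: the paper's proof likewise invokes the Almost As Good As New Lemma (Lemma~\ref{clm:ASGAN}) to implement $\run$ coherently, extract $C(x)$, and uncompute to recover a state within $\sqrt{\varepsilon}$ trace distance of $\rho_C$. Your additional remarks on purifying $\run$ and on accumulating error over polynomially many invocations are fine elaborations of points the paper treats only implicitly.
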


\begin{proof}
For any $C\in \cktclass$ and for any $x\in\{0,1\}^n$, we have that $\run(\crs,\rho_C,x)$ outputs $C(x)$ with probability $1-\varepsilon$.  By the Almost As Good As New Lemma (Lemma~\ref{clm:ASGAN}),there is a way to implement $\run$ such that it is possible to obtain $C(x)$, and then recover a state $\widetilde{\rho_C}$ satisfying $\trD{\widetilde{\rho_C}}{\rho_C} \leq \sqrt{\varepsilon}$. We let $\run'$ be this operation.
\end{proof}

\noindent Thus, it suffices to just focus on the correctness property when constructing a SSL scheme.  

\subsection{Security}

Our notion intends to capture the different scenarios discussed in the introduction. In particular, we want to capture the security guarantee that given an authorized (valid) copy $\rho_C$, no pirate can output two authorized copies. We will assume that these valid copies contain a quantum state and a classical string. The $\run$ algorithm expects valid copies to have this form; without loss of generality, the classical part can always be measured before executing $\run$.  

\subsubsection{Finite-Term Lessor Security}
 We require the following security guarantee: suppose a QPT adversary (pirate) receives a leased copy of $C$ generated using $\lessor$; denote this by $\rho_C$. We require that the pirate cannot produce a bipartite state $\sigma^*$ on registers $\Reg_1$ and $\Reg_2$, such that $\sigma_1^*:=\tr_2[\sigma^*]$ passes the verification by $\return$, and the resulting \textit{post-measurement} state on $\Reg_2$, which we denote by $P_2(\sigma^*)$, still computes $C$ by $\run(P_2(\sigma^*),x)=C(x)$.

Before formally stating the definition, let us fix some notation.  We will use the following notation for the state that the pirate keeps after the initial copy has been returned and verified. If the pirate outputs the bipartite state $\sigma^*$, then we will write 
$$P_2(\sk, \sigma^*) \propto \tr_1 \left[ \Pi_{1} [\return(\sk,\cdot)_1 \otimes I_2  \left(\sigma^*\right)]\right] $$
for the state that the pirate keeps \textit{after} the first register has been returned and verified. Here, $\Pi_1$ denotes projecting the output of $\return$ onto $1$, and where $\return(\sk,\cdot)_1\otimes I_2(\sigma^*)$ denotes applying the $\return$ QPT onto the first register, and the identity on the second register of $\sigma^*$. In other words, $P_2(\sk, \sigma^*)$ is used to denote the post-measurement state on $\Reg_2$ conditioned on $\return(\sk,\cdot)$ accepting on $\Reg_1$. 

\begin{definition}[Finite-Term Perfect Lessor Security]
We say that a SSL scheme $(\setup,\gen,\lessor,\run,\allowbreak \return)$ for a class of circuits $\cktclass=\{\cktclass_{\secparam}\}_{\secparam \in \mathbb{N}}$ is said to satisfy {\bf  $(\beta,\gamma,\distrc)$-perfect finite-term lessor security}, with respect to a distribution $\distrc$ on $\cktclass$, if for every QPT adversary $\adversary$ (pirate) that outputs a bipartite (possibly entangled) quantum state on two registers, $\Reg_1$ and $\Reg_2$, the following holds: 
$$\underset{}{\prob}\left[ \substack{  \return\left(\key,\sigma^*_1\right)=1 \\ \ \\ \bigwedge\\ \ \\ \forall x,\ \prob\left[ \run(\crs,P_2(\sk, \sigma^*),x) = C(x) \right] \geq \beta}\ :\ \substack{\crs \leftarrow \setup\left(1^{\secparam} \right),\\ \ \\C \leftarrow \distrc(\secparam),\\ \ \\ \key \leftarrow \gen(\crs),\\ \ \\ \rho_{C} \leftarrow \lessor(\key,C),\\ \ \\ \sigma^* \leftarrow \adversary(\crs,\rho_C) \\ \ \\ \sigma^*_1 = \tr_2[\sigma^*]} \right] \leq \gamma$$

\end{definition}

\begin{remark}
The reason why we use the word perfect here is because we require $\run(P_2(\sigma^*),x)=C(x)$ to hold with probability at least $\beta$ on {\em every} input $x$. Note that $\run$ is not necessarily deterministic (for instance, it could perform measurements) and thus we allow it to output the incorrect value with some probability. 
\end{remark}

\subsection{Infinite-Term Lessor Security}

\newcommand{\dist}{\mathcal{D}}

\noindent In the infinite-term lease case, we want the following security notion: given $(\sigma_1^*, \sigma_2^*)$ generated by a pirate $\cA(\rho_C)$, guarantees that if one copy satisfies the correctness, $$\forall x \Pr[\run(\crs,\sigma_1^*,x)=C(x)] \geq \beta $$
for some non-negligible $\beta$, then after successfully evaluating $C(x)$ using $\sigma_1^*$ on any input $x^*$, it should be the case that the resulting state on the second register, which we will denote by $\cE_{x^*}^{(2)}(\sigma^*)$, cannot also satisfy
$$ \forall x \Pr[\run(\crs,\cE_{x^*}^{(2)}(\sigma^*),x)=C(x)] \geq \beta. $$
In other words, if one of the copies has already been succesful in computing $C$ in $\run$, then there will be inputs in which the second copy cannot evaluate $C$ with better than negligible probability. 
\par This security notion would rule out the following scenario.  Eve gets a copy of $\rho_C$ and gives $\sigma_1^*$ to Alice and $\sigma_2^*$ to Bob. Alice now chooses an input $x_A$, and Bob an input $x_B$. It cannot be the case that for all inputs $(x_A,x_B)$ they choose, they will compute $(C(x_A), C(x_B))$ with non-negligible probability.

\begin{definition}[Infinite-term Perfect Lessor Security]
 We say that a SSL scheme $(\setup, \gen, \allowbreak \lessor, \allowbreak \run, \return)$ for a class of circuits $\cktclass = \{C_\secparam\}_{\secparam\in \mathbb{N}}$ is said to be $(\gamma, \beta, \distrc)$-\textbf{infinite-term perfect lessor secure}, with respect to a distribution $\distrc$, if for every QPT adversary $\adversary$ (pirate) that outputs a bipartite (possibly entangled) quantum state on two registers, $\Reg_1$ and $\Reg_2$, the following holds:
 $$\underset{}{\prob}\left[  \forall x, \left( \substack{\prob\left[ (\run(\crs,x,\sigma_1^*) = C(x)\right] \geq \beta \\ \ \\ \bigwedge \\ \ \\  \forall x', \prob \left[\run(\crs, x', \cE_{x}^{(2)}(\sigma^*)) = C(x')  \right] \geq \beta  } \right)\ :\ \substack{\crs \leftarrow \setup\left(1^{\secparam} \right),\\ \ \\C \leftarrow \distrc(\secparam),\\ \ \\ \key \leftarrow \gen(\crs),\\ \ \\ \rho_{C} \leftarrow \lessor(\key,C),\\ \ \\ \sigma^* \leftarrow \adversary(\crs,\rho_C) \\ \ \\ \sigma_1^* = \tr_2[\sigma^*]} \right] \leq \gamma.$$
\end{definition}

 \begin{remark}
 Both finite and infinite-term security can be extended to the case where the pirate is given multiple copies, $\rho_C^{\otimes m}$, where $\rho_C$ is the output of $\lessor$ on $C$. In the finite-term case, we require the following: if a pirate outputs $m+1$ copies and moreover, the $m$ initial copies are returned and succesfully checked, computing $\run$ on the remaining copy (that the pirate did not return) will not be functionally equivalent to the circuit $C$. In the infinite-term case, the pirate cannot output $m+1$ copies where $\run$ on each of the $m+1$ copies can be used to successfully compute $C$. 
 \end{remark}

\newcommand{\distrf}{\mathcal{D}_{\fclass}}
\newcommand{\newbfckt}{\mathbf{C}}
\newcommand{\newckt}{D}
\section{Impossibility of SSL}\label{sec:imp}
To prove the impossibility of SSL, we first construct {\em de-quantumizable} class of circuits. 

\subsection{De-Quantumizable Circuits: Definition}

A de-quantumizable class of circuits $\cktclass$ is a class of circuits for which there is a QPT algorithm that given any quantum circuit with the same functionality as $C \in \cktclass$, it finds a (possibly different) classical circuit $C'\in \cktclass$ with the same functionality as $C$. Of course if $\cktclass$ is learnable, then it could be possible to just observe the input-output behavior of the quantum circuit to find such a $C'$. To make this notion meaningful, we additionally impose the requirement that $\cktclass$ needs to be quantum unlearnable; given only  oracle access to $C$, any quantum algorithm can find a circuit (possibly a quantum circuit and an auxiliary input state $\rho$) with the same functionality as $C$ with only negligible probability. 
\begin{definition}
We say that a collection of QPT algorithms, $\{U_C,\rho_C\}_{C \in \cktclass}$, computes $\cktclass$ if for any $C\in \cktclass$, with input length $n$ and output length $m$,  $\rho_C$ is a $\poly(n)$-qubits auxiliary state, and $U_C$ a QPT algorithm satisfying that for all $x \in \{0,1\}^n$, $$\prob[U_C(\rho_C,x)=C(x)] \geq 1-\negl(\secparam),$$ where the probability is over the measurement outcomes of $U_C$. We also refer to $(U_C,\rho_C)$ as an efficient quantum implementation of $C$.
A class of classical circuits $\cktclass$, associated with a distribution $\distrc$, is said to be {\em de-quantumizable} if the following holds:
\begin{itemize}
    \item {\bf Efficient de-quantumization}:  There is a QPT algorithm $\cB$ such that, for any $\{U_C,\rho_C\}_{C \in \cktclass}$ that computes $\cktclass$, the following holds:
    
    $$\prob\left[ \substack{C'\in \cktclass \\ \bigwedge \\ \forall x\in \{0,1\}^n, C(x)=C'(x)}\ :\ \substack{C \leftarrow \distrc \\ \ \\ C'(x)\leftarrow \cB(U_C,\rho_C)} \right] \geq 1-\negl(\secparam)$$
    \item {\bf $\nu$-Quantum Unlearnability}: For any QPT adversary $\cA$, the following holds: 
    $$\prob\left[ \forall x, \prob[U^*(\rho^*,x)=C(x)]\geq \nu \ :\ \substack{C \leftarrow \distrc \\ (U^*,\rho^*) \leftarrow \cA^{C(\cdot)}(1^\secparam)} \right] \leq \negl(\secparam)$$
\end{itemize}
\end{definition}

\begin{remark}\label{rmk:reuse} By the Almost As Good As New Lemma (Lemma~\ref{clm:ASGAN}),
we can assume that the QPT algorithm $U_C$ also output a state $\rho'_{C,x}$ that is negligibly close in trace distance to $\rho_C$, i.e. for all $C\in \cktclass$ and $x \in \{0,1\}^n$ it holds that
$$\prob[U_C(\rho_C,x)=(\rho_{C,x}',C(x))] \geq 1-\negl(\secparam)$$
and $\trD{\rho_{C,x}'}{\rho_C}\leq \negl(\secparam)$.
\end{remark}

\begin{remark}
We emphasize that the efficient de-quantumization property requires that the circuit $C'$ output by the adversary should be in the same circuit class $\cktclass$.  
\end{remark}

\begin{remark}
We can relax the unlearnability condition in the above definition to instead have a distribution over the inputs and have the guarantee that the adversary has to output a circuit $(U^*,\rho^*)$ such that it agrees with $C$ only on inputs drawn from this distribution. Our impossibility result will also rule out this relaxed unlearnability condition; however, for simplicity of exposition, we consider the unlearnability condition stated in the above definition. 
\end{remark}


From the above definition, we can see why a de-quantumizable class $\cktclass$ cannot be copy-protected, as there is a QPT $\cB$ that takes any $(U_C,\rho_C)$ efficiently computing $C$, and outputs a functionally equivalent {\em classical} circuit $C'$, which can be copied. In the following theorem we will show that if every circuit $C \in \cktclass$ have a unique representation in $\cktclass$, then it is also not possible to have SSL for this circuit class. To see why we need an additional condition, lets consider a QPT pirate $\cA$ that wants to break SSL given $(\run,\rho_C)$ computing $C \in \cktclass$. Then, $\cA$ can run $\cB$ to obtain a circuit $C' \in \cktclass$, but in the proccess it could have destroyed $\rho_C$, hence it wouldn't be able to return the initial copy. If $\cB$ takes as input $(\run,\rho_C)$ and outputs a {\em fixed} $C'$ with probability neglibly close to $1$, then by the Almost As Good As New Lemma, it could uncompute and recover $\rho_C$. The definition of de-quantumizable class does not guarantee that $\cB$ will output a fixed circuit $C'$, unless each circuit in the family has a unique representation in $\cktclass$.  If each circuit has a unique representation, the pirate would obtain $C'=C$ with probability neglibly close to $1$, and uncompute to recover $\rho_C$. At this point, the pirate can generate its own leasing keys $\sk'$, and run $\lessor(\sk',C')$ to obtain a valid leased state $\rho'_{C'}$. The pirate was able to generate a new valid leased state for $C$, while preserving the initial copy $\rho_C$, which it can later return to the lessor.

\begin{theorem}\label{thm:formal}
Let $(\cktclass,\distrc)$ be a de-quantumizable class of circuits in which every circuit in the support of $\distrc$ has a unique representation in $\cktclass$. Then there is no SSL scheme $(\setup,\gen,\allowbreak \lessor,\allowbreak \run,\return)$ (in CRS model) for $\cktclass$ satisfying $\varepsilon$-correctness and $( \beta,\gamma,\distrc)$-perfect finite-term lessor security for any negligible $\gamma$, and any $\beta \leq (1-\varepsilon)$. 
\end{theorem}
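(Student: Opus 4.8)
The plan is to show that any such SSL scheme yields a successful finite‑term pirate, contradicting $(\beta,\gamma,\distrc)$‑perfect finite‑term lessor security. The first observation is that a leased state together with the public evaluation procedure is already an efficient quantum implementation of the leased circuit: fix $\crs$, set $U_C := \run(\crs,\cdot,\cdot)$ (a single algorithm, the same for every $C$), and let $\rho_C \leftarrow \lessor(\sk,C)$. By $\varepsilon$‑correctness of $\run$ (with $\varepsilon$ negligible), $\Pr[U_C(\rho_C,x)=C(x)] \geq 1-\varepsilon$ for every $x$, so $\{U_C,\rho_C\}_{C\in\cktclass}$ computes $\cktclass$ in the sense of the de‑quantumization definition. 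Hence the QPT algorithm $\cB$ guaranteed by that definition applies: on input $(U_C,\rho_C)$ it outputs, with probability $1-\negl(\secparam)$, a circuit $C'\in\cktclass$ functionally equivalent to $C$.

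The crucial point where the unique‑representation hypothesis enters is next. Since $C$ lies in the support of $\distrc$ and has a unique representation in $\cktclass$, any $C'\in\cktclass$ functionally equivalent to $C$ must equal $C$ as a circuit description. Therefore $\cB(U_C,\rho_C)$ outputs the \emph{fixed} string $C$ with probability $1-\negl(\secparam)$. Now I invoke the Almost As Good As New Lemma (Lemma~\ref{clm:ASGAN}) exactly as in the proof of the Reusability proposition: implement $\cB$ coherently on a single copy of $\rho_C$ (using Remark~\ref{rmk:reuse} to re‑use $\rho_C$ across internal calls to $U_C$), measure only the output register to read off $C$, and then uncompute. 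Because the measured outcome is the fixed value $C$ except with probability $\negl(\secparam)$, this recovers a state $\widetilde{\rho_C}$ with $\trD{\widetilde{\rho_C}}{\rho_C}\leq\negl(\secparam)$ while simultaneously having learned $C$.

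The pirate $\cA$ then proceeds on input $(\crs,\rho_C)$ as follows: run the coherent‑and‑uncompute version of $\cB$ to obtain $(C,\widetilde{\rho_C})$; sample fresh keys $\sk' \leftarrow \gen(\crs)$ and compute a fresh leased state $\rho_C' \leftarrow \lessor(\sk',C)$; output the bipartite state $\sigma^* = \widetilde{\rho_C}\otimes\rho_C'$, with $\Reg_1$ holding $\widetilde{\rho_C}$ and $\Reg_2$ holding $\rho_C'$. Since $\trD{\widetilde{\rho_C}}{\rho_C}\leq\negl(\secparam)$, correctness of $\return$ gives $\Pr[\return(\sk,\widetilde{\rho_C})=1]\geq 1-\varepsilon-\negl(\secparam)$. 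Because $\Reg_1$ and $\Reg_2$ are unentangled (the fresh copy uses independent randomness), the post‑verification state $P_2(\sk,\sigma^*)$ on $\Reg_2$ is exactly $\rho_C'$, and by $\varepsilon$‑correctness of $\run$ together with $\beta \leq 1-\varepsilon$ we have $\Pr[\run(\crs,\rho_C',x)=C(x)]\geq\beta$ for every $x$, except with probability $\varepsilon$ over the generation of $\rho_C'$. Hence $\cA$ satisfies the winning event of the finite‑term game with probability at least $(1-\varepsilon-\negl(\secparam))(1-\varepsilon) = 1 - O(\varepsilon) - \negl(\secparam)$, which is non‑negligible and in particular exceeds any negligible $\gamma$ — a contradiction.

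The main obstacle is making the uncomputation step of the second paragraph rigorous. The de‑quantumization definition only promises \emph{some} functionally‑equivalent circuit in $\cktclass$, not a canonical one; without uniqueness, $\cB$'s output could be genuinely random over its internal measurements, and then Almost As Good As New yields nothing — reading off $\cB$'s answer could irreversibly collapse $\rho_C$, so the pirate could not return an intact first copy. The unique‑representation hypothesis is precisely what upgrades ``$\cB$ outputs a good circuit with overwhelming probability'' to ``$\cB$ outputs a fixed string with overwhelming probability,'' which is the exact precondition for running $\cB$ reversibly and recovering $\rho_C$. Secondary points — the multi‑bit, ``for all $x$ simultaneously'' form of the correctness guarantee, and the independence of the two output registers — are routine once recovery of $\rho_C$ is established.
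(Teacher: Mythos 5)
Your proposal is correct and follows essentially the same route as the paper's proof: use correctness to view $(\run,\rho_C)$ as an efficient quantum implementation, apply the de-quantumizer, use the unique-representation hypothesis to fix the output to $C$ so the Almost As Good As New Lemma lets you recover $\widetilde{\rho_C}$, then self-lease $C$ under fresh keys and output $\widetilde{\rho_C}\otimes\rho'_C$ to win the finite-term game. Your treatment is, if anything, slightly more explicit than the paper's about the probability bookkeeping and about why uniqueness is exactly what makes the uncomputation step sound.
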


\begin{proof}

Consider the QPT algorithm $\cA$ (pirate) that is given $\rho_C \leftarrow \lessor(\key,C)$ for some $C \leftarrow \distrc$. The pirate will run $\cB$, the QPT that de-quantumizes $(\cktclass,\distrc)$, on input $(\run,\rho_C)$ to obtain a functionally equivalent circuit $C'\in \cktclass$. Because $C$ has a unique representation in $\cktclass$, we have $C'=C$. Since this succeeds with probability neglibly close to $1$, by the Almost As Good As New Lemma~\ref{clm:ASGAN}, it can all be done in a way such that it is possible to obtain $C$ and to recover a state $\widetilde{\rho_C}$ satisfying $\trD{\widetilde{\rho_C}}{\rho_C} \leq \negl(\secparam)$. At this point, the pirate generates its own key $\key'\leftarrow \gen(\crs)$, and prepares $\rho'_{C} \leftarrow \lessor(\key',C)$. It outputs $\widetilde{\rho_C} \otimes  \rho'_{C}$.

 This means that $\rho'_{C}$ is a valid leased state and by correctness of the SSL scheme, 
\[{\Pr}\left[\forall x \in \{0,1\}^n,\  \run \left( \crs,\rho'_{C},x \right)=C(x)\ :\ \substack{ \crs \leftarrow \setup\left( 1^{\secparam} \right),\\ \ \\ \key' \leftarrow \gen(\crs),\\ \ \\ \rho'_{C} \leftarrow \lessor(\sk',C)} \right]\geq 1-\varepsilon
\]

\noindent Furthermore, since $\trD{\widetilde{\rho_C}}{\rho_C} \leq \negl(\secparam)$, the probability that $\widetilde{\rho_C}$ passes the return check is neglibly close to $1$. Putting these together, we have

$$\prob\left[ \substack{  \return\left(\key,\widetilde{\rho_C}\right)=1 \\ \ \\ \bigwedge\\ \ \\ \forall x,\ \prob\left[ \run(\crs,\rho'_{C},x) = C(x) \right] \geq 1-\varepsilon}\ :\ \substack{\crs \leftarrow \setup\left(1^{\secparam} \right),\\ \ \\C \leftarrow \distrc(\secparam),\\ \ \\ \key \leftarrow \gen(\crs),\\ \ \\ \rho_{C} \leftarrow \lessor(\key,C),\\ \ \\ \widetilde{\rho_C}\otimes \rho'_{C} \leftarrow \adversary(\crs,\rho_C)} \right] \geq 1-\negl(\secparam) $$
\end{proof}

\subsection{De-quantumizable Circuit Class: Construction}
\noindent All that remains in the proof of impossibility of SSL is the construction of a de-quantumizable circuits class $(\cktclass, \distrc)$ in which every circuit in the support of $\distrc$ has a unique representation in $\cktclass$. We begin with an overview of the construction.

\paragraph{Constructing de-quantumizable Circuits: Challenges.}  

\noindent The starting point is the seminal work of Barak et al. \cite{BGIRSVY01}, who demonstrated a class of functions, where each function is associated with a secret key $k$, such that: (a) {\em Non-black-box secret extraction}: given non-black-box access to any classical circuit implementation of this function, the key can be efficiently recovered, (b) {\em Classical Unlearnability of secrets}: but given black-box access to this circuit, any classical adversary who can only make polynomially many queries to the oracle cannot recover the key.  
\par While the result of Barak et al. has the ingredients suitable for us, it falls short in many respects:
\begin{itemize}
    \item The proof of non-black-box secret extraction crucially relies upon the fact that we are only given a classical obfuscated circuit. In fact there are inherent difficulties that we face in adapting Barak et al. to the quantum setting; see~\cite{alagic2016quantum}. 
    
    \item As is the case with many black-box extraction techniques, the proof of Barak et al. involves evaluating the obfuscated circuit multiple times in order to recover the secret. As is typically the case with quantum settings, evaluating the same circuit again and again is not always easy -- the reason being that evaluating a circuit once could potentially destroy the state thus rendering it impossible to run it again. 
    \item Barak et al. only guarantees extraction of secrets given black-box access to the classical circuit implementation of the function. However, our requirement is qualitatively different: given a quantum implementation of the classical circuit, we need to find a (possible different) classical circuit with the same functionality.
    \item Barak et al.'s unlearnability result only ruled out adversaries who make classical queries to the oracle. On the other hand, we need to argue unlearnability against QPT adversaries who can perform superposition queries to the oracle. 
\end{itemize}
\noindent Nonetheless, we show that the techniques introduced in a simplified version of Barak\footnote{See~\cite{BP13} for a description of this simplified version.} can be suitably adapted for our purpose by using two tools: quantum fully homomorphic encryption (QFHE) and lockable obfuscation. Combining QFHE and lockable obfuscation for the purpose of secret extraction has been recently used in a completely different context, that of building zero-knowledge protocols ~\cite{BS19,AL19} (and in classical setting was first studied by~\cite{BKP19}).

\paragraph{Construction.} We  present the construction of de-quantumizable circuits.

\begin{theorem}\label{thm:impossibility} Assuming the quantum hardness of learning with errors (QLWE), and assuming that there is a $\qfhe$ that supports evaluation of arbitrary polynomial-sized quantum circuits (see \ref{sec:prelims:qfhe}), and has the following two properties: (a)   ciphertexts have classical plaintexts have classical descriptions and, (b) classical ciphertexts can be decrypted using a classical circuit,
\par there exists a de-quantumizable class of circuits $(\cktclass, \distrc)$. 
\end{theorem}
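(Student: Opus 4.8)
The plan is to realize the circuit class from the construction overview and verify the two defining properties. For each $\secparam$ (input length $n$, output length $m$), $\distrc$ samples $a\xleftarrow{\$}\{0,1\}^n$, $b\xleftarrow{\$}\{0,1\}^m$, a $\qfhe$ key pair $(\pk,\sk)$ (with $\pk$ containing the classical evaluation key, a property satisfied by the schemes of~\cite{mahadev2018classical,Bra18}), fresh randomness $r$, sets $\ct_a:=\qfhe.\enc(\pk,a;r)$, forms the multi-bit compute-and-compare circuit $\lockC[\qfhe.\dec(\sk,\cdot),b,\sk|r]$ --- a legal classical circuit by property~(b), with $\lobf$ instantiated from QLWE --- obfuscates it as $\cO\leftarrow\lobf(1^{\secparam},\lockC[\qfhe.\dec(\sk,\cdot),b,\sk|r])$, and outputs the classical circuit $C_{a,b,r,\pk,\cO}$ that returns $\ct_a|\cO|\pk$ on input $0^n$, returns $b$ on input $a$, and returns $0\cdots0$ otherwise (with lengths fixed consistently); $\cktclass$ is the support of $\distrc$. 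For efficient de-quantumization, given any $\{U_C,\rho_C\}$ computing $C=C_{a,b,r,\pk,\cO}$, the extractor $\cB$: (i) runs $U_C(\rho_C,0^n)$ to read off $(\ct_a,\cO,\pk)$, recovering by the Almost As Good As New Lemma (Remark~\ref{rmk:reuse}) an auxiliary state $\rho'_C$ with $\trD{\rho'_C}{\rho_C}\le\negl(\secparam)$; (ii) computes $\qfhe.\enc(\pk,\rho'_C)$ and homomorphically evaluates the known quantum circuit $U_C$ on $\big(\qfhe.\enc(\pk,\rho'_C),\ct_a\big)$, obtaining $\ct_b:=\qfhe.\enc\big(\pk,U_C(\rho'_C,a)\big)$, which by property~(a) and $U_C(\rho'_C,a)=C(a)=b$ (up to $\negl$) is a classical ciphertext decrypting to $b$; (iii) runs $\cO(\ct_b)=\sk|r$ (the decryption hits the lock); (iv) uses $\sk$ to decrypt $\ct_a,\ct_b$, recovering $a,b$, and outputs $C_{a,b,r,\pk,\cO}$. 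Every step succeeds with probability $1-\negl(\secparam)$, and the output is in $\cktclass$ and functionally equal to $C$.

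Note that $\cB$ never has a classical description of $C$ --- only the quantum implementation $(U_C,\rho_C)$ --- so step~(ii) crucially homomorphically evaluates $U_C$ on an encrypted copy of its own auxiliary state; this is exactly what mere oracle access to $C$ cannot do, which is the intuition behind unlearnability. To make it precise I would prove $\nu$-quantum unlearnability for every non-negligible $\nu$ by first reducing to a hiding statement: if a QPT $\cA^{C(\cdot)}$ outputs $(U^*,\rho^*)$ computing $C$ on every input with probability $\ge\nu$, then applying the same step~(ii)--(iv) homomorphic-evaluation trick to $(U^*,\rho^*)$ (after one oracle query at $0^n$ for $\pk,\ct_a,\cO$) yields a polynomial-query QPT oracle algorithm that recovers $\sk$, and hence $a$, with non-negligible probability. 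So it suffices to show that no polynomial-query QPT algorithm with oracle access to $C\leftarrow\distrc$ outputs $a$.

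This last statement is where superposition queries make the argument delicate, and I expect it to be the main obstacle. One wants to replace, inside the oracle, $\cO$ by the lockable-obfuscation simulator $\simulator(1^{\secparam},1^{|\lockC|})$ (valid since the lock $b$ is uniform and independent of $\qfhe.\dec(\sk,\cdot)$, $\sk$, $r$, $\pk$) and $\ct_a$ by $\qfhe.\enc(\pk,0^n)$ (valid by $\qfhe$ semantic security); in the resulting oracle the only input with nontrivial behavior is the single uniformly random point $a$, so outputting $a$ would need $\Omega(2^{n/2})$ queries by a BBBV-style query lower bound, contradicting polynomial-query-ness. The catch is that both substitutions are sound only if the distinguisher never queries the oracle at $a$ (such a query reveals $b$), which is essentially what is being proved. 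This is resolved by a joint induction on the query index interleaved with a One-Way-to-Hiding argument: assuming the adversary's state just before query $i$ is negligibly close in the real and the fully substituted worlds, O2H bounds the amplitude it can place on the point $a$ at query $i$ by its (negligible) probability of guessing $a$ in the substituted world, so the substitutions survive to query $i+1$; summing over the $\poly(\secparam)$ hybrids closes the argument. Interleaving O2H with $\qfhe$ semantic security and lockable-obfuscation security is the technically heaviest part.

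Finally I would record that the class has the uniqueness property needed to invoke Theorem~\ref{thm:formal}: two functionally equal circuits $C_{a,b,r,\pk,\cO}$ and $C_{a',b',r',\pk',\cO'}$ in $\cktclass$ must coincide at $0^n$, forcing $\pk=\pk'$, $\cO=\cO'$, $\ct_a=\ct_{a'}$, and at their unique non-$0^n$ accepting point, forcing $a=a'$ and $b=b'$, while equality of $\cO$ as a circuit forces $\sk|r=\sk'|r'$; hence, except on the negligible event $b=0\cdots0$, every circuit in the support of $\distrc$ has a unique representation in $\cktclass$, so Theorem~\ref{thm:formal} applies and yields the impossibility of SSL for $\cktclass$.
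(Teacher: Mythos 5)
Your proposal is correct and follows essentially the same route as the paper: the same circuit class $C_{a,b,r,\pk,\cO}$ built from QFHE plus lockable obfuscation, the same extractor that queries at $0\cdots0$, homomorphically self-evaluates the quantum implementation at $a$ to produce an encryption of the lock $b$, unlocks $\sk|r$, and reconstructs the circuit, and the same reduction of unlearnability to recovering $\sk$ followed by an induction over superposition queries maintaining negligible amplitude on $a$ via lockable-obfuscation simulation and QFHE security. The only difference is cosmetic: you phrase the query induction as interleaved O2H/BBBV hybrids with a fully substituted oracle, whereas the paper compares the adversary's state directly to the query-free state and invokes the two computational assumptions at each step, which is the same argument in different clothing.
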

\begin{proof}
\noindent We define a de-quantumizable class of circuits $\cktclass=\{\cktclass_{\secparam}\}_{\secparam \in \mathbb{N}}$, where every circuit in $\cktclass_{\secparam}$ is defined as follows: \\ 

\noindent {$\underline{C_{a,b,r,\pk,\cO}(x)}$:}
\begin{enumerate}
    \item If $x = 0\cdots0$, output $\qfhe.\enc\left(\pk,a;r\right)|\cO|\pk$.
    \item Else if $x=a$, output $b$.
    \item Otherwise, output $0\cdots0$
\end{enumerate}
\noindent We will suitably pad with zeroes such that all the inputs (resp., outputs) are of the same length $n$ (resp., of the same length $m$). 
\par Let $\distrc(\secparam)$ be the distribution that outputs a circuit from $\cktclass_{\secparam}$ by sampling $a,b,r \xleftarrow{\$} \{0,1\}^{\secparam}$, then computing $(\pk,\sk)\leftarrow \qfhe.\gen(1^\secparam)$, and finally computing an obfuscation $\cO \leftarrow \lobf(\newbfckt[\allowbreak \qfhe.\dec(\sk,\cdot),b,\allowbreak (\sk|r)])$, where $\newbfckt$ is a compute-and-compare circuit. 


We show that with respect to this distribution: (a) $\cktclass$ is quantum unlearnable (Proposition~\ref{prop:quan:unlearn}) and, (b) $\cktclass$ is efficiently de-quantumizable (Proposition~\ref{prop:eff:dequant}). 

\begin{proposition}
\label{prop:quan:unlearn} For any non-negligible $\nu$, the circuit class 
$\cktclass$ is $\nu$-quantum unlearnable with respect to $\distrc$.
\end{proposition}


\submversion{
We provide a proof of the above proposition in Section~\ref{sec:unlearnable}. 
}

\fullversion{
\begin{proof}

\noindent We first rule out QPT adversaries, who given black-box access to the circuit, can find the secret key $\sk$ with non-negligible probability. Once we rule out this type of adversaries, we then show how to reduce a QPT adversary who breaks the quantum unlearnability property of the de-quantumizable class of circuits to one who finds the secret key $\sk$; thus completing the proof. 
\begin{claim} \label{clm:unlearn}
For all QPT $\cA$ with oracle access to $C_{a,b,r,\pk,\cO}(\cdot)$ (where the adversary is allowed to make superposition queries), we have
$$\underset{(a,b,r,\pk,\cO) \leftarrow \distrc}{\prob}\left[\sk \leftarrow \cA^{C_{a,b,r,\pk,\cO}} \left(1^\secparam \right) \right] \leq \negl(\secparam)$$
\end{claim}
\begin{proof} 
\noindent Towards proving this, we make some simplifying assumptions; this is only for simplicity of exposition and they are without loss of generality.\\

\noindent {\em Simplifying Assumptions.} Consider the following oracle $O_{a,b,r,\pk,\cO}$:
$$ O_{a,b,r,\pk,\cO} \ket{x}\ket{z} = \left\{  \begin{array}{lcl} \ket{x}\ket{z \oplus C_{a,b,r,\pk,\cO}(x)}, & \text{if }x \neq 0 \cdots0\\
& \\
\ket{x}\ket{z}, & \text{if }x=0\cdots0
\end{array} \right. $$
The first simplifying assumption is that the adversary $\cA$ is given access to the oracle $O_{a,b,r,\pk,\cO}$, instead of the oracle $C_{a,b,r,\pk,\cO}$. In addition, $\cA$ is given $\enc(\pk,a;r)$, $\pk$, and $\cO$ as auxiliary input. 
\par The second simplifying assumption is that $\cA$ is given some auxiliary state $\ket{\xi}$, and that it only performs computational basis measurements right before outputting (i.e. $\cA$ works with purified states). \\

\noindent {\em Overview.} Our proof follows the adversary method proof technique~\cite{ambainis2002quantum}. We prove this by induction on the number of queries. We show that after every query the following invariant is maintained: the state of the adversary has little amplitude over $a$. More precisely, we argue that the state of the adversary after the $t^{th}$ query, is neglibly close to the state just before the $t^{th}$ query, denoted by $\ket{\psi^t}$. After the adversary obtains the response to the $t^{th}$ query, it then applies a unitary operation to obtain the state $\ket{\psi^{t+1}}$, which is the state of the adversary just before the $(t+1)^{th}$ query. This observation implies that there is another state $\ket{\phi^{t+1}}$ that: (a) is close to $\ket{\psi^{t+1}}$ (here, we use the inductive hypothesis that $\ket{\phi^{t}}$ is close to $\ket{\psi^{t}}$) and, (b) can be prepared without querying the oracle at all. \\

\noindent Let $U_i$ denote the unitary that $\cA$ performs right before its $i^{th}$ query, and let $\bfA,\sfX,$ and $\sfY$ denote the private, oracle input, and oracle output registers of $\cA$, respectively. 
  
  Just before the $t^{th}$ query, we denote the state of the adversary to be:

$$\ket{\psi^t}:=U_t O \cdots O U_1 \ket{\psi^0}$$
where $\ket{\psi^0} = \ket{\xi}\ket{\enc(\pk,a;r),\cO,\pk}\ket{0\cdots0}_{\sfX}\ket{0\cdots0}_{\sfY}$ is the initial state of the adversary. Let $\Pi_a = (\ket{a}\bra{a})_\sfX \otimes I_{\sfY,\sfA}$.\\ 

 Note that any $\cA$ that outputs $\sk$ with non-negligible probability can also query the oracle on a state $\ket{\psi}$ satisfying $\tr[\Pi_{a}\ket{\psi}\bra{\psi}] \geq \nonnegl(\secparam)$ with non-negligible probability. Since $\cA$ outputs $\sk$ with non-negligible probability, it can decrypt $\enc(\pk,a;r)$, to find $a$ and then query the oracle on $a$. In other words, if there is an adversary $\cA$ that finds $\sk$ with non-negligible probability, then there is an adversary that at some point queries the oracle with a state $\ket{\psi}$ $\tr[\Pi_{a}\ket{\psi}\bra{\psi}] \geq \nonnegl(\secparam)$ also with non-negligible probability. \\

\noindent Hence, it suffices to show that for any adversary $\cA$ that makes at most $T=\poly(\secparam)$ queries to the oracle, it holds that
$$\prob[\forall j,  \tr[\Pi_{a} \ket{\psi^j}\bra{\psi^j}] \leq \negl(\secparam)] \geq 1-\negl(\secparam).$$

\noindent This would then imply that $\cA$ cannot output $\sk$ with non-negligible probability, thus proving Claim~\ref{clm:unlearn}.

Towards proving the above statement, consider the following claim that states that if $\cA$ has not queried  the oracle with a state that has large overlap with $\Pi_{a}$, then its next query will also not have large overlap with $\Pi_{a}$. 
\begin{claim}[No Good Progress] Let $T$ be any polynomial in $\secparam$.
Suppose for all $t<T$, the following holds:
$$\tr \left[ \Pi_{a} \ket{\psi^t}\bra{\psi^t} \right] \leq \negl(\secparam) $$
\noindent Then, $\prob[\tr[\Pi_{a}\ket{\psi^T}\bra{\psi^T}]\leq \negl(\secparam)] \geq 1-\negl(\secparam)$.
\end{claim}
\begin{proof}
For all $j$, let $\ket{\phi^j}=U_jU_{j-1}...U_1\ket{\psi^0}$. 

We will proceed by induction on $T$. Our base case is $T=1$ (just before the first query to the oracle); that is, $\ket{\psi^1}=\ket{\phi^1}$. Suppose the following holds: $$\prob[\tr[\Pi_{a}\ket{\psi^1}\bra{\psi^1}]\geq \nonnegl(\secparam)]\geq \nonnegl(\secparam).$$

The first step is to argue that if $\cA$ can prepare a state such that $\tr[\Pi_a \ket{\psi_1}\bra{\psi_1}] \geq \nonnegl(\secparam)$ given $\enc(\pk,a;r), \pk$ and $\cO \leftarrow \lobf(\newbfckt \left[ \qfhe.\dec(\sk,\cdot),b,(\sk|r) \right])$ without querying the oracle, then it can also prepare a state with large overlap with $\Pi_a$ if its given the simulator of the lockable obfuscation instead. We will use $\cA$ (specifically, the first unitary that $\cA$ applies, $U_1$) to construct an adversary $\cB$ that breaks the security of lockable obfuscation. $\cB$ is given $a$, $\enc(\pk,a;r)$, $\pk$ and $\cO$ as well as auxiliary state $\ket{\xi}$.  It the prepares $\ket{\psi_{1,\cO}}= U_1 \ket{\xi}\ket{\enc(\pk,a;r),\cO,\pk}\allowbreak \ket{0\cdots0}_{\sfX}\ket{0\cdots0}_{\sfY}$, and measures in computational basis. If the output of this measurement is $a$, it outputs $1$; otherwise, it outputs $0$.

Consider the following hybrids.\\

$\hyb_1$ In this hybrid, $\cB$ is given $a$, $\enc(\pk,a;r),\pk, \cO\leftarrow \lobf(\newbfckt[\qfhe.\dec(\sk,\cdot),b,(\sk|r)])$. \\

$\hyb_2$: In this hybrid, $\cB$ is given $a$, $\enc(\pk,a;r),\pk$ and $\cO \leftarrow \simulator(1^\secparam)$.\\

Since the lock $b$ is chosen uniformly at random, by security of lockable obfuscation, the probability that $\cB$ outputs $1$ in the first hybrid is negligibly close to the probability that $\cB$ outputs $1$ in the second hybrid. This means that if $\tr[\Pi_a \ket{\psi_{1,\cO}}\bra{\psi_{1,\cO}}] \geq \nonnegl(\secparam)$ with non-negligible probability when $\cO \leftarrow \lobf(\newbfckt[\qfhe.\dec(\sk,\cdot),b,(\sk|r)])$, then this still holds when $\cO \leftarrow \simulator(1^\secparam)$. \par But we show that if $\tr[\Pi_a \ket{\psi_{1,\cO}}\bra{\psi_{1,\cO}}] \geq \nonnegl(\secparam)$, when $\cO$ is generated as $\cO \leftarrow \simulator(1^\secparam)$, then QFHE is insecure. 
\begin{itemize}
    \item Consider the following QFHE adversary who is given $\ket{\xi}$ as auxiliary information, and chooses two messages $m_0=0\cdots0$ and $m_1=a$, where $a$ is sampled uniformly at random from $\{0,1\}^{\secparam}$. It sends $(m_0,m_1)$ to the challenger. 
    \item The challenger of QFHE then generates $\ct_d = \enc(\pk,m_d)$, for some bit $d \in \{0,1\}$ and sends it to the QFHE adversary.
    \item The QFHE adversary computes $\cO \leftarrow \simulator(1^\secparam)$.
    \item The QFHE adversary then prepares the state $\ket{\psi_d} = U_1 \left( \ket{\xi}\ket{\ct_d, \cO, \pk}\ket{0\cdots0}_{\sfX}\ket{0\cdots0}_{\sfY}\right)$ and measures register $\sfX$ in the computational basis.
\end{itemize}

\noindent If $d=0$, the probability that the QFHE adversary obtains $a$ as outcome is negligible; since $a$ is independent of $U_1$, $\pk$, $\ket{\xi}$, and $\cO$. But from our hypothesis ($\prob[\tr[\Pi_{a}\ket{\psi^1}\bra{\psi^1}]\geq \nonnegl(\secparam)]\geq \nonnegl(\secparam)$), the probability that the QFHE adversary obtains $a$ as outcome is non-negligible for the case when $d=1$. This contradicts the security of QFHE as the adversary can use $a$ to distinguish between these two cases. 

\noindent To prove the induction hypothesis, suppose that for all $t<T$, the following two conditions hold:
\begin{enumerate}
    \item $\tr[\Pi_{a} \ket{\psi^t}\bra{\psi^t}] \leq \negl(\secparam)$
    \item $|\braket{\phi^t}{\psi^t}|= 1-\delta_t$ 
\end{enumerate}
for some negligible $\delta_1,...,\delta_{T-1}$. We can write
$$|\braket{\phi^T}{\psi^T}| = |\bra{\phi^{T-1}}O\ket{\psi^{T-1}}|$$

By hypothesis (2) above, we have $\ket{\phi^{T-1}}=(1-\delta_{T-1}) e^{i \alpha}\ket{\psi^{T-1}}+\sqrt{2\delta_{T-1}-\delta_{T-1}^2}\ket{\widetilde{\psi}^{T-1}}$, here $\alpha$ is some phase, and $\ket{\widetilde{\psi}^{T-1}}$ is some state orthogonal to $\ket{\psi^{T-1}}$. Then
\begin{align*}
|\braket{\phi^T}{\psi^T}| &= |(1-\delta_{T-1}) e^{i \alpha}\bra{\psi^{T-1}}O\ket{\psi^{T-1}} + \sqrt{2\delta_{T-1}-\delta_{T-1}^2} \bra{\widetilde{\psi}^{T-1}}O\ket{\psi^{T-1}}| \\
&\geq |(1-\delta_{T-1})e^{i\alpha} \bra{\psi^{T-1}}O\ket{\psi^{T-1}}| - \sqrt{2\delta_{T-1}-\delta_{T-1}^2}\\
&\geq (1-\delta_{T-1}) |\bra{\psi^{T-1}}O\ket{\psi^{T-1}}| - \sqrt{2\delta_{T-1}-\delta_{T-1}^2}
\end{align*}

\noindent By hypothesis (1) above, and since the oracle acts non-trivially only on $a$, we have 
$|\bra{\psi^{T-1}}O\ket{\psi^{T-1}}| \geq 1-\negl(\secparam)$, which gives us
$$|\braket{\phi^T}{\psi^T}| \geq 1-\negl(\secparam). $$

Now we want to show that $\tr[\Pi_{a} \ket{\psi^{T}}\bra{\psi^{T}}] \leq \negl(\secparam)$. This follows from the security of lockable obfuscation and QFHE similarly to $T=1$ case.  Since $|\braket{\phi^T}{\psi^T}|\geq 1-\negl(\secparam)$, we have that 
$$\tr[\Pi_{a} \ket{\phi^{T}}\bra{\phi^{T}}] \leq \negl(\secparam) \implies \tr[\Pi_{a} \ket{\psi^{T}}\bra{\psi^{T}}] \leq \negl(\secparam).$$ 

\noindent From a similar argument to the $T=1$ case but using $U_T U_{T-1} \cdots U_1$ instead of just $U_1$, we have that  $\prob[\tr[\Pi_{a} \ket{\phi^{T}}\bra{\phi^{T}}]\leq \negl(\secparam)]\geq 1-\negl(\secparam)$.
\end{proof}


\noindent Let $E_i$ denote the event that $\tr[\Pi_{a} \ket{\psi^i}\bra{\psi^i}] \leq \negl(\secparam)$. Let $p_T$ be the probability that $\tr[\Pi_{a} \ket{\psi^t}\bra{\psi^t}] \leq \negl(\secparam)$ for all the queries $t\leq T$. Using the previous claim, we have that

\begin{align*}
    p_T &= \overset{T}{\underset{t=1}{\prod}}\prob[E_t|\forall j<t, E_j] \\
    &\geq (1-\negl(\secparam))^T \\
    &\geq (1-T \cdot  \negl(\secparam))
\end{align*}


\end{proof}

Suppose that there is a QPT $\cB$ that can learn $\cktclass$ with respect to $\distrc$ with non-negligible probability $\delta$. In other words, for all inputs $x$, 

$$\prob \left[ U(\rho,x)=C_{a,b,r,\pk,\cO}(x) : \substack{C_{a,b,r,\pk,\cO}\leftarrow \distrc \\ (U,\rho) \leftarrow \cB^{C_{a,b,r,\pk,\cO}}(1^\secparam)} \right] = \delta $$

\noindent We use $\cB^{C_{a,b,r,\pk,\cO}}$ to construct a QPT $\cA^{C_{a,b,r,\pk,\cO}}$ that can find $\sk$ with probability neglibly close to $\delta$, contradicting Claim~\ref{clm:unlearn}.  To do this, $\cA$ first prepares $(U,\rho) \leftarrow \cB^{C_{a,b,r,\pk,\cO}}(1^\secparam)$. Then, $\cA^{C_{a,b,r,\pk,\cO}}$ queries the oracle on input $0\cdots0$, obtaining $\ct_1 = \qfhe.\enc(\pk,a;r)$ along with $\pk$ and $\cO=\lobf(\newbfckt[\qfhe.\dec(\sk,\cdot),b,(\sk|r)])$. Finally, it homomorphically computes $\ct_2 \leftarrow \qfhe.\eval(U(\rho,\cdot), \ct_1)$. Then it computes $\sk'|r' = \cO(\ct_2)$, and outputs $\sk'$.
\par By the correctness of the $\qfhe$ and because $U(\rho,a)=b$ holds with probability $\delta$, we have that  $\qfhe.\dec_{\sk}(\ct_2)=b$ with probability negligibly close to $\delta$. By correctness of lockable obfuscation $\cO(\ct_2)$ will output the right message $\sk$. This means that output of $\cA$ is $\sk$ with probability negligibly close to $\delta$.\\

\end{proof}
}

\begin{proposition}
\label{prop:eff:dequant} $(\cktclass, \distrc)$ is efficiently de-quantumizable.
\end{proposition}

\begin{proof}
\noindent We will start with an overview of the proof. \\

\noindent \textit{Overview}: Given a quantum circuit $(U_C, \rho_C)$ that computes $C_{a,b,r,\pk,\cO}(\cdot)$, first compute on the input $x=0 \cdots 0$ to obtain $\qfhe.\enc(\pk,a;r)|\cO|\pk$. We then homomorphically evaluate the quantum circuit on $\qfhe.\enc(\pk,a;r)$ to obtain $\qfhe.\enc(\pk,b')$, where $b'$ is the output of the quantum circuit on input $a$; this is part where we crucially use the fact that we are given $(U_C, \rho_C)$ and not just black-box access to the functionality computing $(U_C, \rho_C)$. But  $b'$ is nothing but $b$! Given QFHE encryption of $b$, we can then use the lockable obfuscation to recover $\sk$; since the lockable obfuscation on input a valid encryption of $b$ outputs $\sk$. Using $\sk$ we can then recover the original circuit $C_{a,b,r,\pk,\cO}(\cdot)$. Formal details follow. \\

\noindent For any $C \in \cktclass$, let $(U_C, \rho_C)$ be any QPT algorithm (with auxiliary state $\rho_C$) satisfying that for all $x \in \{0,1\}^n$, $$\prob\left[ U_C(\rho_C,x)=\left(\rho_{C,x}',C(x)\right) \right]\geq 1-\negl(\secparam),$$
where the probability is over the measurement outcomes of $U_C$, and $\rho_{C,x}'$ is neglibly close in trace distance to $\rho_C$ (see Remark~\ref{rmk:reuse}). We will show how to constuct a QPT $\cB$ to de-quantumize $(\cktclass,\distrc)$.

$\cB$ will perform a QFHE evaluation, which we describe here. Given $\qfhe.\enc(\pk,x)$, we want to homomorphically evaluate $C(x)$ to obtain $\qfhe.\enc(\pk,C(x))$. To do this, first prepare $\qfhe.\enc(\pk,\rho_C,x)$, then evaluate $U_C$ homomorphically to obtain the following:
$$\qfhe.\enc(\pk,\rho_{C,x}', C(x)) = \qfhe.\enc(\pk,\rho_{C,x}')\big|\qfhe.\enc(\pk,C(x)) $$  

Consider the following QPT algorithm $\cB$ that is given $(U_C,\rho_C)$ for any $C \in \cktclass$. 

\paragraph{$\cB(U_C, \rho_C)$:}
\begin{enumerate}
    \item Compute $(\rho', \ct_1|\cO'|\pk') \leftarrow U_C(\rho_C,0\cdots0)$. 
    \item Compute $\sigma|\ct_2 \leftarrow \qfhe.\eval( U_C(\rho',\cdot),\ct_1)$
    \item Compute $\sk'|r' \leftarrow \cO(\ct_2)$
    \item Compute $a'\leftarrow \qfhe.\dec(\sk',\ct_1)$, $b'\leftarrow \qfhe.\dec(\sk',\ct_2)$.
    \item Output $C_{a',b',r',\pk',\cO'}$.
\end{enumerate}

\noindent We claim that with probability negligibly close to $1$, $(a',b',r',\pk',\cO')=(a,b,r,\pk,\cO)$ when $C:=C_{a,b,r,\pk,\cO}\leftarrow \distrc$. This would finish our proof.

Lets analyze the outputs of $\cB$ step-by-step.
\begin{itemize} 
\item After Step (1), with probability neglibibly close to $1$, we have that $\ct_1 = \qfhe.\enc(\pk,a;r)$ , $\pk'=\pk$, and $\cO' = \cO \leftarrow \lobf(\newbfckt[\qfhe.\dec(\sk,\cdot),b,(\sk|r)])$. Furthermore, we have that $\rho'$ is negligibly close in trace distance to $\rho_C$.
\item Conditioned on Step (1) computing $C(0\cdots0)$ correctly, we have that $\qfhe.\eval(\allowbreak U_C(\rho',.), \ct_1)$ computes correctly with probability negligibly close to $1$. This is because $\trD{\rho'}{\rho_C} \leq \negl(\secparam)$, and by correctness of both QFHE and $(U_C,\rho_C)$. Conditioned on $\ct_1 = \qfhe.\enc(\pk,a;r)$,  when Step (2) evaluates correctly, we have $\ct_2 = \qfhe.\enc(\pk,C(a))=\qfhe.\enc(\pk,b)$ 

\item Conditioned on $\ct_2 = \qfhe.\enc(\pk,b)$, by correctness of lockable obfuscation, we have that $\cO(\ct_2)$ outputs $\sk|r$. Furthermore, by correctness of QFHE, decryption is correct: $\qfhe.\dec(\sk,\ct_1)$ outputs $a$ with probability neglibly close to $1$, and $\qfhe.\dec(\sk,\ct_2)$ outputs $b$ with probability neglibly close to $1$.
\end{itemize}
With probability negligibly close to $1$, we have shown that $(a',b',r',\pk',\cO')=(a,b,r,\pk, \cO)$.

Note that it is also possible to recover $\rho''$ that is neglibly close in trace distance to $\rho_C$. This is because $\sigma = \qfhe.\enc(\pk,\rho'')$ for some $\rho''$ satisfying $\trD{\rho''}{\rho_C}$.  Once $\sk'=\sk$ has been recovered, it is possible to also decrypt $\sigma$ and obtain $\rho''$. To summarize, we have shown a QPT $\cB$ satisfying

$$\prob[\cB(U_C,\rho_C)= (\rho'',C) \ : \ C \leftarrow \distrc] \geq 1-\negl(\secparam) $$

where $\trD{\rho''}{\rho_C} \leq \negl(\secparam)$.
\end{proof}

\end{proof}

\paragraph{Implications to Copy-Protection.} We have constructed a class $\cktclass$ and an associated distribution $\distrc$ that is efficient de-quantumizable. In particular, this means that there is no copy-protection for $\cktclass$.  If for all inputs $x$, there is a QPT $(U_C, \rho_C)$ to compute $U_C(\rho_C,x)=C(x)$ with probability $1-\varepsilon$ for some negligible $\varepsilon$, then  it is possible to find, with probability close to $1$, a circuit $C'$ that computes the same functionality as $C$. We also proved that $(\cktclass, \distrc)$ is quantum unlearnable. We summarize the result in the following corollary,

\begin{corollary}
There is $(\cktclass, \distrc)$ that is quantum unlearnable, but $\cktclass$ cannot be copy-protected against $\distrc$. Specifically, for any $C \leftarrow \distrc$ with input length $n$, and for any QPT algorithm $(U_C,\rho_C)$ satisfying that for all $x \in \{0,1\}^n$, $$\prob[U_C(\rho_C,x)=C(x)] \geq 1-\varepsilon $$
for some negligible $\varepsilon$, there is a QPT algorithm (pirate) that outputs a circuit $C'$, satisfying $C'(x)=C(x)$ for all $x \in \{0,1\}^n$, with probability negligibly close to $1$. 
\end{corollary}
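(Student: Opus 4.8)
The plan is to read this corollary off directly from Theorem~\ref{thm:impossibility} together with Propositions~\ref{prop:quan:unlearn} and~\ref{prop:eff:dequant}. Concretely, I would take $(\cktclass,\distrc)$ to be the class $\{C_{a,b,r,\pk,\cO}\}$ and the sampler $\distrc$ constructed in the proof of Theorem~\ref{thm:impossibility}; under QLWE and the stated structural properties of $\qfhe$, Proposition~\ref{prop:quan:unlearn} shows this class is $\nu$-quantum unlearnable for every non-negligible $\nu$, which is exactly the quantum-unlearnability half of the statement. So the only remaining task is to exhibit the QPT pirate promised by the ``Specifically'' clause.

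For that pirate I would simply invoke the de-quantumizer $\cB$ of Proposition~\ref{prop:eff:dequant}. Given any $(U_C,\rho_C)$ with $\prob[U_C(\rho_C,x)=C(x)]\geq 1-\varepsilon$ for all $x$ and negligible $\varepsilon$, first apply Remark~\ref{rmk:reuse} (Almost As Good As New) to assume without loss of generality that $U_C$ also returns a state negligibly close to $\rho_C$, so that it may be run more than once. Then $\cB$ proceeds exactly as in Proposition~\ref{prop:eff:dequant}: evaluate $U_C$ on $0\cdots0$ to recover $\qfhe.\enc(\pk,a;r)\,|\,\cO\,|\,\pk$; homomorphically evaluate $U_C$ on $\qfhe.\enc(\pk,a;r)$ to obtain $\qfhe.\enc(\pk,C(a))=\qfhe.\enc(\pk,b)$; feed this ciphertext to $\cO=\lobf(\newbfckt[\qfhe.\dec(\sk,\cdot),b,(\sk|r)])$ to extract $\sk|r$; decrypt the two ciphertexts with $\sk$ to recover $a$ and $b$; and output the classical circuit $C_{a,b,r,\pk,\cO}$. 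Each of these steps fails with only negligible probability --- the $\varepsilon$ from the given implementation, QFHE evaluation and decryption correctness, and lockable-obfuscation correctness --- so a union bound over the constantly many steps yields $C'=C$, and in particular a \emph{classical} circuit with $C'(x)=C(x)$ for all $x$, with probability negligibly close to $1$. Since $C'$ is classical it can be freely copied, so $\cktclass$ admits no copy-protection against $\distrc$.

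The one point I would be careful about is a mismatch of quantifiers rather than anything deep: the definition of a de-quantumizable class speaks of a whole collection $\{U_C,\rho_C\}_{C\in\cktclass}$ computing the class, whereas the corollary hands us a single implementation of the one sampled circuit. This is harmless because the pirate $\cB$ of Proposition~\ref{prop:eff:dequant} is local --- it only ever runs the implementation it is given, on $0\cdots0$ and (homomorphically) on $\qfhe.\enc(\pk,a;r)$ --- so a single $(U_C,\rho_C)$ with $1-\varepsilon$ correctness is all it consumes, and every step tolerates negligible additive error. I would also remark that, unlike the SSL impossibility of Theorem~\ref{thm:formal}, no unique-representation assumption is needed here: copy-protection only requires the pirate to produce \emph{some} classical circuit computing $C$, and there is no obligation to preserve or return $\rho_C$, so we may freely use the Almost As Good As New lemma only to make $U_C$ reusable during $\cB$'s run. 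Thus the only real work is bookkeeping the negligible error terms, and the corollary follows.
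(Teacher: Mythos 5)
Your proposal is correct and follows essentially the same route as the paper: the corollary is obtained by combining the quantum unlearnability of Proposition~\ref{prop:quan:unlearn} with the de-quantumizer $\cB$ of Proposition~\ref{prop:eff:dequant} run on the single given implementation, using the Almost As Good As New lemma only to make $U_C$ reusable across the two evaluations (on $0\cdots0$ and homomorphically on $\qfhe.\enc(\pk,a;r)$). Your observations that no unique-representation assumption is needed here and that the error bookkeeping is the only remaining work match the paper's own discussion.
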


\paragraph{Further Discussion.}  Notice that in our proof that $\cktclass$ is efficient de-quantumizable, we just need to compute $U_C(\rho_C, x)$ at two different points $x_1 = 0\cdots0$ and $x_2=a$, where the evaluation at $x_2$ is done homomorphically. This means that any scheme that lets a user evaluate a circuit $C$ at least 2 times (for 2 possibly different inputs) with non-negligible probability cannot be copy-protected. Such a user would be able to find all the parameters of the circuit, $(a,b,r,\pk,\cO)$, succesfully with non-negligible probability, hence it can prepare as many copies of a functionally equivalent circuit $C'$. 

In our proof, we make use of the fact that $(U_C,\rho_C)$ evaluates correctly with probability close to $1$. This is in order to ensure that the pirate can indeed evaluate at $2$ points by uncomputing after it computes $C(0\cdots0)$.  Since any copy-protection scheme can be amplified to have correctness neglibly close to $1$ by providing multiple copies of the copy-protected states, our result also rules out copy-protection for non-negligible correctness parameter $\varepsilon$, as long as the correctness of $(U_C,\rho_C)$ can be amplified to neglibily close to $1$ by providing $\rho_C^{\otimes k}$ for some $k=\poly(\secparam)$. 

\paragraph{Impossibility of  Quantum VBB with single uncloneable state.} Our techniques also rule out the possibility of quantum VBB for classical circuits. In particular, this rules the possibility of quantum VBB for classical circuits with the obfucated circuit being a single uncloneable state, thus resolving an open problem by Alagic and Fefferman~\cite{alagic2016quantum}.

\begin{proposition}
Assuming the quantum hardness of learning with errors and assuming that there is a QFHE satisfying the properties described in Theorem~\ref{thm:impossibility},
\par there exists a circuit class $\cktclass$ such that any quantum VBB for $\cktclass$ is insecure.
\end{proposition}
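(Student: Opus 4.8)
The plan is to reuse the de-quantumizable class $(\cktclass,\distrc)$ of Theorem~\ref{thm:impossibility}. A quantum VBB obfuscator for $\cktclass$ is, by the functionality requirement, an efficient quantum implementation $(U_C,\rho_C) := (\eval,\obf(1^\secparam,C))$ of each $C\in\cktclass$ with correctness negligibly close to $1$ (if the qVBB definition guarantees only non-negligible correctness, one first amplifies by letting the obfuscator output $\obf(1^\secparam,C)^{\otimes k}$ and correspondingly strengthening the simulator in the security game). Therefore Proposition~\ref{prop:eff:dequant} applies verbatim: the QPT de-quantumizer $\cB$, run on $(U_C,\rho_C)$, outputs the parameter tuple $(a',b',r',\pk',\cO')=(a,b,r,\pk,\cO)$ with probability $1-\negl(\secparam)$ --- in particular it recovers $a$ (and $\sk$). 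Crucially, $\cB$ evaluates the obfuscated object only at the two points $0\cdots0$ and (homomorphically) $a$, and the first evaluation is made reversible through the Almost As Good As New Lemma (Lemma~\ref{clm:ASGAN}); so a single, a priori unclonable and non-reusable, obfuscated state suffices. This is exactly the qVBB regime where the obfuscation is one quantum state, which is what the open problem of Alagic--Fefferman concerns.

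First I would fix the violating predicate and adversary. Let $\pi(C_{a,b,r,\pk,\cO})$ be the first bit $a_1$ of $a$, and let the QPT adversary $\adversary$, on input $\obf(1^\secparam,C)$, run $\cB$ and output the first bit of the recovered $a$; then $\adversary(\obf(1^\secparam,C))=\pi(C)$ with probability $1-\negl(\secparam)$. By quantum VBB security there is a QPT simulator $\simulator$ with $\left|\prob[\adversary(\obf(1^\secparam,C))=1]-\prob[\simulator^{C}(1^{|C|})=1]\right|\le\negl(\secparam)$ for every $C$. Writing $\distrc^{(0)}$ and $\distrc^{(1)}$ for $\distrc$ conditioned on $a_1=0$ and on $a_1=1$ respectively --- which only changes the marginal of $a$, since $b,r,(\pk,\sk),\cO$ are sampled independently of $a$ --- this yields $\prob_{C\leftarrow\distrc^{(0)}}[\simulator^{C}(1^{|C|})=1]\le\negl(\secparam)$ while $\prob_{C\leftarrow\distrc^{(1)}}[\simulator^{C}(1^{|C|})=1]\ge 1-\negl(\secparam)$.

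The contradiction comes from showing that $\distrc^{(0)}$ and $\distrc^{(1)}$ induce oracles indistinguishable to every QPT, via the hybrid sequence: (i) replace the oracle to $C$ by the function returning $\qfhe.\enc(\pk,a;r)\,|\,\cO\,|\,\pk$ on $0\cdots0$ and $0\cdots0$ on all other inputs --- indistinguishable because, exactly as in the ``No Good Progress'' argument inside the proof of Proposition~\ref{prop:quan:unlearn}, no QPT with oracle access to $C$ ever queries on a state with non-negligible overlap with $\ket a$; (ii) replace $\cO\leftarrow\lobf(\newbfckt[\qfhe.\dec(\sk,\cdot),b,(\sk|r)])$ by $\simulator'(1^{\secparam})$ (the lockable-obfuscation simulator), valid by quantum-secure lockable obfuscation since $b$ is uniform and, after (i), occurs nowhere except as the lock --- the view is now independent of $\sk$; (iii) switch $a\leftarrow\distrc^{(0)}$ to $a\leftarrow\distrc^{(1)}$ inside $\qfhe.\enc(\pk,a;r)$, valid by semantic security of $\qfhe$ (the reduction receives $\pk$, samples $b,r$, sets $\cO\leftarrow\simulator'(1^{\secparam})$, and answers $\simulator$'s oracle as all-zeros off $0\cdots0$); then (iv) and (v) undo (ii) and (i). Hence $\prob_{\distrc^{(0)}}[\simulator^{C}=1]\approx\prob_{\distrc^{(1)}}[\simulator^{C}=1]$, contradicting the previous paragraph, so no qVBB for $\cktclass$ can exist.

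The main obstacle is step (i) together with its interaction with the rest of the hybrid: one must argue that a QPT simulator's oracle to $C$ can be replaced by an $a$-oblivious function, which is not a generic fact but is precisely the content of the unlearnability machinery (the simulator provably cannot trigger the $x=a$ branch), after which $\qfhe$ semantic security disposes of the one remaining $a$-dependent object. A secondary point to pin down is the exact qVBB formulation adopted from Alagic--Fefferman (worst-case vs.\ distributional, single-state obfuscation, correctness level, one-bit vs.\ string output --- in the last case one instead takes $\pi$ to be a hardcore bit of $a$ and closes the loop with a quantum Goldreich--Levin reduction against an auxiliary-input simulator): all standard variants go through, but the choice should be stated explicitly. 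Finally, for a worst-case formulation one notes that $a$ is hard-wired in $C$ (it is the accepting input), so $\pi$ is a legitimate predicate of the circuit description.
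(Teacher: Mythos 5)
Your proposal is correct, and on the adversary side it coincides with the paper's proof: both run the de-quantumizer $\cB$ of Proposition~\ref{prop:eff:dequant} on the single obfuscated state (steps 1--4, made reversible via Lemma~\ref{clm:ASGAN}) and output one bit of the recovered secrets. Where you genuinely diverge is in the choice of predicate and in how the simulator is bounded. The paper predicts the first bit of the \emph{lock} $b$: once the ``no good progress'' argument inside Claim~\ref{clm:unlearn} shows that any simulator's oracle queries have negligible overlap with $a$ throughout, the simulator can be replaced by one that makes no queries at all, and since $b$ then enters its view only as the lock of $\cO$ (and so is computationally hidden by lockable obfuscation), its success at guessing $b_1$ is negligibly close to $1/2$ --- an immediate gap against the adversary's success $1-\negl(\secparam)$, with no further hybrids. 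You instead predict the first bit of $a$, and must therefore also erase the $a$-dependence that survives in the oracle's value at $0\cdots0$, namely $\qfhe.\enc(\pk,a;r)$; this is exactly why your argument needs the conditioning on $a_1$ and the extra QFHE semantic-security hybrid (iii), on top of the same no-good-progress and lockable-obfuscation replacements the paper already uses. Both routes work; the paper's choice of $b$ is more economical because $b$ is already invisible once the $x=a$ branch is unreachable, whereas your choice of $a$ additionally requires (as you implicitly assume) that the no-good-progress induction still holds when $a$ is conditioned on one fixed bit --- which is true, since $a$ retains $\secparam-1$ bits of entropy and the QFHE and lockable-obfuscation reductions inside are unaffected. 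Your cautions about amplification, worst-case versus distributional qVBB, and Goldreich--Levin are reasonable but not needed for the paper's formulation, where the adversary/simulator gap is established directly over $\distrc$.
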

\begin{proof}
We construct a circuit class $\cktclass=\{\cktclass_{\secparam}\}_{\secparam \in \mathbb{N}}$, where every circuit in $\cktclass_{\secparam}$ is of the form $C_{a,b,r,\pk,\cO}$ defined in the proof of Theorem~\ref{thm:impossibility}. 
\par Given any quantum VBB of $C_{a,b,r,\pk,\cO}$, there exists an adversary $\cA$ that recovers $b$ and outputs the first bit of $b$. The adversary $\cA$ follows steps 1-4 of $\cB$ defined in the proof of Proposition~\ref{prop:eff:dequant} and then outputs the first bit of $b'$. In the same proof, we showed that the probability that $b'=b$ is negligibly close to 1 and thus, the probability it outputs the first bit of $b$ is negligibly close to 1. 
\par On the other hand, any QPT simulator $\simr$ with superposition access to $C_{a,b,r,\pk,\cO}$ can recover $b$ with probability negligibly close to $1/2$. To prove this, we rely upon the proof of Claim~\ref{clm:unlearn}. We will start with the same simplifying assumptions as made in the proof of Claim 46. Suppose $T$ is the number of superposition queries made by $\simr$ to $C_{a,b,r,\pk,\cO}$. Let $\ket{\psi^0}$ is the initial state of $\simr$ and more generally, let $\ket{\psi^t}$ be the state of $\simr$ after $t$ queries, for $t \leq T$. 
\par We define an alternate QPT simulator $\simr'$ which predicts the first bit of $b$ with probability negligibly close to $\simr$. Before we describe $\simr'$, we give the necessary preliminary background. Define $\ket{\phi^t}=U_{t} U_{t-1} \cdots U_1 \ket{\psi^0}$. We proved the following claim. 

\begin{claim}
$|\braket{\phi^t}{\psi^t}|=1-\delta_{t}$ for every $t \in [T]$.
\end{claim}

\noindent $\simr'$ starts with the initial state $\ket{\psi^0}$. It then computes $\ket{\phi^T}$. If $U$ is a unitary matrix $\simr$ applies on $\ket{\psi^T}$ followed by a measurement of a register ${\bf D}$ then $\simr'$ also performs $U$  on $\ket{\phi^T}$ followed by a measurement of ${\bf D}$. By the above claim, it then follows that the probability that $\simr'$ outputs 1 is negligibly close to the probability that $\simr$ outputs 1. But the probability that $\simr'$ predicts the first bit of $b$ is $1/2$. Thus, the probability that $\simr$ predicts the first bit of $b$ is negligibly close to $1/2$. 
\end{proof}

\fullversion{

\section{qIHO for Compute-and-Compute Circuits}
\noindent To complement the impossibility result, we present a construction of SSL for a subclass of evasive circuits. Specifically, the construction works for circuit classes that have q-Input-Hiding obfuscators. In the following section, we show that there are q-Input-Hiding obfuscators for Compute-and-Compare circuits.


 Barak et al.~\cite{BBCKP14} present a construction of input-hiding obfuscators  secure against classical PPT adversaries; however, it is unclear whether their construction is secure against QPT adversaries. Instead we present a construction of input-hiding obfuscators (for a class of circuits different from the ones considered in~\cite{BBCKP14}) from QLWE. Specifically, we show how to construct a q-input-hiding obfuscator for compute-and-compare circuits $\cktclass_{\cnc}$ with respect to a distribution $\distr_{\cktclass}$ defined in Definition~\ref{def:distr:cnc}. 

\begin{lemma}[qIHO for  Compute-and-Compare Circuits]
Consider a class of compute-and-compare  circuits $\cktclass_{\cnc}$ associated with a distribution $\distrc$ (Definition~\ref{def:distr:cnc}). Assuming QLWE, 
there exists qIHO for $\cktclass_{\cnc}$. 
\end{lemma}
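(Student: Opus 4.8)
The plan is to instantiate $\qiho$ directly from the quantum-secure lockable obfuscation scheme $(\lobf,\leval)$ that exists under QLWE (by the theorem of~\cite{GKW17,WZ17,GKVW19} above). Recall that a circuit in $\cktclass_{\cnc}$ has the form $\lockC[C,\alpha]$ which outputs $1$ iff $C(x)=\alpha$. I would define $\qiho=(\obf,\eval)$ by setting $\obf(1^{\secparam},\lockC[C,\alpha]):=\lobf(1^{\secparam},\lockC[C,\alpha,\beta])$ with the fixed message $\beta:=1$, and $\eval(\widetilde{C},x):=\leval(\widetilde{C},x)$, where the lockable-obfuscation output $\bot$ is interpreted as $0$. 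Correctness of $\qiho$ is then immediate from the correctness of lockable obfuscation: $\eval(\widetilde{C},x)=1$ exactly when $C(x)=\alpha$, i.e.\ exactly when $\lockC[C,\alpha](x)=1$, and these agree on all inputs with probability negligibly close to $1$.

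For the q-input-hiding property, I would argue by contradiction. Suppose a non-uniform QPT $\adversary$, given $\widetilde{C}\leftarrow\obf(1^{\secparam},\lockC[C,\alpha])$ for $(\lockC[C,\alpha],\aux)\leftarrow\distrc$, outputs $x^{*}$ with $\lockC[C,\alpha](x^{*})=1$ (equivalently $C(x^{*})=\alpha$) with non-negligible probability $\nu$. The key step is to invoke the security of lockable obfuscation to replace $\widetilde{C}$ by $\simulator(1^{\secparam},1^{|C|})$, which depends on neither $C$ nor $\alpha$; by quantum computational indistinguishability, $\adversary$ still outputs $x^{*}$ with $C(x^{*})=\alpha$ with probability at least $\nu-\negl(\secparam)$ in this simulated experiment. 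But then the QPT procedure that, on input $(C,\aux)$, runs $\widetilde{C}'\leftarrow\simulator(1^{\secparam},1^{|C|})$, then $x^{*}\leftarrow\adversary(1^{\secparam},\widetilde{C}')$, and outputs $C(x^{*})$, predicts $\alpha$ from $(C,\aux)$ with non-negligible probability. For $\distr_{\unpred}$ this directly contradicts the hypothesis that $\alpha$ is computationally unpredictable given $(C,\aux)$. For $\distr_{\pseud}$ it still yields a contradiction, because $\bfH_{\hill}(\alpha\mid(C,\aux))\geq\secparam^{\varepsilon}=\omega(\log\secparam)$ implies that $\alpha$ is computationally unpredictable given $(C,\aux)$: a predictor succeeding with probability $1/\poly(\secparam)$ would succeed with essentially the same probability against a computationally indistinguishable source of conditional min-entropy $\geq\secparam^{\varepsilon}$, which is impossible since against such a source any predictor succeeds with probability at most $2^{-\secparam^{\varepsilon}}=\negl(\secparam)$.

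The main obstacle I anticipate is a distributional mismatch: the lockable-obfuscation security recalled above is phrased for a uniformly random lock $\alpha$, whereas here $\alpha$ is drawn from $\distr_{\unpred}$ or $\distr_{\pseud}$, possibly jointly with $\aux$. I would bridge this in the standard way. For $\distr_{\pseud}$, first use the definition of HILL entropy to pass to a computationally indistinguishable joint distribution $(\alpha',C,\aux)$ with $\bfH_{\infty}(\alpha'\mid(C,\aux))\geq\secparam^{\varepsilon}$, and then use the fact that the lockable obfuscators of~\cite{GKW17,WZ17,GKVW19} achieve distributional indistinguishability from $\simulator(1^{\secparam},1^{|C|})$ for \emph{any} lock with super-logarithmic conditional min-entropy (the uniform case being a special instance); composing the two steps gives $\obf(1^{\secparam},\lockC[C,\alpha])\approx_{Q}\simulator(1^{\secparam},1^{|C|})$ even when $\aux$ is handed to the distinguisher. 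For $\distr_{\unpred}$ one invokes the analogous guarantee of~\cite{WZ17,GKW17} for computationally-unpredictable lock distributions directly. A minor additional point is that the q-input-hiding game as defined does not explicitly hand $\aux$ to $\adversary$; this only weakens the adversary, and the reduction above goes through verbatim since the predictor is always supplied $(C,\aux)$, which is all it uses. The remaining verifications --- that $\eval$ realises the claimed functionality and that all the algorithms in the reduction are QPT --- are routine.
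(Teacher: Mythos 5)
Your construction is a correct-in-outline argument, but it takes a genuinely different route from the paper. The paper argues in two steps: first it gets $\qiho$ for point functions by invoking the observation of~\cite{BBCKP14} that (q-)average-case VBB for evasive circuits with polynomially many accepting inputs already implies input hiding, instantiating q-average-case VBB for point functions from QLWE; it then lifts to $\cktclass_{\cnc}$ generically by obfuscating only the lock as a point function and publishing $C$ in the clear, so that an accepting input $x$ of $\left(C,\widetilde{G_{\alpha}}\right)$ yields the hidden point via $\alpha=C(x)$. You instead apply lockable obfuscation once to the whole compute-and-compare circuit (with message $\beta=1$) and argue input hiding by switching to the simulator and then using unpredictability of $\alpha$ given $(C,\aux)$. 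Your route is more direct and dispenses with the point-function detour; what the paper's route buys is that the only obfuscated object is a point function, so the security needed is ``point obfuscation for the distribution of $\alpha$'' rather than distributional security of lockable obfuscation for the correlated tuple $(C,\alpha,\aux)$ --- although, done carefully, the paper's reduction must also hand the input-hiding adversary the circuit $C$, which is correlated with $\alpha$, so both arguments ultimately lean on a distributional form of the underlying obfuscator rather than the uniform-lock notion formally defined in the paper.

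The one place where your argument overreaches is the $\distr_{\unpred}$ branch. The lockable-obfuscation security stated in the paper, and what~\cite{GKW17,GKVW19} prove under LWE, is for a uniformly random lock; \cite{WZ17} extends this to locks with $\secparam^{\varepsilon}$ conditional HILL entropy given $(C,\aux)$ under plain LWE (note: $\secparam^{\varepsilon}$, not merely super-logarithmic entropy as you write, but that is exactly what $\distr_{\pseud}$ supplies), and your HILL-to-min-entropy switch followed by that theorem is fine for $\distr_{\pseud}$, modulo restating the theorem against QPT adversaries as the paper already does for the uniform case. However, security for \emph{arbitrary computationally unpredictable} lock distributions is not known from (Q)LWE alone: in~\cite{WZ17} that case requires a non-standard auxiliary-input variant of LWE, and computational unpredictability of $\alpha$ given $(C,\aux)$ does not imply $\secparam^{\varepsilon}$ conditional HILL entropy (e.g.\ $\aux$ may contain a one-way image of $\alpha$), so you cannot fold $\distr_{\unpred}$ into the $\distr_{\pseud}$ argument. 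To be fair, the paper's own proof is equally terse about this case; but if your proof is to rest on QLWE as stated, you should either restrict the claim to $\distr_{\pseud}$ or explicitly assume a quantum-secure compute-and-compare obfuscator for unpredictable lock distributions. The rest of your proposal --- correctness, the simulator-then-predict reduction, and the observation that withholding $\aux$ from the input-hiding adversary only helps --- is sound.
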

\begin{proof}
\noindent We prove this in two steps: we first construct a qIHO for the class of point functions and then we use this to build qIHO for  compute-and-compare class of circuits. \\

\noindent {\em \underline{qIHO for point functions}}: To prove this, we use a theorem due to~\cite{BBCKP14} that states that an average-case VBB for circuits with only polynomially many accepting points is already an input-hiding obfuscator for the same class of circuits; their same proof also holds in the quantum setting. Any q-average-case VBB for circuits with only polynomially many accepting points is already a qIHO. As a special case, we have a qIHO for point functions from q-average-case VBB for point functions. Moreover, we can instantiate q-average-case VBB for point functions from QLWE  and thus, we have qIHO for point functions from QLWE. 
\par We describe the formal details below. First, we recall the definition of average-case VBB. 

\begin{definition}[q-Average-Case Virtual Black-Box Obfuscation (VBB)]
Consider a class of circuits $\cktclass=\{\cktclass_{\secparam}\}_{\secparam \in \mathbb{N}}$ associated with a distribution $\distrc$. We say that $(\obf,\eval)$ is said to be a {\bf q-average-case virtual black-box obfucsator} for $\cktclass$ if it holds that for every QPT adversary $\adversary$, there exists a QPT simulator $\simr$ such that for every $\secparam \in \mathbb{N}$, the following holds for every non-uniform QPT distinguisher $D$: 
$$\left| {\prob} \left[ 1 \leftarrow D \left(\widetilde{C}\right)\ :\ \substack{C \leftarrow \distrc \left( \secparam \right),\\ \ \\ \widetilde{C} \leftarrow \obf\left(1^{\secparam},C \right)} \right] - \prob \left[1 \leftarrow D\left( \widetilde{C} \right)\ :\ \widetilde{C} \leftarrow \simr\left( 1^{\secparam} \right) \right] \right| \leq \negl(\secparam),$$
\end{definition}

\noindent We consider a quantum analogue of a proposition proven in~\cite{BBCKP14}. We omit the proof details since this is identical to the proof provided by~\cite{BBCKP14} albeit in the quantum setting. 

\begin{proposition}
Consider a class of evasive circuits $\cktclass=\{\cktclass_{\secparam}\}_{\secparam \in \mathbb{N}}$ associated with a distribution $\distrc$ such that each circuit $C \in \cktclass_{\secparam}$ has polynomially many accepting points. 
\par Assuming q-average-case virtual black-box obfuscation for $\cktclass$, there is a qIHO for $\cktclass$.
\end{proposition}
\noindent As a special case, we have qIHO for point functions (defined below) assuming q-average-case VBB for point functions. Moreover, q-average-case VBB for point functions can be instantiated from QLWE (see for example~\cite{WZ17,GKW17}). Thus, we have the following proposition.

\begin{proposition}[q-Input-Hiding Obfuscator for Point Functions]
\label{prop:polwe}
Consider the class of circuits $\cktclass=\{\cktclass_{\secparam}\}_{\secparam \in \mathbb{N}}$ defined as follows: every circuit $C \in \cktclass$, is associated with $x$ such that it outputs 1 on $x$ and 0 on all other points. 
\par Assuming QLWE, there is a qIHO for $\cktclass$.  
\end{proposition}

\noindent {\em \underline{qIHO for  compute-and-compare circuits from qIHO for point functions}}: We now show how to construct qIHO for  compute-and-compare circuits $\cktclass_{\cnc}$, associated with distribution $\distr_{\cnc}$ (Definition~\ref{def:distr:cnc}), from qIHO for point functions. Denote $\po.\qiho$ to be a qIHO for point functions $\newcktclass=\{\newcktclass_{\secparam}\}_{\secparam \in \mathbb{N}}$ associated with distribution $\distr_{\smpo}$, where $\distr_{\smpo}$ is a marginal distribution of $\distr_{\cnc}$ on $\{\alpha\}$. We construct qIHO for compute-and-compare circuits below; we denote this by $\cnc.\qiho$. \\

\noindent \underline{$\cnc.\qiho.\obf \left( 1^{\secparam}, \lockC[C,\alpha] \right)$}: It takes as input security parameter $\secparam$, compute-and-compare circuit $\lockC[C,\alpha]$, associated with lock $\alpha$. Compute $\po.\qiho(1^{\secparam},G_{\alpha} \in \newcktclass_{\secparam})$ to obtain $\widetilde{G_{\alpha}}$. Output $\widetilde{\lockC}= \left( C,\widetilde{G_{\alpha}(\cdot)} \right)$. \\

\noindent \underline{$\cnc.\qiho.\eval \left(\widetilde{\lockC},x\right)$}: On input obfuscated circuit $\widetilde{\lockC}=\left( C,\widetilde{G_{\alpha}} \right)$, 
input $x$, do the following: 
\begin{itemize}
    \item Compute $C(x)$ to obtain $\alpha'$. 
    \item Compute $\po.\eval\left(\widetilde{G_{\alpha}},\alpha' \right)$ to obtain $b$.
    \item Output $b$. 
\end{itemize}

\begin{claim}
\label{clm:potocnc}
Assuming $\po.\qiho$ is an input-hiding obfuscator for $\newcktclass$ associated with $\distr_{\smpo}$, $\cnc.\qiho$ is an input-hiding obfuscator for $\cktclass$ associated with $\distr_{\cnc}$. 
\end{claim}
\begin{proof}
Suppose there exists a QPT adversary $\adversary$ such that the following holds: 
$$\left| \prob \left[ \widetilde{\lockC}(x)=1\ :\ \substack{ \lockC[C,\alpha] \leftarrow \distr_{\cnc}(\secparam),\\ \ \\ \widetilde{\lockC} \leftarrow \cnc.\qiho(1^{\secparam},\lockC[C,\alpha]),\\ \ \\ x \leftarrow \adversary\left( 1^{\secparam},\widetilde{\lockC} \right)} \right] \right| = \delta  $$
Our first observation is that $\prob\left[ C(x)=\alpha\ \big|\ \widetilde{C}(x)=1  \right] = 1$. \noindent Using this, we can construct another adversary $\adversary'$ that violates the input-hiding property of $\po.\qiho$. On input $\widetilde{G_{\alpha}(\cdot)}$, $\adversary'$ computes $\adversary\left(\widetilde{\lockC}=\left(C,\widetilde{G_{\alpha}(\cdot)}\right) \right)$; denote the output to be $x$. Finally, $\adversary'$ outputs $\alpha'=C(x)$. 
\par From the above observations, it holds that $\adversary'$ breaks the input-hiding property of $\po.\qiho$ with probability $\delta$. From the security of $\po.\qiho$, we have that $\delta=\negl(\secparam)$ and thus the proof of the claim follows.  

\end{proof} 

\noindent \underline{\em Conclusion}: Combining Claim~\ref{clm:potocnc} and Proposition~\ref{prop:polwe}, we have $\qiho$ for  compute-and-compare circuits from QLWE.

\end{proof}

}

\section{Main Construction}
\label{sec:const}
In this section, we present the main construction of SSL satisfying infinite-term perfect lessor security. \submversion{We first start by describing the class of circuits of interest.

}

\fullversion{
\par Let $\cktclass=\{\cktclass_{\secparam}\}$ be the class of $\search$-searchable circuits associated with SSL. We denote  $\sz(\secparam)=\poly(\secparam)$ to be the maximum size of all circuits in $\cktclass_{\secparam}$. And let $\distr_{\cktclass}$ be the distribution associated with $\cktclass$. 
}

 \fullversion{
 \paragraph{Ingredients.}
\begin{enumerate}
    \item q-Input-hiding obfuscator $\iho=(\iho.\obf,\iho.\eval)$ for $\cktclass$. 
        \item Subspace hiding obfuscation
    $\shO=(\shO.\obf,\shO.\eval)$. The field associated with $\shO$ is $\Zq$ and the dimensions will be clear below. 
    \item q-simulation-extractable non-interactive zero-knowledge system  $\ssnizk=(\crsgen,\prover,\verify)$ for NP with sub-exponential security as guaranteed in Lemma~\ref{lem:qsenizks}.
\end{enumerate}
}
\submversion{
\subsection{Ingredients}
\noindent We describe the main ingredients used in our construction. 
\par Let $\cktclass=\{\cktclass_{\secparam}\}$ be the class of $\search$-searchable circuits associated with SSL. We denote  $\sz(\secparam)=\poly(\secparam)$ to be the maximum size of all circuits in $\cktclass_{\secparam}$. And let $\distr_{\cktclass}$ be the distribution associated with $\cktclass$. All the notions below are described in Section~\ref{sec:prelims}. 

\paragraph{q-Input-Hiding Obfuscators.} The notion of q-input-hiding obfuscators states that given an obfuscated circuit, it should be infeasible for a QPT adversary to find an accepting input; that is, an input on which the circuit outputs 1. We denote the q-input-hiding obfuscator scheme to be $\iho=(\iho.\obf,\iho.\eval)$ and the class of circuits associated with this scheme is $\cktclass$.

\paragraph{Subspace hiding obfuscation.} This notion allows for obfuscating a circuit, associated with subspace $A$, that checks if an input vector belongs to this subspace $A$ or not. In terms of security, we require that the obfuscation of this circuit is indistinguishable from obfuscation of another circuit that tests membership of a larger random (and hidden) subspace containing $A$. We denote the scheme to be $\shO=(\shO.\obf,\shO.\eval)$. The field associated with $\shO$ is $\Zq$ and the dimensions will be clear in the construction.

\paragraph{q-Simulation-Extractable Non-Interactive Zero-Knowledge (seNIZK) System.} This notion is a strengthening of a non-interactive zero-knowledge (NIZK) system. It guarantees the following property: suppose a malicious adversary, after receiving a simulated NIZK proof, produces another proof. Then, there exists an extractor that can extract the underlying witness associated with this proof with probability negligibly close to the probability of acceptance of the proof. We denote the seNIZK proof system to be $\ssnizk=(\crsgen,\prover,\verify)$ and we describe the NP relation associated with this system in the construction. We require this scheme to satisfy sub-exponential security and this can be instantiated by  Lemma~\ref{lem:qsenizks}. 

}

\fullversion{
\paragraph{Construction.} We describe the scheme of SSL below. 
}

\submversion{
\subsection{Construction}
}
\noindent We describe the scheme of SSL below. \submversion{ We encourage the reader to look at the overview of the construction presented in Section~\ref{sec:overview:cons} before reading the formal details below.}

\begin{itemize}

    \item $\setup(1^{\secparam})$: Compute $\crs \leftarrow \crsgen \left( 1^{\secparam_1} \right)$, where $\secparam_1=\secparam+n$ and $n$ is the input length of the circuit. Output $\crs$.
    \ \\
    \item $\gen(\crs)$: On input common reference string $\crs$, choose a random $\frac{\secparam}{2}$-dimensional subspace $A \subset \Zq^{\secparam}$. Set $\key = A$. 
    \ \\
    \item $\lessor(\key=A, C)$: On input secret key $\key$, circuit $C \in \cktclass_{\secparam}$, with input length $n$, 
    \begin{enumerate}
        \item Prepare the state $\ket{A} = \frac{1}{\sqrt{q^{\secparam/2}}} \underset{a \in A}{\sum} \ket{a}$. 
        \item Compute $\widetilde{C}\leftarrow \iho.\obf(C;r_o)$
        \item Compute $\widetilde{g} \leftarrow \shO(A;r_{A})$. 
        \item Compute $\widetilde{g_{\bot}} \leftarrow \shO(A^{\bot};r_{A^{\perp}})$.
        \item Let $x = \search(C)$; that is, $x$ is an accepting point of $C$. 
        \item Let $L$ be the NP language defined by the following NP relation. 
        $$\rel_L:=\left\{\left(\left(\widetilde{g},\widetilde{g_{\perp}},\widetilde{C} \right),\ \left(A,r_o,r_A,r_{A^\perp},C,x\right)\right) \Bigg|\ \substack{\widetilde{g} = \shO(A;r_A) \\ \widetilde{g_{\perp}} = \shO(A^\perp;r_{A^\perp}) \\ \widetilde{C} = \iho.\obf(C;r_o),\\ C(x)=1}\right\}. $$
        Compute $\pi \leftarrow \prover \left(\crs,\left(\widetilde{g},\widetilde{g_{\bot}},\widetilde{C} \right),\left( A,r_o,r_A,r_{A^\perp},C,x \right) \right) $
        \item Output $\rho_C=\ket{\Phi_C}\bra{\Phi_C}= \left( \ket{A}\bra{A},\widetilde{g},\widetilde{g_{\bot}},\widetilde{C},\pi \right)$.  
    \end{enumerate}
    \ \\
    \item $\run(\crs,\rho_C,x)$:
    \begin{enumerate}
        \item Parse $\rho_C$ as $\left( \rho,\widetilde{g},\widetilde{g_{\bot}},\widetilde{C},\pi\right)$. In particular, measure the last 4 registers. \\
        {\em Note: This lets us assume that the input to those registers is just classical, since anyone about to perform $\run$ might as well measure those registers themselves.}
        \item We denote the operation $\shO.\eval(\widetilde{g}, \ket{x}\ket{y}) = \ket{x}\ket{y \oplus \mathbb{1}_A(x)}$ by $\widetilde{g}[\ket{x}\ket{y}]$, where $\mathbb{1}_A(x)$ is an indicator function that checks membership in $A$. Compute $\widetilde{g}[\rho \otimes \ket{0}\bra{0}]$ and measure the second register.  Let $a$ denote the outcome bit, and let $\rho'$ be the post-measurement state.
        \item As above, we denote the operation $\shO.\eval(\widetilde{g_{\bot}}, \ket{x}\ket{y}) = \ket{x}\ket{y \oplus \mathbb{1}_A(x )}$ by $\widetilde{g_{\bot}}[\ket{x}\ket{y}]$. Compute $\widetilde{g_{\bot}}[\ft \rho' \ft^{\dagger} \otimes \ket{0}\bra{0}]$ and measure the second register. Let $b$ denote the outcome bit.
        
        {\em Note: in Step 2 and 3, $\run$ is projecting $\rho$ onto $\ket{A}\bra{A}$ if $a=1$ and $b=1$.}
        
        \item Afterwards, perform the Fourier Transform again on the first register of the post-measurement state, let $\rho''$ be the resulting state.
        
        \item Compute $c \leftarrow \verify\left(\crs, \left(\widetilde{g},\widetilde{g_{\bot}},\widetilde{C} \right),\pi\right)$
        \item If either $a=0$ or $b=0$ or $c=0$, reject and output $\bot$.
        \item Compute $y \leftarrow \iho.\eval \left(\widetilde{C},x \right)$.
        \item Output $\left(\rho'',\widetilde{g},\widetilde{g_{\bot}},\widetilde{C},\pi\right)$ and $y$.
    \end{enumerate}
    \item $\return(\key=A,\rho_C)$:
    \begin{enumerate}
        \item Parse $\rho_C$ as $\left( \rho,\widetilde{g},\widetilde{g_{\bot}},\widetilde{C},\pi\right)$.
        \item Perform the measurement $\{\ket{A}\bra{A},I-\ket{A}\bra{A}\}$ on $\rho$. If the measurement outcome corresponds to $\ket{A}\bra{A}$, output $1$. Otherwise, output $0$.
    \end{enumerate}
\end{itemize}

\begin{lemma} [Overwhelming probability of perfect correctness] The above scheme  satisfies $\epsilon=\negl(\secparam)$ correctness.
\end{lemma}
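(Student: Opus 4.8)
The plan is to establish the two components of $\varepsilon$-correctness separately. Correctness of $\return$ is immediate: on a state $\rho_C$ honestly produced by $\lessor$ the first register is exactly $\ket{A}\bra{A}$, so the projective measurement $\{\ket{A}\bra{A}, I-\ket{A}\bra{A}\}$ performed by $\return$ returns the first outcome with probability $1$, and hence $\return(\key,\rho_C)=1$ with certainty.

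For correctness of $\run$, I would first condition on the event $\mathsf{E}$ that the two obfuscated circuits $\widetilde{g},\widetilde{g_\perp}$ are perfectly correct membership checkers for $A$ and $A^\perp$ respectively on all inputs; by the correctness guarantee of the subspace hiding obfuscator and a union bound, $\Pr[\mathsf{E}]\geq 1-\negl(\secparam)$. Conditioned on $\mathsf{E}$ I would trace through $\run(\crs,\rho_C,x)$ for an arbitrary input $x$. Measuring the four classical registers in Step~1 does not disturb an honestly generated $\rho_C$. In Step~2, applying $\widetilde{g}$ coherently to $\ket{A}\ket{0}$ produces $\ket{A}\ket{1}$ because every basis vector in the support of $\ket{A}$ lies in $A$, so the outcome is $a=1$ and the post-measurement state is again $\ket{A}\bra{A}$. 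In Step~3, using $\ft\ket{A}=\ket{A^\perp}$ together with the fact that $\widetilde{g_\perp}$ checks membership in $A^\perp$, the outcome is $b=1$ and the post-measurement state is $\ket{A^\perp}\bra{A^\perp}$; Step~4 applies $\ft$ once more, and since $(A^\perp)^\perp=A$ the state $\rho''$ is restored to $\ket{A}\bra{A}$ (this is the identity $\ft(\Pi_{A^\perp})\ft\,\Pi_A=\ket{A}\bra{A}$ recalled in the preliminaries). In Step~5, the proof $\pi$ was generated by $\prover$ on the true statement $(\widetilde{g},\widetilde{g_\perp},\widetilde{C})$ with the witness $(A,r_o,r_A,r_{A^\perp},C,x')$, which is a valid witness because $x'=\search(C)$ satisfies $C(x')=1$ by searchability and the three obfuscations were produced with exactly the recorded randomness; hence perfect completeness of the $\ssnizk$ system (Lemma~\ref{lem:qsenizks}) yields $c=1$. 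Therefore $\run$ does not output $\bot$ in Step~6, and in Step~7 the functionality of the input-hiding obfuscator gives $y=\iho.\eval(\widetilde{C},x)=C(x)$. Since this analysis is input-independent up to the final evaluation and leaves the leased state undisturbed, it holds simultaneously for all $x$, so $\Pr[\forall x,\ y=C(x)]\geq\Pr[\mathsf{E}]\geq 1-\negl(\secparam)$.

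The proof is largely bookkeeping, and I do not anticipate a genuine obstacle. The one point deserving care is that the subspace hiding obfuscator is only correct with overwhelming rather than perfect probability, and this is precisely where the $\negl(\secparam)$ loss in $\varepsilon$ originates — the $\ssnizk$ (via perfect completeness) and the functionality of $\iho$ contribute no error at all. A secondary subtlety is that measuring the four classical registers in Step~1 is harmless only because an honest $\rho_C$ already has them in a computational-basis state; this is what licenses treating $\widetilde{g},\widetilde{g_\perp},\widetilde{C},\pi$ as fixed classical strings throughout the rest of the argument.
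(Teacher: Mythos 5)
Your proof is correct and follows essentially the same route as the paper: correctness of $\return$ is immediate since the honest leased state is exactly $\ket{A}\bra{A}$, and correctness of $\run$ reduces to showing $a=b=c=1$, with the only negligible loss coming from the correctness of $\shO$ while the perfect completeness of $\ssnizk$ and perfect correctness of $\qiho$ contribute no error. Your step-by-step tracing of the state through the Fourier transforms is simply a more detailed rendering of the same argument.
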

\begin{proof}
We first argue that the correctness of $\run$ holds. Since $\qiho$ is perfectly correct, it suffices to show that $\run$ will not output $\bot$. For this to happen, we need to show that $a,b,c=1$. Since $\widetilde{g}=\shO(A)$,  $\widetilde{g_\perp}=\shO(A^\perp)$, and the input state is $\ket{A}\bra{A}$, then $a=1$ and $b=1$ with probability negligibly close to $1$ by correctness of $\shO$.  If $\pi$ is a correct proof, then by perfect correctness of $\ssnizk$, we have that $\prob[c=1]=1$.   \\
\\ To see that the correctness of $\return$ also holds, note that the leased state is $\rho=\ket{A}\bra{A}$, which will pass the check with probability $1$.

\end{proof}

\begin{lemma}
\label{lem:mainconstruction}
Fix $\beta=\mu(\secparam)$, where $\mu(\secparam)$ is any non-negligible function. Assuming the security of $\qiho,\ssnizk$ and $\shO$, the above scheme satisfies  $(\beta,\gamma,\distrc)$-infinite-term perfect lessor security, where $\gamma$ is a negligible function. 
\end{lemma}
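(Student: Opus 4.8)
\textbf{Proof plan for Lemma~\ref{lem:mainconstruction}.} The plan is to fix an arbitrary QPT pirate $\adversary$ that, on input $(\crs,\rho_C)$ with $C\leftarrow\distrc$, outputs a bipartite state $\sigma^*$ on $(\Reg_1,\Reg_2)$, and to show that the probability of the ``bad event'' in the infinite-term security game is negligible. The key structural move is a dichotomy on the classical parts of the two output copies. Recall $\run$ begins by measuring the classical registers $(\widetilde{g},\widetilde{g_\perp},\widetilde{C},\pi)$ of any input state; so w.l.o.g.\ we may assume $\sigma^*$ is of the form $\sigma^*_{\mathrm{quant}}\otimes\ketbra{(\widetilde{g}^{(1)},\widetilde{g_\perp}^{(1)},\widetilde{C}^{(1)},\pi^{(1)})}{\cdot}\otimes\ketbra{(\widetilde{g}^{(2)},\widetilde{g_\perp}^{(2)},\widetilde{C}^{(2)},\pi^{(2)})}{\cdot}$ where the classical strings may be correlated with each other and with the quantum part. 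Let $(\widetilde{g},\widetilde{g_\perp},\widetilde{C},\pi)$ denote the honestly-generated classical part contained in $\rho_C$. We split on whether the classical part of $\Reg_1$ equals $(\widetilde{g},\widetilde{g_\perp},\widetilde{C})$ (the ``duplicator branch'') or differs from it (the ``mauler branch''); since the bad event requires $\Reg_1$ to pass $\return$ and $\Reg_2$ to compute $C$, and by symmetry of how the branches are argued, it suffices to bound each branch's contribution separately by a negligible quantity.

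\textbf{Mauler branch.} Here the pirate outputs in $\Reg_1$ (or $\Reg_2$) a classical triple $(\widetilde{g}^*,\widetilde{g_\perp}^*,\widetilde{C}^*)\neq(\widetilde{g},\widetilde{g_\perp},\widetilde{C})$ together with an accepting proof $\pi^*$, while the corresponding copy still evaluates $C$ correctly on all inputs with probability $\ge\beta$ (in particular on $x=\search(C)$). First I would move to a hybrid in which $\crs$ is generated by $\fksetup$ and $\pi$ is produced by the simulator $\simr_1$ (this is allowed since $\ssnizk$ is $n$-sub-exponentially secure and $\secparam_1=\secparam+n$, so the honest $\lessor$ never needs the witness $(A,r_o,r_A,r_{A^\perp},C,x)$ after this switch — and crucially, distinguishing this switch is itself a boolean task so the sub-exponential parameter is not yet needed, just standard negligible security). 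Then invoke $q$-simulation-extractability (with $q=1$): from $\adversary$ producing an accepting $(\left(\widetilde{g}^*,\widetilde{g_\perp}^*,\widetilde{C}^*\right),\pi^*)$ with distinct statement, the extractor $\simr_2$ recovers a witness, in particular an input $x^*$ with $\widetilde{C}^*(x^*)=1$. Now I need to conclude $C(x^*)=1$; this uses that the copy is functionally equivalent to $C$ — but the copy only evaluates correctly with probability $\beta$, not $1$, and $\run$ involves a (possibly state-destroying) projection $\ketbra{A}{A}$ on the quantum register before running $\iho.\eval$ on $\widetilde{C}^*$. This is the subtle point flagged in the overview: to extract one useful bit I would run in time $2^n$ to brute-force check whether $\widetilde{C}^*$ is functionally equivalent to $C$, setting $\secparam_1$ so that $\ssnizk$ and $\iho$ survive adversaries of running time $2^{\widetilde O(n)}$. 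Given functional equivalence, $C(x^*)=\widetilde{C}^*(x^*)=1$, contradicting $q$-input-hiding of $\iho$ on $\widetilde{C}\leftarrow\iho.\obf(C)$ (note $C\leftarrow\distrc$ is evasive, so $\iho$ applies). Hence this branch occurs with only negligible probability.

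\textbf{Duplicator branch.} Here both copies carry the \emph{same} honest classical part $(\widetilde{g},\widetilde{g_\perp},\widetilde{C},\pi)$, with $\widetilde{g}=\shO(A)$, $\widetilde{g_\perp}=\shO(A^\perp)$. The quantum register of $\Reg_1$ must pass $\return$, i.e.\ pass the projective measurement $\{\ketbra{A}{A},I-\ketbra{A}{A}\}$; and the quantum register of $\Reg_2$ must let $\run$ succeed, which (Steps 2–3 of $\run$) requires passing the projection onto $\ketbra{A}{A}$ implemented via $\widetilde{g},\widetilde{g_\perp}$ — so both registers, with non-negligible probability, are accepted by $\ketbra{A}{A}$. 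Combining, the pirate has (essentially) produced $\ketbra{A}{A}^{\otimes 2}$ from a single copy of $(\ket{A},\widetilde{g},\widetilde{g_\perp})$ together with the extra classical data $\widetilde{C},\pi$; but $\widetilde C,\pi$ are efficiently computable from $(C,A,r_A,r_{A^\perp})$ and the simulator, so they give the pirate nothing beyond what a $\shO$-security reduction can supply. I would formalize this by a reduction to the no-cloning property of Zhandry's construction: switch $\crs$ to the simulated mode and generate $\pi$ via $\simr_1$ (so the witness is not needed), then a successful duplicator-branch pirate yields a QPT algorithm that, given $\ket{A}$, $\shO(A)$, $\shO(A^\perp)$, outputs a state whose both halves are accepted by $\ketbra{A}{A}$ with non-negligible probability — contradicting the uncloneability guarantee of subspace-hiding obfuscation. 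One technical care point: the Almost-As-Good-As-New Lemma (Lemma~\ref{clm:ASGAN}) is used to argue that applying $\return$'s measurement on $\Reg_1$ and $\run$'s projection on $\Reg_2$ can be viewed as genuine near-projective measurements onto $\ketbra{A}{A}$, so that ``both pass'' really does certify two copies close to $\ket{A}$.

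\textbf{Main obstacle.} The hard part is the mauler branch's ``one-bit extraction'' issue: the security definition only promises $\run$-correctness with probability $\beta$ on each input, $\run$ is a genuine quantum (measuring) procedure, and the seNIZK switch is only indistinguishable to boolean distinguishers, whereas a $\widetilde{C}^*$ differing from $C$ at a single point is a priori invisible to any polynomial-time test. Resolving it requires the sub-exponential assumptions precisely so that a $2^n$-time functional-equivalence check can be wrapped into the distinguisher and the reduction to $\iho$, and one must carefully track that $\secparam_1=\secparam+n$ makes every invoked primitive secure against $2^{\widetilde O(n)}$-time adversaries. A secondary subtlety is handling entanglement between the two output registers and between the (adversarially chosen) classical parts and the quantum parts — this is dealt with by first measuring all classical registers (so they become classical randomness the reductions can condition on) and then arguing the bound holds for every fixing.
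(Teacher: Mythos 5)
Your plan follows essentially the same route as the paper's proof: the same dichotomy (both pirated copies carry the honest classical part $\Rightarrow$ reduce to unclonability of $(\ket{A},\shO(A),\shO(A^\perp))$, simulating $\pi$ so the reduction needs no witness; some classical part differs $\Rightarrow$ simulation-extract $x^*$ from the new proof, use functional equivalence to get $C(x^*)=1$, contradict q-input-hiding), including the $2^n$-time functional-equivalence check that forces the sub-exponential choice of $\secparam_1$. Two small remarks. First, your phrasing of the case split on $\Reg_1$'s classical part and of $\Reg_1$ ``passing $\return$'' conflates the finite-term game with the infinite-term one being proved; this is cosmetic, since with the honest classical part $\run$'s steps 2--3 are exactly the projection onto $\ket{A}\bra{A}$, and the branches as you actually argue them (both parts honest vs.\ at least one changed) coincide with the paper's. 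Second, and more substantively, the paper wraps the $2^n$-time equivalence check \emph{only} into the $\ssnizk$ reduction: once in the simulated-CRS hybrid, the bad event is upper-bounded by $\prob[C(x^*)=1]$, and the resulting reduction to $\iho$ (run $\adversary$, run $\simr_2$, output $x^*$) is plain QPT, so only polynomial q-input-hiding is needed. Your plan instead puts the $2^n$ check inside the $\iho$ reduction as well, which would require sub-exponentially secure input hiding --- stronger than what the lemma assumes; reorder the argument as above to close that mismatch.
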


\submversion{
\noindent The proof of the above lemma is presented in Section~\ref{sec:mainproof}. 
}

\fullversion{
\begin{proof}
For any QPT adversary $\cA$, define the following event. \\

\noindent \underline{$\realevent(1^{\secparam})$}: 
\begin{itemize}
    \item $\crs \leftarrow \setup\left( 1^{\secparam} \right)$, 
    \item $\key \leftarrow \gen(\crs)$,
    \item $C \leftarrow \distrc(\secparam)$,
    \item $\left(\rho_C=\left( \ket{A}\bra{A},\widetilde{g},\widetilde{g_{\bot}},\widetilde{C},\pi \right)\right)  \leftarrow \lessor\left(\key,r \right)$ 
    \item $\rho^*=\left( \widetilde{C}^{(1)},\widetilde{g}^{(1)},\widetilde{g_{\bot}}^{(1)},\pi^{(1)},\widetilde{C}^{(2)},\widetilde{g}^{(2)},\widetilde{g_{\bot}}^{(2)},\pi^{(2)},\sigma^*  \right) \leftarrow \adversary\left(\crs, \rho_C \right)$\\
    {\em That is, $\adversary$ outputs two copies; the classical part in the first copy is $\left( \widetilde{C}^{(1)},\widetilde{g}^{(1)},\widetilde{g_{\bot}}^{(1)},\pi^{(1)} \right)$ and the classical part in the second copy is $\left( \widetilde{C}^{(2)},\widetilde{g}^{(2)},\widetilde{g_{\bot}}^{(2)},\pi^{(2)} \right)$. Moreover, it outputs a single density matrix $\sigma^*$ associated with two registers $\Reg_1$ and $\Reg_2$; the state in $\Reg_1$ is associated with the first copy and the state in $\Reg_2$ is associated with the second.   }
    \item $ \sigma^*_1 = \tr_2[\sigma^*]$
    \item $\rho_C^{(1)} = \left(\sigma^*_1, \widetilde{C}^{(1)},\widetilde{g}^{(1)},\widetilde{g_{\bot}}^{(1)},\pi^{(1)} \right) \bigwedge \rho_C^{(2)} = \left( \Pi_2( \sigma^*),\widetilde{C}^{(2)},\widetilde{g}^{(2)},\widetilde{g_{\bot}}^{(2)},\pi^{(2)} \right)$ where
    $$\Pi_2(\sigma^*) = \frac{\tr_1 \left[(\Pi_{(\widetilde{g^{(1)}},{\widetilde{g^{(1)}_\perp}})}\otimes I) \sigma^*\right]}{\tr \left[(\Pi_{(\widetilde{g^{(1)}},{\widetilde{g^{(1)}_\perp}})}\otimes I) \sigma^*\right]} $$
    and where $\Pi_{(\widetilde{g^{(1)}},{\widetilde{g^{(1)}_\perp}})}$ is the projection onto the subspace obfuscated by $(\widetilde{g^{(1)}}, \widetilde{g^{(1)}_\perp})$. In other words, $\Pi_2(\sigma^*)$ is the quantum state on register 2 conditioned on $\run$ not outputting $\bot$ when applied to register $1$.
\end{itemize}

\noindent To prove the lemma, we need to prove the following: 

 $$\underset{}{\prob}\left[ \substack{ \forall x, \prob\left[ (\run(\crs,x,\sigma_1^*) = C(x)\right] \geq \beta \\ \ \\ \bigwedge \\ \ \\  \forall x,x', \prob \left[\run(\crs, x', \cE_{x}^{(2)}(\sigma^*)) = C(x')  \right] \geq \beta }\ :\ \realevent\left( 1^{\secparam} \right) \right] \leq \gamma.$$
 
 \noindent Note that for all $x$, $\cE_x^{(2)}(\sigma^*) = \Pi_2(\sigma^*)$, since the only quantum operation that $\run$ performs is projecting the first register of $\sigma^*$ onto the subspace corresponding to $\widetilde{g}^{(1)}$.
\noindent Consider the following: 
\begin{itemize}
    \item Define $\gamma_1$ as follows: 
    $$\prob \left[ \substack{ \forall x, \prob\left[ (\run(\crs,x,\sigma_1^*) = C(x)\right] \geq \beta \\ \ \\ \bigwedge \\ \ \\  \forall x', \prob \left[\run(\crs, x', \Pi_2(\sigma^*)) = C(x')  \right] \geq \beta \\ \ \\ \bigwedge \\ \ \\  \left(\widetilde{C},\widetilde{g},\widetilde{g
_{\bot}}\right) =  \left(\widetilde{C}^{(1)},\widetilde{g}^{(1)},\widetilde{g_{\bot}}^{(1)} \right) \\ \ \\ \bigwedge \\ \ \\ \left(\widetilde{C},\widetilde{g},\widetilde{g
_{\bot}}\right) =  \left(\widetilde{C}^{(2)},\widetilde{g}^{(2)},\widetilde{g_{\bot}}^{(2)} \right) }\  :\ \realevent\left( 1^{\secparam} \right) \right] = \gamma_1$$
    \item The other possible case is the case where at least one of the copies $\left(\widetilde{C}^{(1)},\widetilde{g}^{(1)},\widetilde{g_{\bot}}^{(1)} \right)$ or $\left(\widetilde{C}^{(2)},\widetilde{g}^{(2)},\widetilde{g_{\bot}}^{(2)} \right)$ is not equal to the corresponding resgisters of the original copy. Without loss of generality, we will assume that the the second copy is not the same. Define $\gamma_2$ as follows: 
     $$\prob \left[ \substack{ \forall x, \prob\left[ (\run(\crs,x,\sigma_1^*) = C(x)\right] \geq \beta \\ \ \\ \bigwedge \\ \ \\  \forall x', \prob \left[\run(\crs, x', \Pi_2(\sigma^*)) = C(x')  \right] \geq \beta \\ \ \\ \bigwedge \\ \ \\ \left(\widetilde{C},\widetilde{g},\widetilde{g
_{\bot}}\right) \neq  \left(\widetilde{C}^{(2)},\widetilde{g}^{(2)},\widetilde{g_{\bot}}^{(2)}\right) }\ :\ \realevent\left( 1^{\secparam} \right) \right] = \gamma_2$$
\end{itemize}
Note that $\gamma=\gamma_1+\gamma_2$. In the next two propositions, we prove that both $\gamma_1$ and $\gamma_2$ are negligible which will complete the proof of the lemma.

\begin{proposition}
$\gamma_1\leq \negl(\secparam)$
\end{proposition}
\begin{proof}

The run algorithm first projects $\sigma^*$ into $\ket{A}^{\otimes 2}$, and outputs $\bot$ if $\sigma^*$ is not $(\ket{A}\bra{A})^{\otimes 2}$.
Suppose that $\bra{A} \sigma_1^* \ket{A}$ is negligible, then $\run$ will output $\bot$ on the first register with probability negligibly close to $1$, and we would have $\gamma_1$ negligible as desired. 

On the contrary, suppose that $\bra{A} \sigma_1^* \ket{A}$ is non-negligible, and we have that $$\Pi_2(\sigma^*)=\frac{\tr_1\left[(\ket{A}\bra{A}\otimes I)\sigma^*\right]}{\tr\left[(\ket{A}\bra{A}\otimes I)\sigma^*\right]}$$ 
i.e. the state in the second register after $\run$ succesfully projects $\sigma_1^*$ onto $\ket{A}\bra{A}$.

We will prove the following claim, which implies that at least one of the two copies will output $\bot$ under $\run$ with probability neglibly close to $1$.
\begin{claim}\label{clm:notsame}
$\bra{A} \Pi_2(\sigma^*) \ket{A} \leq \negl(\secparam)$
\end{claim}
\begin{proof}
Suppose not.  Then, we can use $\cA$ to break quantum no-cloning.  Specifically, Zhandry \cite{Zha19} showed that no QPT algorithm on input $(\ket{A},\widetilde{g}:=\shO(A),\widetilde{g_\perp}:=\shO(A^\perp))$ can prepare the state $\ket{A}^{\otimes 2}$ with non-negligible probability. We will show that $\cA$ allows us to do exactly this if $\bra{A} \Pi_2(\sigma^*)\ket{A}$ is non-negligible.

Consider the following adversary $\cB'$. It runs $\cA$ and then projects the output of $\cA$ onto $\left( \ket{A}\bra{A} \right)^{\otimes 2}$; the output of the projection is the output of $\cB'$. 

\paragraph{$\cB'(C)$:}
\begin{enumerate}
    \item Compute $\crs,\key$ as in the construction
    \item Compute $\rho_C \leftarrow \lessor(\key,C)$. Let $\rho_C=\left( \ket{A}\bra{A},\widetilde{g},\widetilde{g_{\bot}},\widetilde{C},\pi \right)$. 
    \item Compute $\cA(\crs,\rho_C)$ to obtain $\left( \widetilde{C}^{(1)},\widetilde{g}^{(1)},\widetilde{g_{\bot}}^{(1)},\pi^{(1)},\widetilde{C}^{(2)},\widetilde{g}^{(2)},\widetilde{g_{\bot}}^{(2)},\pi^{(2)},\sigma^*\right)$.
    \item Then, project $\sigma^*$ onto $(\ket{A}\bra{A})^{\otimes 2}$ by using $\widetilde{g}$ and $\widetilde{g_\perp}$. Let $m$ be the outcome of this projection, so $m=1$ means that the post measured state is $(\ket{A}\bra{A})^{\otimes 2}$. 
    \item Output $m$.
\end{enumerate}

The projection $(\ket{A}\bra{A})^{\otimes 2}$ can be done by first projecting the first register onto $\ket{A}\bra{A}$ and then the second register. Conditioned on the first register not outputting $\bot$, means that $\sigma^*_1$ is succesfully projected onto $\ket{A}\bra{A}$. By our assumption that $\bra{A} \sigma^*_1 \ket{A}$ is non-negligible, this will happen with non-negligible probability.  Conditioned on this being the case, if $\bra{A} \Pi_2( \sigma^*) \ket{A}$ is non-negligible, then projecting the second register onto $\ket{A}\bra{A}$ will also succeed with non-negligible probability.  This means that $m=1$ with non-negligible probability.

 Consider the following adversary. It follows the same steps as $\cB'$ except in preparing the states $\ket{A}$ and computing obfuscations $\widetilde{g}$, $\widetilde{g_{\bot}}$; it gets these quantities as input. Moreover, it simulates the proof $\pi$ instead of computing the proof using the honest prover. This is because unlike $\cB'$, the adversary $\cB$ does not have the randomness used in computing $\widetilde{g}$ and $\widetilde{g_{\bot}}$ and hence cannot compute the proof $\pi$ honestly. 

\paragraph{$\cB(\ket{A},\widetilde{g},\widetilde{g_\perp})$:}
\begin{enumerate}
    \item Sample randomness $r_o$ and compute $\tilde{C} \leftarrow \qiho.\obf(C;r_o)$.
    \item Let $\fksetup$ and $\simr$ be associated with the simulation-extractability propety of $\ssnizk$. Compute $(\widetilde{\crs},\td) \leftarrow \fksetup(1^{\secparam})$. 
    \item Compute $(\pi,\st) \leftarrow \simr \left( \widetilde{\crs},\td,\left(\widetilde{g},\widetilde{g_\perp},\widetilde{C} \right) \right) $
    \item Let $\rho_C= (\ket{A}\bra{A},\widetilde{g}, \widetilde{g_\perp}, \widetilde{C},\pi)$
    \item Run $\cA(\widetilde{\crs},\rho_C)$ to obtain $\left(\widetilde{C}^{(1)},\widetilde{g}^{(1)},\widetilde{g_{\bot}}^{(1)},\pi^{(1)},\widetilde{C}^{(2)},\widetilde{g}^{(2)},\widetilde{g_{\bot}}^{(2)},\pi^{(2)},\widetilde{\sigma^*}\right)$.
    \item Then, project $\widetilde{\sigma^*}$ onto $(\ket{A}\bra{A})^{\otimes 2}$ by using $\widetilde{g}$ and $\widetilde{g_\perp}$. Let $m$ be the outcome of this projection, so $m=1$ means that the post measured state is $(\ket{A}\bra{A})^{\otimes 2}$.
    \item Output $m$.
\end{enumerate}

\noindent Note that from the q-simulation-extractability property\footnote{We don't need the full-fledged capability of q-simulation-extractability to argue this part; we only need q-zero-knowledge property which is implied by q-simulation-extractability.} of $\ssnizk$, it follows that the probability that $\cB$ outputs 1 is negligibly close to the probability that $\cB'$ outputs $1$ because everything else is sampled from the same distribution.  This implies that $\cB$ on input $(\ket{A},\widetilde{g},\widetilde{g_\perp})$ outputs $\ket{A}^{\otimes 2}$ with non-negligible probability, contradicting~\cite{Zha19}.

\end{proof}

At this point, we want to show that if $\left( \widetilde{g^{(2)}},\widetilde{g^{(2)}_\perp} \right) = \left( \widetilde{g},\widetilde{g_\perp} \right)$, and $\bra{A}\Pi_2(\sigma^*)\ket{A} \leq \negl(\secparam)$, then the probability that $\run(\crs, \Pi_2(\sigma^*),x)$ evaluates $C$ correctly is negligible.

By correctness of $\shO$, we have $$\prob[\forall x \text{ } \widetilde{g^{(2)}}(x) = \mathbb{1}_A(x)] \geq 1-\negl(\secparam)$$ $$\prob[\forall x \text{ } \widetilde{g_\perp^{(2)}}(x) = \mathbb{1}_{A^\perp}(x)] \geq 1-\negl(\secparam)$$ 

This means that with probability negligibly close to $1$, the first thing that the $\run$ algorithm does on input $\rho_C^{(2)}=(\Pi_2(\sigma^*), \widetilde{g^{(2)}}, \widetilde{g^{(2)}_\perp}, \widetilde{C},\pi)$ is to measure $\{\ket{A}\bra{A},I-\ket{A}\bra{A}\}$ on $\Pi_2( \sigma^*)$. If $I-\ket{A}\bra{A}$ is obtained, then the $\run$ algorithm will output $\bot$. By Claim~\ref{clm:notsame}, the probability that this happens is neglibly close to 1. Formally, when $\widetilde{g}$ and $\widetilde{g_\perp}$ are subspace obfuscations of $A$ and $A^\perp$ respectively, the check $a=1$ and $b=1$ performed by the $\run$ algorithm is a projection onto $\ket{A}\bra{A}$. 

\begin{align*}
\prob[a=1,b=1] &= \tr[\ft^\dagger \Pi_{A^\perp} \ft \Pi_A \Pi_2( \sigma^*)] \\
&= \tr[\ket{A}\bra{A} \Pi_2( \sigma^*)] \\
&= \bra{A}\Pi_2( \sigma^*)\ket{A}\\
&\leq \negl(\secparam)
\end{align*}
where $\Pi_A = \underset{a \in A}{\sum} \ket{a}\bra{a}$ and $\Pi_{A^\perp} = \underset{a \in A^\perp}{\sum} \ket{a}\bra{a}$.
From this, we have that $\prob[\run(\crs,\rho_C^{(2)},x)=\bot] \geq 1-\negl(\secparam)$, and we have
$\prob[\run(\crs,\rho_C^{(2)},x)=C(x)] \leq \negl(\secparam)$ with probability neglibly close to $1$.

This finishes our proof that if $\beta$ is non-negligible, then $\gamma_1\leq \negl(\secparam)$.
\end{proof}
\begin{proposition}
$\gamma_2\leq\negl(\secparam)$.  
\end{proposition}
\begin{proof}
We consider the following hybrid process. \\

\noindent \underline{$\hybprocess_1(1^{\secparam})$}: 
\begin{itemize}
     \item $\left( \widetilde{\crs},\td \right) \leftarrow \fksetup\left( 1^{\secparam} \right)$,
    \item $\key \leftarrow \gen(\crs)$,
    \item $C \leftarrow \distrc(\secparam)$,
    \item Sample a random $\frac{\secparam}{2}$-dimensionall sub-space $A \subset \Zq^{\secparam}$. Prepare the state $\ket{A}=\frac{1}{\sqrt{q^{\secparam/2}}} \sum_{a \in A} \ket{a}$.
    \item Compute $\widetilde{g} \leftarrow \shO \left(A;r_A\right)$, 
    \item Compute $\widetilde{g_{\bot}} \leftarrow \shO \left(A^{\perp};r_{A^{\perp}}\right)$,
    \item Compute $\widetilde{C}\leftarrow \iho.\obf(C;r_o)$
    \item $(\pi,\st) \leftarrow \simr_1 \left(\crs, \td,\left(\widetilde{g}, \widetilde{g_{\bot}},\widetilde{C} \right) \right) $
    \item Set $\rho_C=\left( \ket{A}\bra{A},\widetilde{g},\widetilde{g_{\bot}},\widetilde{C},\pi \right)$. 
    \item $\left( \widetilde{C}^{(1)},\widetilde{g}^{(1)},\widetilde{g_{\bot}}^{(1)},\pi^{(1)},\widetilde{C}^{(2)},\widetilde{g}^{(2)},\widetilde{g_{\bot}}^{(2)},\pi^{(2)},\sigma^*  \right) \leftarrow \adversary\left(\crs, \rho_C \right)$
    \item Set $ \sigma^*_1 = \tr_2[\sigma^*]$
    \item Set $\rho_C^{(1)} = \left( \sigma^*_1,\widetilde{C}^{(1)},\widetilde{g}^{(1)},\widetilde{g_{\bot}}^{(1)},\pi^{(1)} \right)$ and $ \rho_C^{(2)} = \left( \Pi_2( \sigma^*),\widetilde{C}^{(2)},\widetilde{g}^{(2)},\widetilde{g_{\bot}}^{(2)},\pi^{(2)} \right)$
    \item $\left(A^*,r_o^*,r_A^*,r_{A^{\perp}}^*,C^*,x^* \right) \leftarrow \simr_2\left(\st,\left(\widetilde{g}^{(2)},\widetilde{g_{\bot}}^{(2)},\widetilde{C}^{(2)} \right),\ \pi^{(2)} \right)$. 
\end{itemize}

\noindent The proof of the following claim follows from the q-simulation-extractactability property of $\ssnizk$. 

\begin{claim}
\label{clm:senizk}

Assuming that $\ssnizk$ satisfies q-simulation extractability property secure against QPT adversaries running in time $2^{n}$, we have: 
$$\prob \left[ \substack{\left(\left( \widetilde{g}^{(2)},\widetilde{g_{\bot}}^{(2)},\widetilde{C}^{(2)} \right),\ \left(A^*,r_o^*,r_A^*,r_{A^{\perp}}^*,C^*,x^* \right)\right) \in \rel(L)  \\ \ \\ \bigwedge\\ \ \\  \forall x',\  \prob\left[\run\left(\crs,\rho^{(2)},x' \right) = C(x')  \right]  \geq \beta \\ \ \\ \bigwedge \\ \ \\ \left(\widetilde{C},\widetilde{g},\widetilde{g_{\bot}} \right)  \neq \left( \widetilde{C}^{(2)},\widetilde{g}^{(2)},\widetilde{g_{\bot}}^{(2)} \right)}\ :\ \hybprocess_1 \left(1^{\secparam} \right) \right] = \delta_1$$ 
Then, $|\delta_1-\gamma_2| \leq \negl(\secparam)$. 
\end{claim}
\begin{remark}
Note that $n$ is smaller than the length of the NP instance and thus, we can invoke the sub-exponential security of the seNIZK system guaranteed in Lemma~\ref{lem:qsenizks}. 
\end{remark}
\begin{proof}[Proof of Claim~\ref{clm:senizk}]
Consider the following $\ssnizk$ adversary $\reduction$: 
\begin{itemize}
    \item It gets as input $\crs$.
    \item It samples and computes $(C,A,\widetilde{g},\widetilde{g_{\bot}},\widetilde{C})$ as described in $\hybprocess_1(1^{\secparam})$. It sends the following instance-witness pair to the challenger of seNIZK: $$\left(\left(C,A,\widetilde{g},\widetilde{g_{\bot}},\widetilde{C}\right),\ ((A,r_o,r_A,r_{A^{\bot}},C,x) \right),$$ 
    where $r_o,r_A,r_{A^{\perp}}$ is, respectively, the randomness used to compute obfuscations  $\widetilde{g}$, $\widetilde{g_{\perp}}$ and $ \widetilde{C}$.   
    \item The challenger returns back $\pi$. 
    \item $\reduction$ then sends $\left(\ket{A},\widetilde{g},\widetilde{g_{\perp}},\widetilde{C},\pi \right)$ to $\cA$. 
    \item $\cA$ then outputs $\left( \widetilde{C}^{(1)},\widetilde{g}^{(1)},\widetilde{g_{\bot}}^{(1)},\pi^{(1)},\widetilde{C}^{(2)},\widetilde{g}^{(2)},\widetilde{g_{\bot}}^{(2)},\pi^{(2)},\sigma^*  \right)$. 
     \item $\reduction$ sets $ \sigma^*_1 = \tr_2[\sigma^*]$. 
    \item Finally, $\reduction$ performs the following checks: 
    \begin{itemize}
        \item {\em Verify if the classical parts are different}: Check if $\left( \widetilde{C},\widetilde{g},\widetilde{g_{\bot}} \right) = \left( \widetilde{C}^{(2)},\widetilde{g}^{(2)},\widetilde{g_{\bot}}^{(2)} \right)$; if so output $\bot$, otherwise continue. 
        \item {\em Verify if second copy computes $C$}: If the measurement above does not output $\bot$, set $ \rho_C^{(2)} = \left( \Pi_2(\sigma^*),\widetilde{C}^{(2)},\widetilde{g}^{(2)},\widetilde{g_{\bot}}^{(2)},\pi^{(2)}\right)$. For every $x$, check if $\widetilde{C}^{(2)}(x)=C(x)$. If for any $x$, the check fails, output $\bot$. {\em // Note that this step takes time $2^{O(n+\log(n))}$.}

    \end{itemize}
    \item Output $\left(\left(\widetilde{g}^{(2)},\widetilde{g_{\bot}}^{(2)},\widetilde{C}^{(2)} \right),\pi^{(2)} \right)$.  
\end{itemize}
\noindent Note that $\reduction$ is a valid $\ssnizk$ adversary: it produces a proof on an instance different from one for which it obtained a proof (either real or simulated) and moreover, the proof produced by $\reduction$ (conditioned on not $\bot$) is an accepting proof.
\par If $\reduction$ gets as input honest CRS and honestly generated proof $\pi$ then this corresponds to $\process_1(1^{\secparam})$ and if $\reduction$ gets as input simulated CRS and simulated proof $\pi$ then this corresponds to $\hybprocess_1(1^{\secparam})$.

Thus, from the security of q-simulation-extractable NIZKs, we have that $|\gamma_2 - \delta_1| \leq \negl(\secparam)$. 
\end{proof}

\noindent We first prove the following claim. 

\begin{claim}
$$\left(\left(\left( \widetilde{g}^{(2)},\widetilde{g_{\bot}}^{(2)},\widetilde{C}^{(2)} \right),\ \left(A^*,r_o^*,r_A^*,r_{A^{\perp}}^*,C^*,x^* \right)\right) \in \rel(L) \ \bigwedge\ \forall x,\  \prob\left[\run\left(\crs,\rho_C^{(2)},x \right) = C(x)  \right]  \geq \beta \right)$$
$$\Longrightarrow C(x^*)=1,$$
\end{claim}
\begin{proof}
We first claim that $\forall x,\  \prob\left[\run\left(\crs,\rho_C^{(2)},x \right) = C(x)  \right]  \geq \beta$ implies that $\widetilde{C}^{(2)} \equiv C$, where $\equiv$ denotes functional equivalence. Suppose not. Let $x'$ be an input such that $\widetilde{C}^{(2)}(x') \neq C(x')$ then this means that $\run(\crs,\rho_C^{(2)},x')$ {\em always} outputs a value different from  $C(x')$; follows from the description of $\run$. This means that $\prob[ \run\left(\crs,\rho_C^{(2)},x' \right) = C(x')]=0$, contradicting the hypothesis. \par Moreover, $\left(\left( \widetilde{g}^{(2)},\widetilde{g_{\bot}}^{(2)},\widetilde{C}^{(2)} \right),\ \left(A^*,r_o^*,r_A^*,r_{A^{\perp}}^*,C^*,x^* \right)\right) \in \rel(L)$ implies that $\widetilde{C}^{(2)}=\qiho(1^{\secparam},C^*;r_{o}
^*)$ and $C^*(x^*)=1$. Furthermore, perfect correctness of $\qiho$ implies that $\widetilde{C}^{(2)} \equiv C^*$.  
\par So far we have concluded that $\widetilde{C}^{(2)} \equiv C$,  $\widetilde{C}^{(2)} \equiv C^*$ and $C^*(x^*)=1$. Combining all of them together, we have $C(x^*)=1$.    

\end{proof}

\noindent Consider the following inequalities. 

\begin{eqnarray*}
\delta_1 & = & \prob \left[ \substack{\left(\left( \widetilde{g}^{(2)},\widetilde{g_{\bot}}^{(2)},\widetilde{C}^{(2)} \right),\ \left(A^*,r_o^*,r_A^*,r_{A^{\perp}}^*,C^*,x^* \right)\right) \in \rel(L)\\ \ \\ \bigwedge\\ \ \\ \forall x',\  \prob\left[\run\left(\crs,\rho_C^{(2)},x' \right) = C(x')  \right]  \geq \beta \\ \ \\ \bigwedge \\ \ \\ \left(\widetilde{C},\widetilde{g},\widetilde{g_{\bot}} \right)  \neq \left( \widetilde{C}^{(2)},\widetilde{g}^{(2)},\widetilde{g_{\bot}}^{(2)} \right)}\ :\ \hybprocess_1 \right]\\ 
& & \\
& & \\
& = & \prob \left[ \substack{C(x^*)=1\\ \ \\ \bigwedge\\ \ \\  \left(\widetilde{C},\widetilde{g},\widetilde{g_{\bot}} \right)  \neq \left( \widetilde{C}^{(2)},\widetilde{g}^{(2)},\widetilde{g_{\bot}}^{(2)} \right)}\ :\ \hybprocess_1 \right] \\ 
& & \\
& \leq & \prob \left[ C\left( x^* \right)=1\ :\ \hybprocess_1 \right]
\end{eqnarray*}
\noindent Let $\prob \left[ C\left( x^* \right)=1\ :\ \hybprocess_1 \right]=\delta_2$.

\begin{claim}
Assuming the q-input-hiding property of $\iho$, we have $\delta_2 \leq \negl(\secparam)$ 
\end{claim}
\begin{proof}
Suppose $\delta_2$ is not negligible. Then we construct a QPT adversary $\reduction$ that violates the q-input-hiding property of $\iho$, thus arriving at a contradiction. 
\par $\reduction$ now takes as input $\widetilde{C}$ (an input-hiding obfuscator of $C$), computes $\left( \widetilde{\crs},\td \right) \leftarrow \fksetup \left( 1^{\secparam} \right)$ and then computes $\rho_C=\left( \ket{A},\widetilde{g},\widetilde{g_{\bot}},\widetilde{C},\pi \right)$ as computed in $\hybprocess_1$. It sends $\left(\widetilde{\crs},\rho_C \right)$ to $\adversary$ who responds with $\left( \widetilde{C}^{(1)},\widetilde{g}^{(1)},\widetilde{g_{\bot}}^{(1)},\pi^{(1)},\widetilde{C}^{(2)},\widetilde{g}^{(2)},\widetilde{g_{\bot}}^{(2)},\pi^{(2)},\sigma^*  \right)$. Compute $\left(A^*,r_o^*,r_A^*,r_{A^{\perp}}^*,C^*,x^* \right)$ by generating $ \simr_2(\st,(\widetilde{g}^{(2)},\widetilde{g_{\bot}}^{(2)},\allowbreak \widetilde{C}^{(2)} ),\allowbreak \pi^{(2)} )$, where $\st$ is as defined in $\hybprocess_1$. Output $x^*$.
\par Thus, $\reduction$ violates the q-input-hiding property of $\iho$ with probability $\delta_2$ and thus $\delta_2$ has to be negligible.
\end{proof}

\noindent Combining the above observations, we have that $\gamma_2 \leq \negl(\secparam)$ for some negligible function $\negl$. This completes the proof.

\end{proof}

\end{proof}
}



\bibliographystyle{plain}
\bibliography{crypto}

\newpage

\submversion{
\section*{Supplementary Material}

\newcommand{\qnizk}{\mathsf{qNIZK}}
\newcommand{\pke}{\mathsf{qPKE}}
\newcommand{\fkgen}{\fksetup}
\newcommand{\expt}{\mathsf{Expt}}
\newcommand{\evnt}{\mathsf{Process}}
\newcommand{\view}{\mathsf{View}}
\section{Preliminaries}
\label{sec:prelims}
We assume that the reader is familiar with basic cryptographic notions such as negligible functions and computational indistinguishability (see~\cite{Gol01}). 
\par The security parameter is denoted by $\secparam$ and we denote $\negl(\secparam)$ to be a negligible function in $\secparam$. We denote (classical) computational indistiguishability of two distributions $\distr_0$ and $\distr_1$ by $\distr_0 \approx_{c,\varepsilon} \distr_1$. In the case when $\varepsilon$ is negligible, we drop $\varepsilon$ from this notation. 

\subsection{Quantum}
\label{ssec:notation}

For completeness, we present some of the basic quantum definitions, for more details see \cite{nielsen2002quantum}.
\paragraph{Quantum states and channels.} Let $\cH$ be any finite Hilbert space, and let $L(\cH):=\{\cE:\cH \rightarrow \cH \}$ be the set of all linear operators from $\cH$ to itself (or endomorphism). Quantum states over $\cH$ are the positive semidefinite operators in $L(\cH)$ that have unit trace, we call these density matrices, and use the notation $\rho$ or $\sigma$ to stand for density matrices when possible. Quantum channels or quantum operations acting on quantum states over $\cH$ are completely positive trace preserving (CPTP) linear maps from $L(\cH)$ to $L(\cH')$ where $\cH'$ is any other finite dimensional Hilbert space. 
We use the trace distance, denoted by $\trD{\rho}{\sigma}$, as our distance measure on quantum states, 
$$\trD{\rho}{\sigma} = \frac{1}{2} \tr \left[\sqrt{\left(\rho-\sigma\right)^\dagger\left(\rho-\sigma\right)}\right] $$

A state over $\cH=\mathbb{C}^2$ is called a qubit. For any $n \in \mathbb{N}$, we refer to the quantum states over $\cH = (\mathbb{C}^2)^{\otimes n}$ as $n$-qubit quantum states. To perform a standard basis measurement on a qubit means projecting the qubit into $\{\ket{0},\ket{1}\}$. A quantum register is a collection of qubits. A classical register is a quantum register that is only able to store qubits in the computational basis.

A unitary quantum circuit is a sequence of unitary operations (unitary gates) acting on a fixed number of qubits. Measurements in the standard basis can be performed at the end of the unitary circuit. A (general) quantum circuit is a unitary quantum circuit with $2$ additional operations: $(1)$ a gate that adds an ancilla qubit to the system, and $(2)$ a gate that discards (trace-out) a qubit from the system. A quantum polynomial-time algorithm (QPT) is a uniform collection of quantum circuits $\{C_n\}_{n \in \mathbb{N}}$. We always assume that the QPT adversaries are non-uniform -- a QPT adversary $\cA$ acting on $n$ qubits could be given a quantum auxiliary state with $\poly(n)$ qubits.

\paragraph{Quantum Computational Indistinguishability.}

When we talk about quantum distinguishers, we need the following definitions, which we take from \cite{Wat09}.
\begin{definition}[Indistinguishable collections of states] Let $I$ be an infinite subset $I \subset \{0,1\}^*$, let $p : \mathbb{N} \rightarrow \mathbb{N}$ be a polynomially bounded function, and let $\rho_{x}$ and $\sigma_x$ be $p(|x|)$-qubit states. We say that $\{\rho_{x}\}_{x \in I}$ and $\{\sigma_x\}_{x\in I}$ are \textbf{quantum computationally indistinguishable collections of quantum states} if for every QPT $\cE$ that outputs a single bit, any polynomially bounded  $q:\mathbb{N}\rightarrow \mathbb{N}$, and any auxiliary $q(|x|)$-qubits state $\nu$, and for all $x \in I$, we have that
$$\left|\Pr\left[\cE(\rho_x\otimes \nu)=1\right]-\Pr\left[\cE(\sigma_x \otimes \nu)=1\right]\right| \leq \epsilon(|x|) $$
for some function $\epsilon:\mathbb{N}\rightarrow [0,1]$.  We use the following notation 
$$\rho_x \approx_{Q,\epsilon} \sigma_x$$
and we ignore the $\epsilon$ when it is understood that it is a negligible function.
\end{definition}

\paragraph{Quantum Fourier Transform and Subspaces.} Our main construction uses the same type of quantum states (superpositions over linear subspaces) considered by\cite{aaronson2012quantum,Zha19} in the context of constructing quantum money. 

We recall some key facts from these works relevant to our construction. 
Consider the field $\Zq^{\secparam}$ where $q\geq2$,and let $\ft$ denote the quantum fourier transfrom over $\Zq^{\secparam}$.

For any linear subspace $A$, let $A^\perp$ denote its orthogonal (dual) subspace, $$A^\perp = \{v \in \Zq^\secparam| \langle v, a\rangle = 0 \}.$$

\noindent Let $\ket{A} = \frac{1}{\sqrt{|A|}} \underset{a\in A}{\sum}\ket{a}$.  The quantum fourier Transform, $\ft$, does the following:
$$\ft \ket{A} = \ket{A^\perp}. $$
Since $(A^\perp)^\perp = A$, we also have $\ft \ket{A^\perp} = \ket{A}$.

Let $\Pi_{A}=\underset{a \in A}{\sum} \ket{a}\bra{a}$, then as shown in Lemma 21 of~\cite{aaronson2012quantum},
$$\ft (\Pi_{A^\perp}) \ft \Pi_{A} = \ket{A}\bra{A}.$$

\paragraph{{\bf Gentle Measurement Lemma/Almost As Good As New Lemma}.} We use the Almost As Good As New Lemma \cite{aaronson2004limitations} (also referred to as gentle measurement lemma in the literature), restated here verbatim from \cite{aaronson2016complexity}.

\begin{lemma}[Almost As Good As New]\label{clm:ASGAN} Let $\rho$ be a mixed state acting on $\mathbb{C}^d$. Let $U$ be a unitary and $(\Pi_0,\Pi_1=1-\Pi_0)$ be projectors all acting on $\mathbb{C}^d \otimes \mathbb{C}^d$. We interpret $(U,\Pi_0,\Pi_1)$ as a measurement performed by appending an acillary system of dimension $d'$ in the state $\ket{0}\bra{0}$, applying $U$ and then performing the projective measurement $\{\Pi_0,\Pi_1\}$ on the larger system. Assuming that the outcome corresponding to $\Pi_0$ has probability $1-\varepsilon$, i.e., $\tr[\Pi_0(U\rho \otimes \ket{0}\bra{0}U^\dagger)]=1-\varepsilon$, we have $$\trD{\rho}{\widetilde{\rho}} \leq \sqrt{\varepsilon} ,$$
where $\widetilde{\rho}$ is state after performing the measurement and then undoing the unitary $U$ and tracing out the ancillary system: $$\widetilde{\rho} = \tr_{d'}\left(U^\dagger \left( \Pi_0U \left( \rho \otimes \ket{0}\bra{0} \right)U^\dagger \Pi_0 + \Pi_1U \left( \rho \otimes \ket{0}\bra{0} \right)U^\dagger \Pi_1\right) U \right) $$
\end{lemma}

We use this Lemma to argue that whenever a QPT algorithm $\cA$ on input $\rho$, outputs a particular bit string $z$ with probability $1-\varepsilon$, then $\cA$ can be performed in a way that also lets us recover the initial state. In particular, given the QPT description for $\cA$, we can implement $\cA$ with an acillary system, a unitary, and only measuring in the computational basis after the unitary has been applied, similarly to Lemma~\ref{clm:ASGAN}. Then, it is possible to uncompute in order to also obtain $\widetilde{\rho}$.

\paragraph{Notation about Quantum-Secure Classical Primitives.} For a classical primitive X, we use the notation q-X to denote the fact that we assume X to be secure against QPT adversaries. 

\subsection{Learning with Errors}
\label{sec:prelims:lwe}

\noindent We consider the decisional learning with errors (LWE) problem, introduced by Regev~\cite{Reg05}. We define this problem formally below. 

\begin{quote}
    The problem $(n,m,q,\chi)$-LWE, where $n,m,q \in \mathbb{N}$ and  $\chi$ is a distribution supported over $\mathbb{Z}$, is to distinguish between the distributions $(\bfA,\bfA \bfs + \bfe)$ and $(\bfA,\bfu)$, where $\bfA \xleftarrow{\$} \mathbb{Z}_q^{m \times n},\bfs \xleftarrow{\$} \mathbb{Z}_q^{n \times 1},\bfe \xleftarrow{\$} \chi^{m \times 1}$ and $\bfu \leftarrow \mathbb{Z}_q^{m \times 1}$.
\end{quote}

\noindent The above problem believed to be hard against classical PPT algorithms -- also referred to as LWE assumption -- has had many powerful applications in cryptography. In this work, we conjecture the above problem to be hard even against QPT algorithms; this conjecture referred to as QLWE assumption has been useful in the constructions of interesting primitives such as quantum fully-homomorphic encryption~\cite{mahadev2018classical,Bra18}. We refer to this assumption as QLWE assumption. 

\begin{quote}
    {\bf QLWE assumption}: This assumption is parameterized by $\secparam$. Let $n=\poly(\secparam)$, $m=\poly(n \cdot \log(q))$ and $\chi$ be a discrete Gaussian distribution\footnote{Refer~\cite{Bra18} for a definition of discrete Gaussian distribution.} with parameter $\alpha q > 0$, where $\alpha$ can set to be any non-negative number.  
    \par Any QPT distinguisher (even given access to polynomial-sized advice state) can solve $(n,m,q,\chi)$-LWE only with  probability $\negl(\secparam)$, for some negligible function $\negl$. 
\end{quote}

\begin{remark}
We drop the notation $\secparam$ from the description of the assumption when it is clear. 
\end{remark}

\noindent $(n,m,q,\chi)$-LWE is shown~\cite{Reg05,PRS17} to be as hard as approximating shortest independent vector problem (SIVP) to within a factor of $\gamma=\tilde{O}(n/\alpha)$ (where $\alpha$ is defined above). The best known quantum algorithms for this problem run in time $2^{\tilde{O}(n/\log(\gamma))}$. 
\par For our construction of SSL, we require a stronger version of QLWE that is secure even against sub-exponential quantum adversaries. We state this assumption formally below. 

\begin{quote}
    {\bf $T$-Sub-exponential QLWE Assumption}: This assumption is parameterized by $\secparam$ and time $T$. Let $n=T+\poly(\secparam)$, $m=\poly(n \cdot \log(q))$ and $\chi$ be a discrete Gaussian distribution with parameter $\alpha q > 0$, where $\alpha$ can set to be any non-negative number.  
    \par Any quantum distinguisher (even given access to polynomial-sized advice state) running in time $2^{\widetilde{O}(T)}$ can solve $(n,m,q,\chi)$-LWE only with probability $\negl(\secparam)$, for some negligible function $\negl$. 
\end{quote}


\subsection{Quantum Fully Homomorphic Encryption}
\label{sec:prelims:qfhe}

A fully homomorphic encryption scheme allows for publicly evaluating an encryption of $x$ using a function $f$ to obtain an encryption of $f(x)$. Traditionally $f$ has been modeled as classical circuits but in this work, we consider the setting when $f$ is modeled as quantum circuits and when the messages are quantum states. This notion is referred to as quantum fully homomorphic encryption (QFHE). We state our definition verbatim from \cite{broadbent2015quantum}.

\begin{definition} Let $\cM$ be the Hilbert space associated with the message space (plaintexts), $\cC$ be the Hilbert space associated with the ciphertexts, and $\cR_{evk}$ be the Hilbert space associated with the evaluation key. A quantum fully homomorphic encryption scheme is a tuple of QPT algorithms  $\qfhe=(\gen,\enc,\dec,\allowbreak \eval)$ satisfying 
\begin{itemize}
    \item $\qfhe.\gen(1^\secparam)$: outputs a a public and a secret key, $(\pk,\sk)$, as well as a quantum state $\rho_{evk}$, which can serve as an evaluation key. 
    \item $\qfhe.\enc(\pk,\cdot):L(\cM)\rightarrow L(\cC)$:  takes as input a state $\rho$ and outputs a ciphertext $\sigma$
    \item $\qfhe.\dec(\sk,\cdot):L(\cC)\rightarrow L(\cM)$: takes a quantum ciphertext $\sigma$, and outputs a qubit $\rho$ in the message space $L(\cM)$.
    \item $\qfhe.\eval(\cE, \cdot ):L(\cR_{evk}\otimes \cC^{\otimes n})\rightarrow L(\cC^{\otimes m})$: takes as input a quantum circuit $\cE: L(\cM^{\otimes n}) \rightarrow L(\cM^{\otimes m})$, and a ciphertext in $L(\cC^{\otimes n})$ and outputs a ciphertext in $L(\cC^{\otimes m})$, possibly consuming the evaluation key $\rho_{evk}$ in the proccess.

\end{itemize}
\end{definition}
\noindent Semantic security and compactness are defined analogously to the classical setting, and we defer to~\cite{broadbent2015quantum} for a definition.
\noindent For the impossibility result, we require a $\qfhe$ scheme where ciphertexts of classical plaintexts are also classical. Given any $x \in \{0,1\}$, we want $\qfhe.\enc_\pk(\ket{x}\bra{x})$ to be a computational basis state $\ket{z}\bra{z}$ for some $z \in \{0,1\}^l$ (here, $l$ is the length of ciphertexts for 1-bit messages). In this case, we write $\qfhe.\enc_\pk(x)$. We also want the same to be true for evaluated ciphertexts, i.e. if $\cE(\ket{x}\bra{x})=\ket{y}\bra{y}$ for some $x \in \{0,1\}^n$ and $y \in \{0,1\}^m$, then 
$$\qfhe.\enc_\pk(y) \leftarrow \qfhe.\eval(\rho_{evk}, \cE, \qfhe.\enc_\pk(x)) $$
is a classical ciphertext of $y$.

We also need the property that classical ciphertexts can be efficiently decrypted with a classical algorithm.

\paragraph{Instantiation.} The works of~\cite{mahadev2018classical,Bra18} give lattice-based candidates for quantum fully homomorphic encryption schemes; we currently do not know how to base this on learning with errors alone\footnote{Brakerski~\cite{Bra18} remarks that the security of their candidate can be based on a circular security assumption that is also used to argue the security of existing constructions of unbounded depth multi-key FHE~\cite{CM15,MW16,PS16,BP16}.}. The desirable property required from the quantum FHE schemes, that classical messages have classical ciphertexts, is satisfied by both candidates~\cite{mahadev2018classical,Bra18}.

\subsection{Obfuscation}
In this work, we use different notions of cryptographic obfucation. We review all the required notions below, but first we recall the functionality of obfuscation. 

\begin{definition}[Functionality of Obfuscation]
Consider a class of circuits $\cktclass$. An obfuscator $\obfuscator$ consists of two PPT algorithms $\obf$ and $\eval$ such that the following holds: for every $\secparam \in \mathbb{N}$, circuit $C \in \cktclass$, $x \in \{0,1\}^{\poly(\secparam)}$, we have $C(x) \leftarrow \eval(\widetilde{C},x)$ where $\widetilde{C} \leftarrow \obf(1^{\secparam},C)$.  
\end{definition}

\subsubsection{Lockable Obfuscation}
\label{ssec:prelims:lobfs}

\noindent In the impossibility result, we will make use of program obfuscation schemes that are (i) defined for compute-and-compare circuits and, (ii) satisfy distributional virtual black box security notion~\cite{BGIRSVY01}. Such obfuscation schemes were first introduced by~\cite{WZ17,GKW17} and are called lockable obfuscation schemes. We recall their definition, adapted to quantum security, below.

\begin{definition}[Quantum-Secure Lockable Obfuscation]
An obfuscation scheme $(\lobf,\leval)$ for a class of circuits $\cktclass$ is said to be a \textbf{quantum-secure lockable obfuscation scheme} if the following properties are satisfied: 
\begin{itemize}
    \item It satisfies the functionality of obfuscation. 
    \item {\bf Compute-and-compare circuits}: Each circuit $\lockC$ in $\cktclass$ is parameterized by strings $\alpha \in \{0,1\}^{\poly(\secparam)},\beta \in \{0,1\}^{\poly(\secparam)}$ and a poly-sized circuit $C$ such that on every input $x$, $\lockC(x)$ outputs $\beta$ if and only if $C(x)=\alpha$. 
    \item {\bf Security}: For every polynomial-sized circuit $C$, string $\beta \in \{0,1\}^{\poly(\secparam)}$,for every QPT adversary $\adversary$ there exists a QPT simulator $\simulator$ such that the following holds: sample $\alpha \xleftarrow{\$}  \{0,1\}^{\poly(\secparam)}$,
    $$\left\{ \lobf \left( 1^{\secparam},\lockC \right) \right\} \approx_{Q,\varepsilon} \left\{\simulator\left(1^{\secparam},1^{|C|} \right) \right\},$$
    where $\lockC$ is a circuit parameterized by $C,\alpha,\beta$ with $\varepsilon \leq \frac{1}{2^{|\alpha|}}$.  
\end{itemize}

\end{definition}

\paragraph{Instantiation.} The works of~\cite{WZ17,GKW17,GKVW19} construct a lockable obfuscation scheme based on polynomial-security of learning with errors (see Section~\ref{sec:prelims:lwe}). Since learning with errors is conjectured to be hard against QPT algorithms, the obfuscation schemes of~\cite{WZ17,GKW17,GKVW19} are also secure against QPT algorithms. Thus, we have the following theorem. 

\begin{theorem}[\cite{GKW17,WZ17,GKVW19}]
Assuming quantum hardness of learning with errors, there exists a quantum-secure lockable obfuscation scheme. 
\end{theorem}

\subsubsection{q-Input-Hiding Obfuscators}
\noindent One of the main tools used in our construction is q-input-hiding obfuscators. The notion of input-hiding obfuscators was first defined in the classical setting by Barak et al.~\cite{BBCKP14}. We adopt the same notion except that we require the security of the primitive to hold against QPT adversaries. \par The notion of q-input-hiding obfuscators states that given an obfuscated circuit, it should be infeasible for a QPT adversary to find an accepting input; that is, an input on which the circuit outputs 1. Note that this notion is only meaningful for the class of evasive circuits. 

The definition below is suitably adapted from Barak et al.~\cite{BBCKP14}; in particular, our security should hold against QPT adversaries. 

\begin{definition}[q-Input-Hiding Obfuscators~\cite{BBCKP14}]
An obfuscator $\qiho=(\obf,\eval)$ for a class of circuits associated with distribution $\distrc$ is {\bf q-input-hiding} if for every non-uniform QPT adversary $\adversary$, for every sufficiently large $\secparam \in \mathbb{N}$,
$$\prob \left[C(x)=1\ :\ \substack{C \leftarrow \distrc(\secparam),\\ \ \\ \widetilde{C} \leftarrow \obf(1^{\secparam},C), \\ \ \\ x \leftarrow \adversary(1^{\secparam},\widetilde{C})} \right] \leq \negl(\secparam).$$
\end{definition}

\subsubsection{Subspace Hiding Obfuscators}~\label{ssec:shO}
\noindent Another ingredient in our construction is subspace hiding obfuscation. Subspace hiding obfuscation is a notion of obfuscation introduced by Zhandry \cite{Zha19}, as a tool to build pulic-key quantum money schemes. This notion allows for obfuscating a circuit, associated with subspace $A$, that checks if an input vector belongs to this subspace $A$ or not. In terms of security, we require that the obfuscation of this circuit is indistinguishable from obfuscation of another circuit that tests membership of a larger random (and hidden) subspace containing $A$. 

\begin{definition}[\cite{Zha19}]
A subspace hiding obfuscator for a field $\mathbb{F}$ and dimensions $d_0,d_1,\secparam$ is a tuple $(\shO.\obf, \shO.\eval)$ satisfying:
\begin{itemize}
\item  $\shO.\obf(A)$: on input an efficient description of a linear subspace $A \subset \mathbb{F}^\secparam$ of dimensions $d \in \{d_0,d_1\}$ outputs an obfuscator $\shO(A)$. 
\item {\bf Correctness:} For any $A$ of dimension $d \in \{d_0,d_1\}$, it holds that
$$\prob[\forall x, \shO.\eval(\shO.\obf(A),x)=\mathbb{1}_{A}(x) :  \shO(A) \leftarrow \shO.\obf(A)] \geq 1-\negl(\secparam),$$
where: $\mathbb{1}_{A}(x)=1$ if $x \in A$ and 0, otherwise. 
\item {\bf Quantum-Security:}  Any QPT adversary $\cA$ can win the following challenge with probability at most negligibly greater than $\frac{1}{2}$.
\begin{enumerate}
    \item $\cA$ chooses a $d_0$-dimensional subspace $A \subset \mathbb{F}^\secparam$. 
    \item Challenger chooses uniformly at random a $d_1$-dimensional subspace $S \supseteq A$. It samples a random bit $b$.  If $b=0$, it sends $\widetilde{g_0} \leftarrow \shO.\obf(A)$. Otherwise, it sends $\widetilde{g_1} \leftarrow \shO.\obf(S)$
    \item $\cA$ receives $\widetilde{g_b}$ and outputs $b'$.  It wins if $b'=b$.
\end{enumerate}
\end{itemize}
\end{definition}

\paragraph{Instantiation.}  Zhandry presented a construction of subspace obfuscators from  indistinguishability obfuscation~\cite{BGIRSVY01,GGHRSW16} secure against QPT adversaries. 

\subsection{q-Simulation-Extractable Non-Interactive Zero-Knowledge}
We also use the tool of non-interactive zero-knowledge (NIZK) systems for NP for our construction. A NIZK is defined between a classical PPT prover $\prover$ and a verifier $\verify$. The goal of the prover is to convince the verifier $\verify$ to accept an instance $x$ using a witness $w$ while at the same time, not revealing any information about $w$. Moreover, any malicious prover should not be able to falsely convince the verifier to accept a NO instance. Since we allow the malicious parties to be QPT, we term this NIZK as qNIZK. 
\par We require the qNIZKs to satisfy a stronger property called simulation extractability and we call a qNIZK satisfying this stronger property to be q-simulation-extractable NIZK ($\ssnizk$). 
\par We describe the PPT algorithms of $\ssnizk$ below.   
\begin{itemize}
    \item $\crsgen(1^{\secparam})$: On input security parameter $\secparam$, it outputs the common reference string $\crs$. 
    \item $\prover(\crs,x,w)$: On input common reference string $\crs$, NP instance $x$, witness $w$, it outputs the proof $\pi$. 
    \item $\verify(\crs,x,\pi)$: On input common reference string $\crs$, instance $x$, proof $\pi$, it outputs 1 (accept) or 0 (reject). This is a deterministic algorithm.
\end{itemize}
\noindent This notion is associated with the following properties. We start with the standard notion of completeness.

\begin{definition}[Completeness]
\label{def:qnizk:compl}
A non-interactive protocol $\ssnizk$ for a NP language $L$ is said to be {\bf complete} if the following holds: for every $(x,w) \in \rel(L)$, we have the following: 
$$\prob \left[ \verify(\crs,x,\pi)\ \text{accepts}\ :\ \substack{\crs \leftarrow \crsgen(1^{\secparam})\\ \ \\ \pi \leftarrow \prover(\crs,x,w)} \right] = 1$$
\end{definition}

\paragraph{q-Simulation-Extractability.} We now describe the simulation-extractability property. Suppose there exists an adversary who upon receiving many proofs $\pi_1,\ldots,\pi_q$ on all YES instances $x_1,\ldots,x_q$, can produce a proof  $\pi'$ on instance $x'$ such that: (a) $x'$ is different from all the instances $x_1,\ldots,x_q$ and, (b) $\pi'$ is accepting with probability $\varepsilon$. Then, this notion guarantees the existence of two efficient algorithms $\simr_1$ and $\simr_2$ 
such that all the proofs $\pi_1,\ldots,\pi_q$, are now simulated by $\simr_1$, and $\simr_2$ can extract a valid witness for $x'$ from $(x',\pi')$ produced by the adversary with probability negligibly close to $\varepsilon$.     

\begin{definition}[q-Simulation-Extractability]
A non-interactive protocol $\ssnizk$ for a language $L$ is said to satisfy {\bf  q-simulation-extractability}  if there exists a non-uniform QPT adversary $\adversary=(\adversary_1,\adversary_2)$ such that the following holds:
     $$\prob \left[ \substack{\verify(\crs,x',\pi')\ \text{accepts}\\ \ \\ \bigwedge\\ \ \\ \left(\forall i\in [q], \left(x_i,w_i \right) \in \rel(\lang) \right)\\ \ \\ \bigwedge\\ \ \\ \left( \forall i \in [q],\ x' \neq x_i  \right)} \ :\ \substack{\crs \leftarrow \crsgen(1^{\secparam}),\\ \ \\ \left( \{(x_i,w_i)\}_{i \in [q]},\st_{\adversary} \right) \leftarrow \adversary_1(\crs)\\ \ \\  \forall i \in [q], \ \pi_i \leftarrow \prover(\crs,\td,x_i)\\ \ \\ (x',\pi') \leftarrow \adversary_2(\st_{\adversary},\pi_1,\ldots,\pi_q)} \right] = \varepsilon$$
    Then there exists QPT algorithms $\fksetup$ and   $\simr=(\simr_1,\simr_2)$ such that the following holds: 
    $$\left|\prob \left[ \substack{\verify(\crs,x',\pi')\ \text{accepts}\\ \ \\ \bigwedge\\ \ \\ \left(\forall i\in [q], \left(x_i,w_i \right) \in \rel(\lang) \right)\\ \ \\ \bigwedge\\ \ \\ (x',w') \in \rel(L)\\ \ \\ \bigwedge\\ \ \\ \left( \forall i \in [q],\ x' \neq x_i  \right)} \ :\ \substack{(\crs,\td) \leftarrow \fksetup(1^{\secparam}),\\ \ \\ \left( \{(x_i,w_i)\}_{i \in [q]},\st_{\adversary} \right) \leftarrow \adversary_1(\crs)\\ \ \\   (\pi_1,\ldots,\pi_{q},\st_{\simr}) \leftarrow \simr_1 \left( \crs,\td,\{x_i\}_{i \in [q]} \right)\\ \ \\ (x',\pi') \leftarrow \adversary_2(\st_{\adversary},\pi_1,\ldots,\pi_q)\\ \ \\ w' \leftarrow \simr_2(\st_{\simr},x',\pi')} \right]\ -\ \varepsilon \right|\leq \negl(\secparam)$$
\noindent We call a non-interactive argument system satisfying q-simulation-extractability property to be a qseNIZK system. 
\par If q-smulation-extractability property holds against quantum adversaries running in time $2^{\tilde{O}(T)}$ ($\tilde{O}(\cdot)$ notation suppresses additive factors in $O(\log(\secparam))$) then we say that $(\crsgen,\prover,\verify)$ is a $T$-sub-exponential qseNIZK system. 
\end{definition}


\begin{remark}
The definition as stated above is weaker compared to other definitions of simulation-extractability considered in the literature. For instance, we can consider general adversaries who also can obtain simulated proofs for false statements which is disallowed in the above setting. Nonetheless, the definition considered above is sufficient for our application. 
\end{remark}

\paragraph{Instantiation of qseNIZKs.} In the classical setting, simulation-extractable NIZKs can be obtained by generically~\cite{Sah99,DeS+01} combining a traditional NIZK (satisfying completeness, soundness and zero-knowledge) with a public-key encryption scheme satisfying CCA2 security. We observe that the same transformation can be ported to the quantum setting as well, by suitably instantiating the underlying primitives to be quantum-secure. These primitives in turn can be instantiated from QLWE. Thus, we can obtain a q-simulation-extractable NIZK from QLWE. 
\par For our construction of SSL, it turns out that we need a q-simulation-extractable NIZK that is secure against quantum adversaries running in sub-exponential time. Fortunately, we can still adapt the same transformation but instead instantiating the underlying primitives to be sub-exponentially secure. \par Before we formalize this theorem, we first state the necessary preliminary background.

\begin{definition}[q-Non-Interactive Zero-Knowledge]
A non-interactive system $(\crsgen,\prover,\verify)$ defined for a NP language $\lang$ is said to be {\bf q-non-interactive zero-knowledge (qNIZK)} if it satisfies Definition~\ref{def:qnizk:compl} and additionally, satisfies the following properties: 
\begin{itemize}
    \item Adaptive Soundness: For any  malicious QPT prover $\prover^*$, the following holds: 
    $$\prob \left[ \substack{\verify(\crs,x,\pi)\  \text{accepts}\\ \ \\ \bigwedge\\ \ \\ x' \notin \lang} \ \ :\ \substack{\crs \leftarrow \crsgen\left(1^{\secparam} \right)\\ \ \\ (x,\pi) \leftarrow \prover^*(\crs)} \right] \leq \negl(\secparam)$$
    \item Adaptive (Multi-Theorem) Zero-knowledge: For any QPT verifier $\verify^*$, there exists two QPT algorithms $\fkgen$ and simulator $\simr$, such that the following holds: 
    $$\Bigg| \prob\left[ \substack{1 \leftarrow \verify^*\left(\st,\{\pi\}_{i \in [q]} \right)\\ \ \\ \bigwedge\\ \ \\ \forall i\in [q],\ (x_i,w_i) \in \rel(\lang)} \ :\ \substack{\crs \leftarrow \crsgen(1^{\secparam})\\ \ \\ \left(\{(x_i,w_i)\}_{i \in [q]},\st \right) \leftarrow \verify^*(\crs)\\ \ \\ \forall i \in [q],\ \pi_i \leftarrow \prover(\crs,x_i,w_i)} \right]$$
    $$\ \ \ \ \ \ \ \ \ \ \ \ \ \ \ \ \ \ \ - \prob\left[ \substack{1 \leftarrow \verify^*\left(\st,\{\pi\}_{i \in [q]} \right)\\ \ \\ \bigwedge\\ \ \\ \forall i\in [q],\ (x_i,w_i) \in \rel(\lang)} \ :\ \substack{(\crs,\td) \leftarrow \fkgen(1^{\secparam})\\ \ \\ \left(\{(x_i,w_i)\}_{i \in [q]},\st \right) \leftarrow \verify^*(\crs)\\ \ \\ \{\pi_i\}_{i \in [q]} \leftarrow \simr(\crs,\td,\{x_i\}_{i \in [q]})} \right] \Bigg| \leq \negl(\secparam) $$
\end{itemize}
If both adaptive soundness and adaptive multi-theorem zero-knowledge holds against quantum adversaries running in time $2^{\tilde{O}(T)}$ then we say that $(\crsgen,\prover,\verify)$ is a $T$-sub-exponential qNIZK. 
\end{definition}

\begin{remark}
q-simulation-extractable NIZKs imply qNIZKs since simulation-extractability implies both soundness and zero-knowledge properties.  
\end{remark}

\begin{definition}[q-CCA2-secure PKE]
A public-encryption scheme $(\setup,\enc,\dec)$ (defined below) is said to satify {\bf q-CCA2-security} if every QPT adversary $\cA$ wins in $\expt_{\cA}$ (defined below) only with negligible probability. 
\begin{itemize}
    \item $\setup(1^{\secparam})$: On input security parameter $\secparam$, output a public key $\pk$ and a decryption key $\sk$. 
    \item $\enc(\pk,x)$: On input public-key $\pk$, message $x$, output a ciphertext $\ct$. 
    \item $\dec(\sk,\ct)$: On input decryption key $\sk$, ciphertext $\ct$, output $y$. 
\end{itemize}
For any $x \in \{0,1\}^{\poly(\secparam)}$, we have $\dec(\sk,\enc(\pk,x))=x$. \\

\noindent \underline{$\expt_{\cA}(1^{\secparam},b)$}: 
\begin{itemize}
    \item Challenger generates $\setup(1^{\secparam})$ to obtain $(\pk,\sk)$. It sends $\pk$ to $\cA$. 
    \item $\cA$ has (classical) access to a decryption oracle that on input $\ct$, outputs $\dec(\sk,\ct)$. It can make polynomially many queries. 
    \item $\cA$ then submits $(x_0,x_1)$ to the challenger which then returns $\ct^* \leftarrow \enc(\pk,x_b)$.
    \item $\cA$ is then given access to the same oracle as before. The only restriction on $\cA$ is that it cannot query $\ct^*$.
    \item Output $b'$ where the output of $\cA$ is $b'$. 
\end{itemize}
$\cA$ wins in $\expt_{\cA}$ with probability $\mu(\secparam)$ if $\prob\left[ b=b'\ : \substack{b \xleftarrow{\$} \{0,1\}\\ \ \\ \expt_{\cA}(1^{\secparam})} \right]=\frac{1}{2} + \mu(\secparam)$.  
\par If the above q-CCA2 security holds against quantum adveraries running in time $2^{\tilde{O}(T)}$  then we say that $(\setup,\enc,\dec)$ is a  $T$-sub-exponential  q-CCA2-secure PKE scheme. 
\end{definition}

\begin{remark}
One could also consider the setting when the CCA2 adversary has superposition access to the oracle. However, for our construction, it suffices to consider the setting when the adversary only has classical access to the oracle.  
\end{remark}

\noindent Consider the following lemma. 

\begin{lemma}
\label{lem:qsenizks}
Consider a language $\lang_{\ell} \in NP$ such that every $x \in \lang_{\ell}$ is such that $|x|=\ell$.  
\par Under the $\ell$-sub-exponential QLWE assumption, there exists a q-simulation-extractable NIZKs for $\lang_{\ell}$ satisfying perfect completeness. 
\end{lemma}
\begin{proof}
We first state the following proposition that shows how to generically construct a q-simulation-extractable NIZK from qNIZK and a CCA2-secure public-key encryption scheme.
\begin{proposition}
Consider a language $\lang_{\ell} \in NP$ such that every $x \in \lang_{\ell}$ is such that $|x|=\ell$. 
\par Assuming $\ell$-sub-exponential qNIZKs for NP and $\ell$-sub-exponential q-CCA2-secure PKE schemes, there exists a $\ell$-sub-exponential qseNIZK system for $\lang_{\ell}$.  
\end{proposition}
\begin{proof}

Let $\qpke$ be a $\ell$-sub-exponential qCCA2-secure PKE scheme. Let $\qnizk$ be a $\ell$-sub-exponential qNIZK for the following relation. 
$$\rel_{\qnizk}=\left\{ \left((\pk,\ct_w,x),\ (w,r_w) \right)\ :\ \left( (x,w) \in \rel(\lang_{\ell}) \bigwedge \  \ct_w = \enc(\pk,(x,w);r_w) \right)  \right\}$$

\noindent We present the construction (quantum analogue of~\cite{Sah99,DeS+01}) of q-simulation-extractable NIZK for $\lang_{\ell}$ below. 

\begin{itemize}
    \item $\crsgen(1^{\secparam})$: On input security parameter $\secparam$, \begin{itemize}
        \item Compute $\qnizk.\crsgen(1^{\secparam_1})$ to obtain $\qnizk.\crs$, where $\secparam_1=\poly(\secparam,\ell)$ is chosen such that $\qnizk$ is a $\ell$-sub-exponential q-non-interactive zero-knowledge argument system. 
        \item Compute $\pke.\setup(1^{\secparam_2})$ to obtain $(\pk,\sk)$, where $\secparam_2=\poly(\secparam,\ell)$ is chosen such that $\qpke$ is a $\ell$-sub-exponential q-CCA2-secure PKE scheme. 
    \end{itemize}
    Output $\crs=(\pk,\qnizk.\crs)$. 
    \item $\prover(\crs,x,w)$: On input common reference string $\crs$, instance $x$, witness $w$, 
    \begin{itemize}
        \item Parse $\crs$ as $(\pk,\qnizk.\crs)$. 
        \item Compute $\ct_w \leftarrow \pke.\enc(\pk,(x,w);r_w)$, where $r_w \xleftarrow{\$} \{0,1\}^{\poly(\secparam)}$.
        \item Compute $\qnizk.\pi \leftarrow \qnizk.\prover(\qnizk.\crs,(\pk,\ct_w,x),(w,r_w))$. 
    \end{itemize}
    \noindent Output $\pi=\left( \qnizk.\pi,\ct_w \right)$. 
    \item $\verify(\crs,x,\pi)$: On input common reference string $\crs$, NP instance $x$, proof $\pi$, 
    \begin{itemize}
        \item Parse $\crs$ as $(\pk,\ct,\qnizk.\crs)$.
        \item Output $\qnizk.\verify \left( \qnizk.\crs,(\pk,\ct_w,x),\pi \right)$. 
    \end{itemize}
\end{itemize}

\noindent We prove that the above argument system satisfies q-simulation-extractability. We describe the algorithms $\fksetup$ and $\simr=(\simr_1.\simr_2)$ below. Let $\qnizk.\fkgen$ and $\qnizk.\simr$ be the QPT algorithms associated with the zero-knowledge property of $\qnizk$.  \\

\noindent \underline{$\fksetup(1^{\secparam})$}: Compute $(\qnizk.\crs,\tau) \leftarrow \qnizk.\fkgen \left(1^{\secparam} \right)$. Compute $(\pk,\sk) \leftarrow \pke.\setup(1^{\secparam})$. Output $\crs=\left( \qnizk.\crs,\pk,\ct \right)$ and $\td=(\tau,\sk)$. \\

\noindent \underline{$\simr_1 \left( \crs,\td,\{x_i\}_{i \in [q]} \right)$}: Compute $\qnizk.\simr \left( \qnizk.\crs,\tau,(\pk,\ct,x_i) \right)$ to obtain $\qnizk.\pi_i$, for every $i \in [q]$. Output $\left\{\qnizk.\pi_1,\ldots,\qnizk.\pi_q\right\}$ and $\st=\left(\td,\crs,\left( \left\{ x_i \right\}_{i \in [q]} \right) \right)$. \\

\noindent \underline{$\simr_2 \left(\st,x',\pi' \right)$}: On input $\st=\left(\td=(\tau,\sk),\crs,\left( \left\{ x_i \right\}_{i \in [q]} \right) \right)$, instance $x'$, proof $\pi'=(\qnizk.\pi',\ct'_w)$, compute $\dec(\sk,\ct'_{w'})$ to obtain $w'$. Output $w'$.  \\

\noindent Suppose $\cA$ be a quantum adversary running in time $2^{\widetilde{O}(\ell)}$ such that the following holds:  $$\prob \left[ \substack{\verify(\crs,x',\pi')\ \text{accepts}\\ \ \\ \bigwedge\\ \ \\ \left(\forall i\in [q],\ \left(x_i,w_i \right) \in \rel(\lang) \right)\\ \ \\ \bigwedge\\ \ \\ \left( \forall i \in [q],\ x' \neq x_i  \right)} \ :\ \substack{\crs \leftarrow \crsgen(1^{\secparam}),\\ \ \\ \left( \{(x_i,w_i)\}_{i \in [q]},\st_{\adversary} \right) \leftarrow \adversary_1(\crs)\\ \ \\  \forall i \in [q], \ \pi_i \leftarrow \prover(\crs,\td,x_i)\\ \ \\ (x',\pi') \leftarrow \adversary_2(\st_{\adversary},\pi_1,\ldots,\pi_q)} \right] = \varepsilon$$
\noindent Let $\delta$ be such that the following holds: 
 $$\prob \left[ \substack{\verify(\crs,x',\pi')\ \text{accepts}\\ \ \\ \bigwedge\\ \ \\ \left(\forall i\in [q], \left(x_i,w_i \right) \in \rel(\lang) \right)\\ \ \\ \bigwedge\\ \ \\ (x',w') \in \rel(L)\\ \ \\ \bigwedge\\ \ \\ \left( \forall i \in [q],\ x' \neq x_i  \right)} \ :\ \substack{(\crs,\td) \leftarrow \fksetup(1^{\secparam}),\\ \ \\ \left( \{(x_i,w_i)\}_{i \in [q]},\st_{\adversary} \right) \leftarrow \adversary_1(\crs)\\ \ \\   (\pi_1,\ldots,\pi_{q},\st_{\simr}) \leftarrow \simr_1 \left( \crs,\td,\{x_i\}_{i \in [q]} \right)\\ \ \\ (x',\pi') \leftarrow \adversary_2(\st_{\adversary},\pi_1,\ldots,\pi_q)\\ \ \\ w' \leftarrow \simr_2(\st_{\simr},x',\pi')} \right]\  = \delta$$
We prove using a standard hybrid argument that $|\delta-\varepsilon| \leq \negl(\secparam)$. \\

\noindent \underline{$\hybrid_1$}: $\cA$ is given $\pi_1,\ldots,\pi_q$, where $\pi_i \leftarrow \prover(\crs,x_i,w_i)$. Let $(x',\pi')$ is the output of $\cA$ and parse $\pi'=(\qnizk.\pi',\ct'_w)$. Decrypt $\ct'_w$ using $\sk$ to obtain $(x^*,w')$. 
\par From the adaptive soundness of $\qnizk$, the probability that $(x',w') \in \rel(\lang_{\ell})$ and $x^* = x'$ is negligibly close to $\varepsilon$. \\ 

\noindent \underline{$\hybrid_{2}$}: $\cA$ is given $\pi_1,\ldots,\pi_q$, where the proofs are generated as follows: first compute $(\qnizk.\pi_1,\allowbreak \ldots,\allowbreak \qnizk.\pi_{q}) \leftarrow \qnizk.\simr(\crs,\td,\{x_i\}_{i \in [q]})$, where $(\crs,\td) \leftarrow \qnizk.\fkgen(1^{\secparam})$. Then compute $\ct_{w_i} \leftarrow \enc(\pk,(x_i,w_i))$ for every $i \in [q]$. Set $\pi_i=(\qnizk.\pi_i,\ct_{w_i})$. The rest of this hybrid is defined as in $\hybrid_1$. 
\par From the adaptive zero-knowledge property of $\qnizk$, the probability that $(x',w') \in \rel(\lang_{\ell})$ and $x^* = x'$ in the hybrid $\hybrid_{2.j}$ is still negligibly close to $\varepsilon$. \\

\noindent \underline{$\hybrid_3$}: This hybrid is defined similar to the previous hybrid except that $\ct_{w_i} \leftarrow \enc(\pk,0)$, for every $i \in [q]$.  
\par From the previous hybrids, it follows that $\ct'_w \neq \ct_{w_i}$, for all $i\in [q]$ with probability negligibly close to $\varepsilon$; this follows from the fact that $\qpke$ is perfectly correct and the fact that $x^* = x'$ holds with probability negligibly close to $\varepsilon$. Thus, we can invoke  q-CCA2-security of $\pke$, the probability that $(x',w') \in \rel(\lang_{\ell})$ is still negligibly close to $\varepsilon$.  \\

\noindent But note that $\hybrid_3$ corresponds to the simulated experiment and thus we just showed that the probability that we can recover $w'$ such that $(x',w') \in \rel(\lang_{\ell})$ is negligibly close to $\varepsilon$. 

\end{proof}

\noindent The primitives in the above proposition can be instantiated from sub-exponential QLWE by starting with  existing LWE-based constructions of the above primitive and suitably setting the parameters of the underlying LWE assumption. We state the following propositions without proof. 

\begin{proposition}[\cite{PS19}]
Assuming $\ell$-sub-exponential QLWE (Section~\ref{sec:prelims:lwe}), there exists a $\ell$-sub-exponential qNIZK for NP.
\end{proposition}

\begin{remark}
To be precise, the work of~\cite{PS19} constructs a NIZK system satisfying adaptive multi-theorem zero-knowledge and non-adaptive soundness. However, non-adaptive soundness implies adaptive soundness using complexity leveraging; the reduction incurs a security loss of $2^{\ell}$. 
\end{remark}

\begin{proposition}[~\cite{PW11}]
Assuming $\ell$-sub-exponential QLWE (Section~\ref{sec:prelims:lwe}), there exists a $\ell$-sub-exponential q-CCA2-secure PKE scheme. 
\end{proposition}

\end{proof}

\section{Proof of Proposition~\ref{prop:quan:unlearn}}
\label{sec:unlearnable}
\noindent We start with the following claim. 

\begin{claim} \label{clm:unlearn}
For all QPT $\cA$ with oracle access to $C_{a,b,r,\pk,\cO}(\cdot)$ (where the adversary is allowed to make superposition queries), we have
$$\underset{(a,b,r,\pk,\cO) \leftarrow \distrc}{\prob}\left[\sk \leftarrow \cA^{C_{a,b,r,\pk,\cO}} \left(1^\secparam \right) \right] \leq \negl(\secparam)$$
\end{claim}
\begin{proof} 
\noindent Towards proving this, we make some simplifying assumptions; this is only for simplicity of exposition and they are without loss of generality.\\

\noindent {\em Simplifying Assumptions.} Consider the following oracle $O_{a,b,r,\pk,\cO}$:
$$ O_{a,b,r,\pk,\cO} \ket{x}\ket{z} = \left\{  \begin{array}{lcl} \ket{x}\ket{z \oplus C_{a,b,r,\pk,\cO}(x)}, & \text{if }x \neq 0 \cdots0\\
& \\
\ket{x}\ket{z}, & \text{if }x=0\cdots0
\end{array} \right. $$
The first simplifying assumption is that the adversary $\cA$ is given access to the oracle $O_{a,b,r,\pk,\cO}$, instead of the oracle $C_{a,b,r,\pk,\cO}$. In addition, $\cA$ is given $\enc(\pk,a;r)$, $\pk$, and $\cO$ as auxiliary input. 
\par The second simplifying assumption is that $\cA$ is given some auxiliary state $\ket{\xi}$, and that it only performs computational basis measurements right before outputting (i.e. $\cA$ works with purified states). \\

\noindent {\em Overview.} Our proof follows the adversary method proof technique~\cite{ambainis2002quantum}. We prove this by induction on the number of queries. We show that after every query the following invariant is maintained: the state of the adversary has little amplitude over $a$. More precisely, we argue that the state of the adversary after the $t^{th}$ query, is neglibly close to the state just before the $t^{th}$ query, denoted by $\ket{\psi^t}$. After the adversary obtains the response to the $t^{th}$ query, it then applies a unitary operation to obtain the state $\ket{\psi^{t+1}}$, which is the state of the adversary just before the $(t+1)^{th}$ query. This observation implies that there is another state $\ket{\phi^{t+1}}$ that: (a) is close to $\ket{\psi^{t+1}}$ (here, we use the inductive hypothesis that $\ket{\phi^{t}}$ is close to $\ket{\psi^{t}}$) and, (b) can be prepared without querying the oracle at all. \\

\noindent Let $U_i$ denote the unitary that $\cA$ performs right before its $i^{th}$ query, and let $\bfA,\sfX,$ and $\sfY$ denote the private, oracle input, and oracle output registers of $\cA$, respectively. 
  
  Just before the $t^{th}$ query, we denote the state of the adversary to be:

$$\ket{\psi^t}:=U_t O \cdots O U_1 \ket{\psi^0}$$
where $\ket{\psi^0} = \ket{\xi}\ket{\enc(\pk,a;r),\cO,\pk}\ket{0\cdots0}_{\sfX}\ket{0\cdots0}_{\sfY}$ is the initial state of the adversary. Let $\Pi_a = (\ket{a}\bra{a})_\sfX \otimes I_{\sfY,\sfA}$.\\ 

 Note that any $\cA$ that outputs $\sk$ with non-negligible probability can also query the oracle on a state $\ket{\psi}$ satisfying $\tr[\Pi_{a}\ket{\psi}\bra{\psi}] \geq \nonnegl(\secparam)$ with non-negligible probability. Since $\cA$ outputs $\sk$ with non-negligible probability, it can decrypt $\enc(\pk,a;r)$, to find $a$ and then query the oracle on $a$. In other words, if there is an adversary $\cA$ that finds $\sk$ with non-negligible probability, then there is an adversary that at some point queries the oracle with a state $\ket{\psi}$ $\tr[\Pi_{a}\ket{\psi}\bra{\psi}] \geq \nonnegl(\secparam)$ also with non-negligible probability. \\

\noindent Hence, it suffices to show that for any adversary $\cA$ that makes at most $T=\poly(\secparam)$ queries to the oracle, it holds that
$$\prob[\forall j,  \tr[\Pi_{a} \ket{\psi^j}\bra{\psi^j}] \leq \negl(\secparam)] \geq 1-\negl(\secparam).$$

\noindent This would then imply that $\cA$ cannot output $\sk$ with non-negligible probability, thus proving Claim~\ref{clm:unlearn}.

Towards proving the above statement, consider the following claim that states that if $\cA$ has not queried  the oracle with a state that has large overlap with $\Pi_{a}$, then its next query will also not have large overlap with $\Pi_{a}$. 
\begin{claim}[No Good Progress] Let $T$ be any polynomial in $\secparam$.
Suppose for all $t<T$, the following holds:
$$\tr \left[ \Pi_{a} \ket{\psi^t}\bra{\psi^t} \right] \leq \negl(\secparam) $$
\noindent Then, $\prob[\tr[\Pi_{a}\ket{\psi^T}\bra{\psi^T}]\leq \negl(\secparam)] \geq 1-\negl(\secparam)$.
\end{claim}
\begin{proof}
For all $j$, let $\ket{\phi^j}=U_jU_{j-1}...U_1\ket{\psi^0}$. 

We will proceed by induction on $T$. Our base case is $T=1$ (just before the first query to the oracle); that is, $\ket{\psi^1}=\ket{\phi^1}$. Suppose the following holds: $$\prob[\tr[\Pi_{a}\ket{\psi^1}\bra{\psi^1}]\geq \nonnegl(\secparam)]\geq \nonnegl(\secparam).$$

The first step is to argue that if $\cA$ can prepare a state such that $\tr[\Pi_a \ket{\psi_1}\bra{\psi_1}] \geq \nonnegl(\secparam)$ given $\enc(\pk,a;r), \pk$ and $\cO \leftarrow \lobf(\newbfckt \left[ \qfhe.\dec(\sk,\cdot),b,(\sk|r) \right])$ without querying the oracle, then it can also prepare a state with large overlap with $\Pi_a$ if its given the simulator of the lockable obfuscation instead. We will use $\cA$ (specifically, the first unitary that $\cA$ applies, $U_1$) to construct an adversary $\cB$ that breaks the security of lockable obfuscation. $\cB$ is given $a$, $\enc(\pk,a;r)$, $\pk$ and $\cO$ as well as auxiliary state $\ket{\xi}$.  It the prepares $\ket{\psi_{1,\cO}}= U_1 \ket{\xi}\ket{\enc(\pk,a;r),\cO,\pk}\allowbreak \ket{0\cdots0}_{\sfX}\ket{0\cdots0}_{\sfY}$, and measures in computational basis. If the output of this measurement is $a$, it outputs $1$; otherwise, it outputs $0$.

Consider the following hybrids.\\

$\hyb_1$ In this hybrid, $\cB$ is given $a$, $\enc(\pk,a;r),\pk, \cO\leftarrow \lobf(\newbfckt[\qfhe.\dec(\sk,\cdot),b,(\sk|r)])$. \\

$\hyb_2$: In this hybrid, $\cB$ is given $a$, $\enc(\pk,a;r),\pk$ and $\cO \leftarrow \simulator(1^\secparam)$.\\

Since the lock $b$ is chosen uniformly at random, by security of lockable obfuscation, the probability that $\cB$ outputs $1$ in the first hybrid is negligibly close to the probability that $\cB$ outputs $1$ in the second hybrid. This means that if $\tr[\Pi_a \ket{\psi_{1,\cO}}\bra{\psi_{1,\cO}}] \geq \nonnegl(\secparam)$ with non-negligible probability when $\cO \leftarrow \lobf(\newbfckt[\qfhe.\dec(\sk,\cdot),b,(\sk|r)])$, then this still holds when $\cO \leftarrow \simulator(1^\secparam)$. \par But we show that if $\tr[\Pi_a \ket{\psi_{1,\cO}}\bra{\psi_{1,\cO}}] \geq \nonnegl(\secparam)$, when $\cO$ is generated as $\cO \leftarrow \simulator(1^\secparam)$, then QFHE is insecure. 
\begin{itemize}
    \item Consider the following QFHE adversary who is given $\ket{\xi}$ as auxiliary information, and chooses two messages $m_0=0\cdots0$ and $m_1=a$, where $a$ is sampled uniformly at random from $\{0,1\}^{\secparam}$. It sends $(m_0,m_1)$ to the challenger. 
    \item The challenger of QFHE then generates $\ct_d = \enc(\pk,m_d)$, for some bit $d \in \{0,1\}$ and sends it to the QFHE adversary.
    \item The QFHE adversary computes $\cO \leftarrow \simulator(1^\secparam)$.
    \item The QFHE adversary then prepares the state $\ket{\psi_d} = U_1 \left( \ket{\xi}\ket{\ct_d, \cO, \pk}\ket{0\cdots0}_{\sfX}\ket{0\cdots0}_{\sfY}\right)$ and measures register $\sfX$ in the computational basis.
\end{itemize}

\noindent If $d=0$, the probability that the QFHE adversary obtains $a$ as outcome is negligible; since $a$ is independent of $U_1$, $\pk$, $\ket{\xi}$, and $\cO$. But from our hypothesis ($\prob[\tr[\Pi_{a}\ket{\psi^1}\bra{\psi^1}]\geq \nonnegl(\secparam)]\geq \nonnegl(\secparam)$), the probability that the QFHE adversary obtains $a$ as outcome is non-negligible for the case when $d=1$. This contradicts the security of QFHE as the adversary can use $a$ to distinguish between these two cases. 

\noindent To prove the induction hypothesis, suppose that for all $t<T$, the following two conditions hold:
\begin{enumerate}
    \item $\tr[\Pi_{a} \ket{\psi^t}\bra{\psi^t}] \leq \negl(\secparam)$
    \item $|\braket{\phi^t}{\psi^t}|= 1-\delta_t$ 
\end{enumerate}
for some negligible $\delta_1,...,\delta_{T-1}$. We can write
$$|\braket{\phi^T}{\psi^T}| = |\bra{\phi^{T-1}}O\ket{\psi^{T-1}}|$$

By hypothesis (2) above, we have $\ket{\phi^{T-1}}=(1-\delta_{T-1}) e^{i \alpha}\ket{\psi^{T-1}}+\sqrt{2\delta_{T-1}-\delta_{T-1}^2}\ket{\widetilde{\psi}^{T-1}}$, here $\alpha$ is some phase, and $\ket{\widetilde{\psi}^{T-1}}$ is some state orthogonal to $\ket{\psi^{T-1}}$. Then
\begin{align*}
|\braket{\phi^T}{\psi^T}| &= |(1-\delta_{T-1}) e^{i \alpha}\bra{\psi^{T-1}}O\ket{\psi^{T-1}} + \sqrt{2\delta_{T-1}-\delta_{T-1}^2} \bra{\widetilde{\psi}^{T-1}}O\ket{\psi^{T-1}}| \\
&\geq |(1-\delta_{T-1})e^{i\alpha} \bra{\psi^{T-1}}O\ket{\psi^{T-1}}| - \sqrt{2\delta_{T-1}-\delta_{T-1}^2}\\
&\geq (1-\delta_{T-1}) |\bra{\psi^{T-1}}O\ket{\psi^{T-1}}| - \sqrt{2\delta_{T-1}-\delta_{T-1}^2}
\end{align*}

\noindent By hypothesis (1) above, and since the oracle acts non-trivially only on $a$, we have 
$|\bra{\psi^{T-1}}O\ket{\psi^{T-1}}| \geq 1-\negl(\secparam)$, which gives us
$$|\braket{\phi^T}{\psi^T}| \geq 1-\negl(\secparam). $$

Now we want to show that $\tr[\Pi_{a} \ket{\psi^{T}}\bra{\psi^{T}}] \leq \negl(\secparam)$. This follows from the security of lockable obfuscation and QFHE similarly to $T=1$ case.  Since $|\braket{\phi^T}{\psi^T}|\geq 1-\negl(\secparam)$, we have that 
$$\tr[\Pi_{a} \ket{\phi^{T}}\bra{\phi^{T}}] \leq \negl(\secparam) \implies \tr[\Pi_{a} \ket{\psi^{T}}\bra{\psi^{T}}] \leq \negl(\secparam).$$ 

\noindent From a similar argument to the $T=1$ case but using $U_T U_{T-1} \cdots U_1$ instead of just $U_1$, we have that  $\prob[\tr[\Pi_{a} \ket{\phi^{T}}\bra{\phi^{T}}]\leq \negl(\secparam)]\geq 1-\negl(\secparam)$.
\end{proof}

\noindent Let $E_i$ denote the event that $\tr[\Pi_{a} \ket{\psi^i}\bra{\psi^i}] \leq \negl(\secparam)$. Let $p_T$ be the probability that $\tr[\Pi_{a} \ket{\psi^t}\bra{\psi^t}] \leq \negl(\secparam)$ for all the queries $t\leq T$. Using the previous claim, we have that

\begin{align*}
    p_T &= \overset{T}{\underset{t=1}{\prod}}\prob[E_t|\forall j<t, E_j] \\
    &\geq (1-\negl(\secparam))^T \\
    &\geq (1-T \cdot  \negl(\secparam))
\end{align*}

\end{proof}

Suppose that there is a QPT $\cB$ that can learn $\cktclass$ with respect to $\distrc$ with non-negligible probability $\delta$. In other words, for all inputs $x$, 

$$\prob \left[ U(\rho,x)=C_{a,b,r,\pk,\cO}(x) : \substack{C_{a,b,r,\pk,\cO}\leftarrow \distrc \\ (U,\rho) \leftarrow \cB^{C_{a,b,r,\pk,\cO}}(1^\secparam)} \right] = \delta $$

\noindent We use $\cB^{C_{a,b,r,\pk,\cO}}$ to construct a QPT $\cA^{C_{a,b,r,\pk,\cO}}$ that can find $\sk$ with probability neglibly close to $\delta$, contradicting Claim~\ref{clm:unlearn}.  To do this, $\cA$ first prepares $(U,\rho) \leftarrow \cB^{C_{a,b,r,\pk,\cO}}(1^\secparam)$. Then, $\cA^{C_{a,b,r,\pk,\cO}}$ queries the oracle on input $0\cdots0$, obtaining $\ct_1 = \qfhe.\enc(\pk,a;r)$ along with $\pk$ and $\cO=\lobf(\newbfckt[\qfhe.\dec(\sk,\cdot),b,(\sk|r)])$. Finally, it homomorphically computes $\ct_2 \leftarrow \qfhe.\eval(U(\rho,\cdot), \ct_1)$. Then it computes $\sk'|r' = \cO(\ct_2)$, and outputs $\sk'$.
\par By the correctness of the $\qfhe$ and because $U(\rho,a)=b$ holds with probability $\delta$, we have that  $\qfhe.\dec_{\sk}(\ct_2)=b$ with probability negligibly close to $\delta$. By correctness of lockable obfuscation $\cO(\ct_2)$ will output the right message $\sk$. This means that output of $\cA$ is $\sk$ with probability negligibly close to $\delta$.
\section{Proof of Lemma~\ref{lem:mainconstruction}}
\label{sec:mainproof}
\noindent To prove the lemma, it helps to formulate a probabilistic process. For any QPT adversary $\cA$, define the following event. \\

\noindent \underline{$\realevent(1^{\secparam})$}: 
\begin{itemize}
    \item $\crs \leftarrow \setup\left( 1^{\secparam} \right)$, 
    \item $\key \leftarrow \gen(\crs)$,
    \item $C \leftarrow \distrc(\secparam)$,
    \item $\left(\rho_C=\left( \ket{A}\bra{A},\widetilde{g},\widetilde{g_{\bot}},\widetilde{C},\pi \right)\right)  \leftarrow \lessor\left(\key,C \right)$ 
    \item $\rho^*=\left( \widetilde{C}^{(1)},\widetilde{g}^{(1)},\widetilde{g_{\bot}}^{(1)},\pi^{(1)},\widetilde{C}^{(2)},\widetilde{g}^{(2)},\widetilde{g_{\bot}}^{(2)},\pi^{(2)},\sigma^*  \right) \leftarrow \adversary\left(\crs, \rho_C \right)$\\
    {\em That is, $\adversary$ outputs two copies; the classical part in the first copy is $\left( \widetilde{C}^{(1)},\widetilde{g}^{(1)},\widetilde{g_{\bot}}^{(1)},\pi^{(1)} \right)$ and the classical part in the second copy is $\left( \widetilde{C}^{(2)},\widetilde{g}^{(2)},\widetilde{g_{\bot}}^{(2)},\pi^{(2)} \right)$. Moreover, it outputs a single density matrix $\sigma^*$ associated with two registers $\Reg_1$ and $\Reg_2$; the state in $\Reg_1$ is associated with the first copy and the state in $\Reg_2$ is associated with the second.   }
    \item $ \sigma^*_1 = \tr_2[\sigma^*]$
    \item Set $\rho_C^{(1)} = \left(\sigma^*_1, \widetilde{C}^{(1)},\widetilde{g}^{(1)},\widetilde{g_{\bot}}^{(1)},\pi^{(1)} \right)$ and $\rho_C^{(2)} = \left( \Pi_2( \sigma^*),\widetilde{C}^{(2)},\widetilde{g}^{(2)},\widetilde{g_{\bot}}^{(2)},\pi^{(2)} \right)$ where:
    $$\Pi_2(\sigma^*) = \frac{\tr_1 \left[(\Pi_{(\widetilde{g}^{(1)},{\widetilde{g_\perp}^{(1)}})}\otimes I) \sigma^*\right]}{\tr \left[(\Pi_{(\widetilde{g}^{(1)},{\widetilde{g_\perp}^{(1)}})}\otimes I) \sigma^*\right]} $$
    and where $\Pi_{(\widetilde{g}^{(1)},{\widetilde{g_\perp}^{(1)}})}$ is the projection onto the subspace obfuscated by $(\widetilde{g}^{(1)}, \widetilde{g_\perp}^{(1)})$. In other words, $\Pi_2(\sigma^*)$ is the quantum state on register 2 conditioned on $\run$ not outputting $\bot$ when applied to register $1$.
\end{itemize}

\noindent Note that to prove this lemma, it suffices to prove the following: 

 $$\underset{}{\prob}\left[ \forall x, \left( \substack{ \prob\left[ (\run(\crs,x,\sigma_1^*) = C(x)\right] \geq \beta \\ \ \\ \bigwedge \\ \ \\  \forall x', \prob \left[\run(\crs, x', \cE_{x}^{(2)}(\sigma^*)) = C(x')  \right] \geq \beta } \right)\ :\ \realevent\left( 1^{\secparam} \right) \right] \leq \gamma.$$
 
 \noindent Note that for all $x$, $\cE_x^{(2)}(\sigma^*) = \Pi_2(\sigma^*)$, since the only quantum operation that $\run$ performs is projecting the first register of $\sigma^*$ onto the subspace corresponding to $\widetilde{g}^{(1)}$.
\noindent Consider the following: 
\begin{itemize}
    \item In the first case, we have the classical parts of the two pirated copies to be the same as the classical part of the original copy. Define $\gamma_1$ as follows: 
    $$\prob \left[ \substack{ \forall x, \prob\left[ (\run(\crs,x,\sigma_1^*) = C(x)\right] \geq \beta \\ \ \\ \bigwedge \\ \ \\  \forall x', \prob \left[\run(\crs, x', \Pi_2(\sigma^*)) = C(x')  \right] \geq \beta \\ \ \\ \bigwedge \\ \ \\  \left(\widetilde{C},\widetilde{g},\widetilde{g
_{\bot}}\right) =  \left(\widetilde{C}^{(1)},\widetilde{g}^{(1)},\widetilde{g_{\bot}}^{(1)} \right) \\ \ \\ \bigwedge \\ \ \\ \left(\widetilde{C},\widetilde{g},\widetilde{g
_{\bot}}\right) =  \left(\widetilde{C}^{(2)},\widetilde{g}^{(2)},\widetilde{g_{\bot}}^{(2)} \right) }\  :\ \realevent\left( 1^{\secparam} \right) \right] = \gamma_1$$
    \item The other possible case is the case where at least one of the copies $(\widetilde{C}^{(1)},\widetilde{g}^{(1)},\widetilde{g_{\bot}}^{(1)})$ or $(\widetilde{C}^{(2)},\widetilde{g}^{(2)},\widetilde{g_{\bot}}^{(2)} )$ is not equal to the corresponding resgisters of the original copy. Define $\gamma_2$ as follows: 
     $$\prob \left[ \substack{ \forall x, \prob\left[ (\run(\crs,x,\sigma_1^*) = C(x)\right] \geq \beta \\ \ \\ \bigwedge \\ \ \\  \forall x', \prob \left[\run(\crs, x', \Pi_2(\sigma^*)) = C(x')  \right] \geq \beta \\ \ \\ \bigwedge \\ \ \\ \exists i,\  \left(\widetilde{C},\widetilde{g},\widetilde{g
_{\bot}}\right) \neq  \left(\widetilde{C}^{(i)},\widetilde{g}^{(i)},\widetilde{g_{\bot}}^{(i)}\right) }\ :\ \realevent\left( 1^{\secparam} \right) \right] = \gamma_2$$
 Henceforth, without loss of generality, we will assume that the second copy is not the same.
\end{itemize}
\noindent Note that $\gamma=\gamma_1+\gamma_2$. In the next two propositions, we prove that both $\gamma_1$ and $\gamma_2$ are negligible which will complete the proof of the lemma.   

\begin{proposition}
$\gamma_1\leq \negl(\secparam)$
\end{proposition}
\begin{proof}

The run algorithm first projects $\sigma^*$ into $\ket{A}^{\otimes 2}$, and outputs $\bot$ if $\sigma^*$ is not $(\ket{A}\bra{A})^{\otimes 2}$.
Suppose that $\bra{A} \sigma_1^* \ket{A}$ is negligible, then $\run$ will output $\bot$ on the first register with probability negligibly close to $1$, and we would have $\gamma_1$ negligible as desired. 

On the contrary, suppose that $\bra{A} \sigma_1^* \ket{A}$ is non-negligible, and we have that $$\Pi_2(\sigma^*)=\frac{\tr_1\left[(\ket{A}\bra{A}\otimes I)\sigma^*\right]}{\tr\left[(\ket{A}\bra{A}\otimes I)\sigma^*\right]}$$ 
i.e. the state in the second register after $\run$ succesfully projects $\sigma_1^*$ onto $\ket{A}\bra{A}$.

We will prove the following claim, which implies that at least one of the two copies will output $\bot$ under $\run$ with probability neglibly close to $1$.
\begin{claim}\label{clm:notsame}
$\bra{A} \Pi_2(\sigma^*) \ket{A} \leq \negl(\secparam)$
\end{claim}
\begin{proof}
Suppose not. We then show how to use $\cA$ to violate the quantum no-cloning theorem.  Specifically, we use the claim by Zhandry \cite{Zha19} who showed that there does not exist any QPT algorithm that on input $(\ket{A},\widetilde{g}:=\shO(A),\widetilde{g_\perp}:=\shO(A^\perp))$ can prepare the state $\ket{A}^{\otimes 2}$ with non-negligible probability. But in order to show this, we first need to remove information about $A$ being used elsewhere in the state; in particular, we need to remove the usage of $A$ in the generation of the seNIZK proof. \\

\noindent Consider the following adversary $\cB'$. It runs $\cA$ and then projects the output of $\cA$ onto $\left( \ket{A}\bra{A} \right)^{\otimes 2}$; the output of the projection is the output of $\cB'$. 

\paragraph{$\cB'(C)$:}
\begin{enumerate}
    \item Compute $\crs,\key$ as in the construction.
    \item Compute $\rho_C \leftarrow \lessor(\key,C)$. Let $\rho_C=\left( \ket{A}\bra{A},\widetilde{g},\widetilde{g_{\bot}},\widetilde{C},\pi \right)$. 
    \item Compute $\cA(\crs,\rho_C)$ to obtain $\left( \widetilde{C}^{(1)},\widetilde{g}^{(1)},\widetilde{g_{\bot}}^{(1)},\pi^{(1)},\widetilde{C}^{(2)},\widetilde{g}^{(2)},\widetilde{g_{\bot}}^{(2)},\pi^{(2)},\sigma^*\right)$.
    \item Then, project $\sigma^*$ onto $(\ket{A}\bra{A})^{\otimes 2}$ by using $\widetilde{g}$ and $\widetilde{g_\perp}$. Let $m$ be the outcome of this projection, so $m=1$ means that the post measured state is $(\ket{A}\bra{A})^{\otimes 2}$. 
    \item Output $m$.
\end{enumerate}

The projection $(\ket{A}\bra{A})^{\otimes 2}$ can be done by first projecting the first register onto $\ket{A}\bra{A}$ and then the second register. Conditioned on the first register not outputting $\bot$, means that $\sigma^*_1$ is succesfully projected onto $\ket{A}\bra{A}$. By our assumption that $\bra{A} \sigma^*_1 \ket{A}$ is non-negligible, this will happen with non-negligible probability.  Conditioned on this being the case, if $\bra{A} \Pi_2( \sigma^*) \ket{A}$ is non-negligible, then projecting the second register onto $\ket{A}\bra{A}$ will also succeed with non-negligible probability.  This means that $m=1$ with non-negligible probability.\\

\noindent Consider the following adversary $\cB$; later we will use $\cB$ to violate the no-cloning theorem. It follows the same steps as $\cB'$ except in preparing the states $\ket{A}$ and computing obfuscated circuits $\widetilde{g}$, $\widetilde{g_{\bot}}$; it gets these quantities as input. Moreover, it simulates the proof $\pi$ instead of computing the proof using the honest prover. This is because unlike $\cB'$, the adversary $\cB$ does not have the randomness used in computing $\widetilde{g}$ and $\widetilde{g_{\bot}}$ and hence cannot compute the proof $\pi$ honestly. 

\paragraph{$\cB(\ket{A},\widetilde{g},\widetilde{g_\perp})$:}
\begin{enumerate}
    \item Sample randomness $r_o$ and compute $\tilde{C} \leftarrow \qiho.\obf(C;r_o)$.
    \item Let $\fksetup$ and $\simr$ be associated with the simulation-extractability propety of $\ssnizk$. Compute $(\widetilde{\crs},\td) \leftarrow \fksetup(1^{\secparam})$. 
    \item Compute $(\pi,\st) \leftarrow \simr \left( \widetilde{\crs},\td,\left(\widetilde{g},\widetilde{g_\perp},\widetilde{C} \right) \right) $
    \item Let $\rho_C= (\ket{A}\bra{A},\widetilde{g}, \widetilde{g_\perp}, \widetilde{C},\pi)$
    \item Run $\cA(\widetilde{\crs},\rho_C)$ to obtain $\left(\widetilde{C}^{(1)},\widetilde{g}^{(1)},\widetilde{g_{\bot}}^{(1)},\pi^{(1)},\widetilde{C}^{(2)},\widetilde{g}^{(2)},\widetilde{g_{\bot}}^{(2)},\pi^{(2)},\widetilde{\sigma^*}\right)$.
    \item Then, project $\widetilde{\sigma^*}$ onto $(\ket{A}\bra{A})^{\otimes 2}$ by using $\widetilde{g}$ and $\widetilde{g_\perp}$. Let $m$ be the outcome of this projection, so $m=1$ means that the post measured state is $(\ket{A}\bra{A})^{\otimes 2}$.
    \item Output $m$.
\end{enumerate}

\noindent Note that from the q-simulation-extractability property\footnote{We don't need the full-fledged capability of q-simulation-extractability to argue this part; we only need q-zero-knowledge property which is implied by q-simulation-extractability.} of $\ssnizk$, it follows that the probability that $\cB$ outputs 1 is negligibly close to the probability that $\cB'$ outputs $1$ because everything else is sampled from the same distribution.  This implies that $\cB$ on input $(\ket{A},\widetilde{g},\widetilde{g_\perp})$ outputs $\ket{A}^{\otimes 2}$ with non-negligible probability, contradicting the claim by Zhandry~\cite{Zha19}. This completes the proof of the claim. \submversion{\qed}

\end{proof}

\noindent At this point, we want to show that if $\left( \widetilde{g}^{(2)},\widetilde{g}^{(2)}_\perp \right) = \left( \widetilde{g},\widetilde{g_\perp} \right)$, and $\bra{A}\Pi_2(\sigma^*)\ket{A} \leq \negl(\secparam)$, then the probability that $\run(\crs, \Pi_2(\sigma^*),x)$ evaluates $C$ correctly is negligible.

By correctness of $\shO$, we have $$\prob[\forall x \text{ } \widetilde{g}^{(2)}(x) = \mathbb{1}_A(x)] \geq 1-\negl(\secparam)$$ $$\prob[\forall x \text{ } \widetilde{g}_\perp^{(2)}(x) = \mathbb{1}_{A^\perp}(x)] \geq 1-\negl(\secparam)$$ 

This means that with probability negligibly close to $1$, the first thing that the $\run$ algorithm does on input $\rho_C^{(2)}=(\Pi_2(\sigma^*), \widetilde{g}^{(2)}, \widetilde{g}^{(2)}_\perp, \widetilde{C},\pi)$ is to measure $\{\ket{A}\bra{A},I-\ket{A}\bra{A}\}$ on $\Pi_2( \sigma^*)$. If $I-\ket{A}\bra{A}$ is obtained, then the $\run$ algorithm will output $\bot$. By Claim~\ref{clm:notsame}, the probability that this happens is neglibly close to 1. Formally, when $\widetilde{g}$ and $\widetilde{g_\perp}$ are subspace obfuscations of $A$ and $A^\perp$ respectively, the check $a=1$ and $b=1$ performed by the $\run$ algorithm is a projection onto $\ket{A}\bra{A}$. 

\begin{align*}
\prob[a=1,b=1] &= \tr[\ft^\dagger \Pi_{A^\perp} \ft \Pi_A \Pi_2( \sigma^*)] \\
&= \tr[\ket{A}\bra{A} \Pi_2( \sigma^*)] \\
&= \bra{A}\Pi_2( \sigma^*)\ket{A}\\
&\leq \negl(\secparam)
\end{align*}
where $\Pi_A = \underset{a \in A}{\sum} \ket{a}\bra{a}$ and $\Pi_{A^\perp} = \underset{a \in A^\perp}{\sum} \ket{a}\bra{a}$.
From this, we have that $\prob[\run(\crs,\rho_C^{(2)},x)=\bot] \geq 1-\negl(\secparam)$, and we have
$\prob[\run(\crs,\rho_C^{(2)},x)=C(x)] \leq \negl(\secparam)$ with probability neglibly close to $1$.

This finishes our proof that if $\beta$ is non-negligible, then $\gamma_1\leq \negl(\secparam)$.
\end{proof}
\begin{proposition}
$\gamma_2\leq\negl(\secparam)$.  
\end{proposition}
\begin{proof}
We consider the following hybrid process. \\

\noindent \underline{$\hybprocess_1(1^{\secparam})$}: 
\begin{itemize}
     \item $\left( \widetilde{\crs},\td \right) \leftarrow \fksetup\left( 1^{\secparam} \right)$,
    \item $\key \leftarrow \gen(\crs)$,
    \item $C \leftarrow \distrc(\secparam)$,
    \item Sample a random $\frac{\secparam}{2}$-dimensionall sub-space $A \subset \Zq^{\secparam}$. Prepare the state $\ket{A}=\frac{1}{\sqrt{q^{\secparam/2}}} \sum_{a \in A} \ket{a}$.
    \item Compute $\widetilde{g} \leftarrow \shO \left(A;r_A\right)$, 
    \item Compute $\widetilde{g_{\bot}} \leftarrow \shO \left(A^{\perp};r_{A^{\perp}}\right)$,
    \item Compute $\widetilde{C}\leftarrow \iho.\obf(C;r_o)$
    \item $(\pi,\st) \leftarrow \simr_1 \left(\crs, \td,\left(\widetilde{g}, \widetilde{g_{\bot}},\widetilde{C} \right) \right) $
    \item Set $\rho_C=\left( \ket{A}\bra{A},\widetilde{g},\widetilde{g_{\bot}},\widetilde{C},\pi \right)$. 
    \item $\left( \widetilde{C}^{(1)},\widetilde{g}^{(1)},\widetilde{g_{\bot}}^{(1)},\pi^{(1)},\widetilde{C}^{(2)},\widetilde{g}^{(2)},\widetilde{g_{\bot}}^{(2)},\pi^{(2)},\sigma^*  \right) \leftarrow \adversary\left(\crs, \rho_C \right)$
    \item Set $ \sigma^*_1 = \tr_2[\sigma^*]$
    \item Set $\rho_C^{(1)} = \left( \sigma^*_1,\widetilde{C}^{(1)},\widetilde{g}^{(1)},\widetilde{g_{\bot}}^{(1)},\pi^{(1)} \right)$ and $ \rho_C^{(2)} = \left( \Pi_2( \sigma^*),\widetilde{C}^{(2)},\widetilde{g}^{(2)},\widetilde{g_{\bot}}^{(2)},\pi^{(2)} \right)$
    \item $\left(A^*,r_o^*,r_A^*,r_{A^{\perp}}^*,C^*,x^* \right) \leftarrow \simr_2\left(\st,\left(\widetilde{g}^{(2)},\widetilde{g_{\bot}}^{(2)},\widetilde{C}^{(2)} \right),\ \pi^{(2)} \right)$. 
\end{itemize}

\noindent The proof of the following claim follows from the q-simulation-extractactability property of $\ssnizk$. 

\begin{claim}
\label{clm:senizk}

Assuming that $\ssnizk$ satisfies q-simulation extractability property secure against QPT adversaries running in time $2^{n}$, we have: 
$$\prob \left[ \substack{\left(\left( \widetilde{g}^{(2)},\widetilde{g_{\bot}}^{(2)},\widetilde{C}^{(2)} \right),\ \left(A^*,r_o^*,r_A^*,r_{A^{\perp}}^*,C^*,x^* \right)\right) \in \rel(L)  \\ \ \\ \bigwedge\\ \ \\  \forall x',\  \prob\left[\run\left(\crs,\rho^{(2)},x' \right) = C(x')  \right]  \geq \beta \\ \ \\ \bigwedge \\ \ \\ \exists i,\ \left(\widetilde{C},\widetilde{g},\widetilde{g_{\bot}} \right)  \neq \left( \widetilde{C}^{(i)},\widetilde{g}^{(i)},\widetilde{g_{\bot}}^{(i)} \right)}\ :\ \hybprocess_1 \left(1^{\secparam} \right) \right] = \delta_1$$ 
Then, $|\delta_1-\gamma_2| \leq \negl(\secparam)$.
\end{claim}
\begin{remark}
Note that $n$ is smaller than the length of the NP instance and thus, we can invoke the sub-exponential security of the seNIZK system guaranteed in Lemma~\ref{lem:qsenizks}. 
\end{remark}
\begin{proof}[of Claim~\ref{clm:senizk}]
Consider the following $\ssnizk$ adversary $\reduction$: 
\begin{itemize}
    \item It gets as input $\crs$.
    \item It samples and computes $(C,A,\widetilde{g},\widetilde{g_{\bot}},\widetilde{C})$ as described in $\hybprocess_1(1^{\secparam})$. It sends the following instance-witness pair to the challenger of seNIZK: $$\left(\left(\widetilde{g},\widetilde{g_{\bot}},\widetilde{C}\right),\ ((A,r_o,r_A,r_{A^{\bot}},C,x) \right),$$ 
    where $r_o,r_A,r_{A^{\perp}}$ are, respectively, the randomness used to compute the obfuscated circuits  $\widetilde{g}$, $\widetilde{g_{\perp}}$ and $ \widetilde{C}$.   
    \item The challenger returns back $\pi$. 
    \item $\reduction$ then sends $\left(\ket{A},\widetilde{g},\widetilde{g_{\perp}},\widetilde{C},\pi \right)$ to $\cA$. 
    \item $\cA$ then outputs $\left( \widetilde{C}^{(1)},\widetilde{g}^{(1)},\widetilde{g_{\bot}}^{(1)},\pi^{(1)},\widetilde{C}^{(2)},\widetilde{g}^{(2)},\widetilde{g_{\bot}}^{(2)},\pi^{(2)},\sigma^*  \right)$. 
    \item Finally, $\reduction$ performs the following checks: 
    \begin{itemize}
        \item {\em Verify if the classical parts are different}: Check if $\left( \widetilde{C},\widetilde{g},\widetilde{g_{\bot}} \right) = \left( \widetilde{C}^{(2)},\widetilde{g}^{(2)},\widetilde{g_{\bot}}^{(2)} \right)$; if so output $\bot$, otherwise continue. 
        \item {\em Verify if second copy computes $C$}: If the measurement above does not output $\bot$, set $ \rho_C^{(2)} = \left( \Pi_2(\sigma^*),\widetilde{C}^{(2)},\widetilde{g}^{(2)},\widetilde{g_{\bot}}^{(2)},\pi^{(2)}\right)$. For every $x$, check if $\widetilde{C}^{(2)}(x)=C(x)$. If for any $x$, the check fails, output $\bot$. {\em // Note that this step takes time $2^{n} \cdot \poly(\secparam)$.}

    \end{itemize}
    \item Output $\left(\left(\widetilde{g}^{(2)},\widetilde{g_{\bot}}^{(2)},\widetilde{C}^{(2)} \right),\pi^{(2)} \right)$.  
\end{itemize}
\noindent Note that $\reduction$ is a valid $\ssnizk$ adversary: it produces a proof on an instance different from the instance along with a proof (either real or simulated) it received and moreover, the proof produced by $\reduction$ (conditioned on not $\bot$) is an accepting proof.
\par If $\reduction$ gets as input honest CRS and honestly generated proof $\pi$ then this corresponds to $\process_1(1^{\secparam})$ and if $\reduction$ gets as input simulated CRS and simulated proof $\pi$ then this corresponds to $\hybprocess_1(1^{\secparam})$.

Thus, from the security of q-simulation-extractable NIZKs, we have that $|\gamma_2 - \delta_1| \leq \negl(\secparam)$. 
\end{proof}

\noindent We first prove the following claim. 

\begin{claim}
Suppose the following two conditions hold: 
\begin{itemize}
    \item $\left(\left( \widetilde{g}^{(2)},\widetilde{g_{\bot}}^{(2)},\widetilde{C}^{(2)} \right),\ \left(A^*,r_o^*,r_A^*,r_{A^{\perp}}^*,C^*,x^* \right)\right) \in \rel(L)$ 
    \item $\forall x',\  \prob\left[\run\left(\crs,\rho_C^{(2)},x' \right) = C(x')  \right]  \geq \beta$
\end{itemize}
Then it holds that $C(x^*)=1$. 
\end{claim}
\begin{proof}
We first claim that $\forall x,\  \prob\left[\run\left(\crs,\rho_C^{(2)},x \right) = C(x)  \right]  \geq \beta$ implies that $\widetilde{C}^{(2)} \equiv C$, where $\equiv$ denotes functional equivalence. Suppose not. Let $x'$ be an input such that $\widetilde{C}^{(2)}(x') \neq C(x')$ then this means that $\run(\crs,\rho_C^{(2)},x')$ {\em always} outputs a value different from  $C(x')$; this follows from the description of $\run$. This further means that $\prob[ \run\left(\crs,\rho_C^{(2)},x' \right) = C(x')]=0$, contradicting the hypothesis. \par Moreover, $\left(\left( \widetilde{g}^{(2)},\widetilde{g_{\bot}}^{(2)},\widetilde{C}^{(2)} \right),\ \left(A^*,r_o^*,r_A^*,r_{A^{\perp}}^*,C^*,x^* \right)\right) \in \rel(L)$ implies that $\widetilde{C}^{(2)}=\qiho(1^{\secparam},C^*;r_{o}
^*)$ and $C^*(x^*)=1$. Furthermore, perfect correctness of $\qiho$ implies that $\widetilde{C}^{(2)} \equiv C^*$.  
\par So far we have concluded that $\widetilde{C}^{(2)} \equiv C$,  $\widetilde{C}^{(2)} \equiv C^*$ and $C^*(x^*)=1$. Combining all of them together, we have $C(x^*)=1$. \submversion{\qed}   

\end{proof}

\noindent We claim that $\delta_1$ is upper bounded by the probability that $x^*$ is an accepting input of $C$. Consider the following inequalities. 

\begin{eqnarray*}
\delta_1 & = & \prob \left[ \substack{\left(\left( \widetilde{g}^{(2)},\widetilde{g_{\bot}}^{(2)},\widetilde{C}^{(2)} \right),\ \left(A^*,r_o^*,r_A^*,r_{A^{\perp}}^*,C^*,x^* \right)\right) \in \rel(L)\\ \ \\ \bigwedge\\ \ \\ \forall x',\  \prob\left[\run\left(\crs,\rho_C^{(2)},x' \right) = C(x')  \right]  \geq \beta \\ \ \\ \bigwedge \\ \ \\ \exists i,\ \left(\widetilde{C},\widetilde{g},\widetilde{g_{\bot}} \right)  \neq \left( \widetilde{C}^{(i)},\widetilde{g}^{(i)},\widetilde{g_{\bot}}^{(i)} \right)}\ :\ \hybprocess_1 \right]\\ 
& & \\
& & \\
& = & \prob \left[ \substack{C(x^*)=1\\ \ \\ \bigwedge\\ \ \\  \exists i,\ \left(\widetilde{C},\widetilde{g},\widetilde{g_{\bot}} \right)  \neq \left( \widetilde{C}^{(i)},\widetilde{g}^{(i)},\widetilde{g_{\bot}}^{(i)} \right)}\ :\ \hybprocess_1 \right] \\ 
& & \\
& \leq & \prob \left[ C\left( x^* \right)=1\ :\ \hybprocess_1 \right]
\end{eqnarray*}
\noindent Let $\prob \left[ C\left( x^* \right)=1\ :\ \hybprocess_1 \right]=\delta_2$.

\begin{claim}
Assuming the q-input-hiding property of $\iho$, we have $\delta_2 \leq \negl(\secparam)$ 
\end{claim}
\begin{proof}
Suppose $\delta_2$ is not negligible. Then we construct a QPT adversary $\reduction$ that violates the q-input-hiding property of $\iho$, thus arriving at a contradiction. 
\par $\reduction$ now takes as input $\widetilde{C}$ (an input-hiding obfuscator of $C$), computes $\left( \widetilde{\crs},\td \right) \leftarrow \fksetup \left( 1^{\secparam} \right)$ and then computes $\rho_C=\left( \ket{A}\bra{A},\widetilde{g},\widetilde{g_{\bot}},\widetilde{C},\pi \right)$ as computed in $\hybprocess_1$. It sends $\left(\widetilde{\crs},\rho_C \right)$ to $\adversary$ who responds with $\left( \widetilde{C}^{(1)},\widetilde{g}^{(1)},\widetilde{g_{\bot}}^{(1)},\pi^{(1)},\widetilde{C}^{(2)},\widetilde{g}^{(2)},\widetilde{g_{\bot}}^{(2)},\pi^{(2)},\sigma^*  \right)$. Compute $\left(A^*,r_o^*,r_A^*,r_{A^{\perp}}^*,C^*,x^* \right)$ by generating $ \simr_2(\st,(\widetilde{g}^{(2)},\widetilde{g_{\bot}}^{(2)},\allowbreak \widetilde{C}^{(2)} ),\allowbreak \pi^{(2)} )$, where $\st$ is as defined in $\hybprocess_1$. Output $x^*$.
\par Thus, $\reduction$ violates the q-input-hiding property of $\iho$ with probability $\delta_2$ and thus $\delta_2$ has to be negligible.
\end{proof}

\noindent Combining the above observations, we have that $\gamma_2 \leq \negl(\secparam)$ for some negligible function $\negl$. This completes the proof.

\end{proof}
}



\end{document}